%

\documentclass[11pt]{article}

\usepackage{graphicx,amsmath,amsbsy,amssymb,amsthm,algorithmic,algorithm,url,subcaption}
\usepackage[usenames]{color}
\usepackage{cite}
\usepackage[colorlinks = false]{hyperref}

\hoffset=0in \voffset=0in \evensidemargin=0in \oddsidemargin=0in
\textwidth=6.5in \topmargin=0in \headheight=0.0in \headsep=0.0in
\textheight=9in

\newtheorem{thm}{Theorem}[section] 
\newtheorem{lemma}{Lemma}[section] 

\newtheorem{prop}{Proposition}[section]

\newcommand{\sign}{\mathrm{sign}}
\newcommand{\Grid}{\mathrm{Grid}}
\newcommand{\near}{\mathrm{near}}

\newcommand{\far}{\mathrm{far}}

\newcommand{\diag}{\mathrm{diag}}
\newcommand{\Toep}{\mathrm{Toep}}
\newcommand{\trace}{\mathrm{trace}}

\newcommand{\real}{\mathbb{R}}
\newcommand{\complex}{\mathbb{C}}

\newcommand{\cF}{\mathcal{F}}
\newcommand{\cA}{\mathcal{A}}

\newcommand{\cB}{\mathcal{B}}

\newcommand{\cE}{\mathcal{E}}

\newcommand{\bB}{{\boldsymbol{B}}}
\newcommand{\bp}{{\boldsymbol{p}}}
\newcommand{\bQ}{{\boldsymbol{Q}}}
\newcommand{\bD}{\boldsymbol{D}}
\newcommand{\bW}{\boldsymbol{W}}
\newcommand{\bE}{\boldsymbol{E}}

\newcommand{\bU}{\boldsymbol{U}}

\newcommand{\bT}{\boldsymbol{T}}

\newcommand{\bL}{\boldsymbol{L}}

\newcommand{\bx}{\boldsymbol{x}}

\newcommand{\balpha}{\boldsymbol{\alpha}}
\newcommand{\bbeta}{\boldsymbol{\beta}}
\newcommand{\bSigma}{\boldsymbol{\Sigma}}
\newcommand{\bA}{\boldsymbol{A}}
\newcommand{\by}{\boldsymbol{y}}
\newcommand{\bY}{\boldsymbol{Y}}
\newcommand{\bX}{\boldsymbol{X}}
\newcommand{\bK}{\boldsymbol{K}}
\newcommand{\bR}{\boldsymbol{R}}

\newcommand{\bZ}{\boldsymbol{Z}}
\newcommand{\bV}{\boldsymbol{V}}
\newcommand{\be}{\boldsymbol{e}}
\newcommand{\ba}{\boldsymbol{a}}
\newcommand{\bb}{\boldsymbol{b}}

\newcommand{\bg}{\boldsymbol{g}}
\newcommand{\bh}{\boldsymbol{h}}
\newcommand{\bq}{\boldsymbol{q}}

\newcommand{\bz}{\boldsymbol{z}}

\newcommand{\bv}{\boldsymbol{v}}
\newcommand{\bu}{\boldsymbol{u}}

\newcommand{\blambda}{\boldsymbol{\lambda}}
\newcommand{\bI}{\boldsymbol{I}}
\newcommand{\bzero}{\boldsymbol{0}}

\newcommand{\norm}[2][2]{\left\| #2 \right\|_{#1}}

\newcommand{\expval}{\mathbb{E}}

\begin{document}
\title{{Super-Resolution of Complex Exponentials from\\ Modulations with Unknown Waveforms}}

\author{Dehui~Yang, Gongguo~Tang, and~Michael~B.~Wakin\thanks{The authors are with the
Department of Electrical Engineering and Computer Science,
Colorado School of Mines, Golden, CO 80401 USA (e-mail: \url{dyang@mines.edu}; \url{gtang@mines.edu}; \url{mwakin@mines.edu}). D. Yang and M. B. Wakin were partially supported by NSF CAREER grant CCF-1149225 and NSF grants CCF-1409258, CCF-1409261. G. Tang was supported by NSF grant CCF-1464205.}
\thanks{Parts of the results in this paper were presented at the 41st IEEE International
Conference on Acoustics, Speech, and Signal Processing (ICASSP), Shanghai, China, March 2016 \cite{yang2016non}.}
}
\date{January 2016; Last Revised August 2016}

\maketitle

\begin{abstract}
Super-resolution is generally referred to as the task of recovering fine details from coarse information. Motivated by applications such as single-molecule imaging, radar imaging, etc., we consider parameter estimation of complex exponentials from their modulations with unknown waveforms,
allowing for non-stationary blind super-resolution. This problem, however, is ill-posed since both the parameters associated with the complex exponentials and the modulating waveforms are unknown. To alleviate this, we assume that the unknown waveforms live in a common low-dimensional subspace. Using a lifting trick, we recast the blind super-resolution problem as a structured low-rank matrix recovery problem. Atomic norm minimization is then used to enforce the structured low-rankness, and is reformulated as a semidefinite program that is solvable in polynomial time. We show that, up to scaling ambiguities, exact recovery of both of the complex exponential parameters and
the unknown waveforms is possible when the waveform subspace is random and the number of measurements is proportional to the number of degrees of freedom in the problem. 
Numerical simulations support our theoretical findings, showing that non-stationary blind super-resolution using atomic norm minimization is possible.
\end{abstract}


\section{Introduction}
\subsection{Motivation}
Super-resolution refers to techniques for enhancing the resolution of imaging systems. It finds applications in a variety of practical problems, including single-molecule microscopy, computational photography, astronomy, radar imaging. For example, in single-molecule imaging \cite{huang2008three,yildiz2003myosin}, one is interested in studying the individual behavior of molecules from measurements of an ensemble of molecules. The measurements, however, only contain the average characteristics of the molecules with fine details smeared out by the point spread function of the imaging process. Super-resolution aims to recover these fine details by localizing individual molecules, and consequently, enhance the performance of the imaging system.

In this paper, we consider super-resolution of unknown complex exponentials from their modulations with unknown waveforms. This extends super-resolution to the \textit{blind} and \textit{non-stationary}\footnote{Our choice of the term ``non-stationary'' is inspired by its use in non-stationary deconvolution~\cite{margrave2011gabor}.} scenario. More specifically, consider the observation model
\begin{equation}
\by (n) = \sum_{j = 1}^J c_j e^{-i2\pi n \tau_j} \bg_j(n),
\label{model}
\end{equation}
where $\left\{ \by (n) \in \mathbb{C}\right\}$ are samples of a continuous-time output, $\left\{ c_j\right\} \subset \mathbb{C}$ and $\left\{ \tau_j \right\} \subset [0,1)$ are unknown coefficients and parameters associated with the complex exponentials, and $\{\bg_j(n)\}$ are samples of unknown waveforms, whose forms vary with the index $j$. Our goal is to recover $\{\tau_j \}$ and $\{ c_j \}$, as well as the samples of the unknown waveforms $\{ \bg_j(n)\}$. To make this otherwise ill-posed problem well-posed, we assume that the unknown waveforms $\left\{ \bg_j\right\}$ belong to a common and known low-dimensional subspace.

Model (\ref{model}) encompasses a wide spectrum of applications. Here we list three stylized examples that can be modeled using our general mathematical framework. \\
{\bf Super-resolution with unknown point spread functions}: In applications such as single-molecule microscopy, one is interested in super-resolving and localizing unknown point sources from their convolutions with point spread functions. Quite often, the point spread function, however, cannot be perfectly known. The point spread function may also depend on the locations of the point sources. This is the case in 3D single-molecule microscopy \cite{quirin2012optimal}, where the point spread function depends on the depth ($z$-axis) of the target, demanding a super-resolution technique that handles unknown and space-varying system functions. Another example is the non-stationary blind deconvolution of seismic data~\cite{margrave2011gabor}. Here the goal is to retrieve the time domain reflectivity of the earth from its convolution with (non-stationarily) attenuated seismic waves from samples of the seismic trace. Yet other non-stationary blind super-resolution applications include computational photography \cite{fergus2006removing} and astronomy \cite{starck2002deconvolution}. Finally, one further application involving simultaneous super-resolution and calibration of unknown waveforms is the blind multi-path channel identification problem in multi-user communication systems \cite{luo2006low}. At the receiver, one must estimate the multi-path delays of unknown waveforms set by different users. For all of these applications, the goal is to determine the unknown delays $\{ \tau_j \}$ and coefficients $\{c_j\}$ from observations of the form
\begin{equation}
\label{eq:exsuper}
y(t) = \sum_{j =1}^J c_j  g_j(t - \tau_j)
\end{equation}
with $\{g_j(t)\}$ being the unknown point spread functions. By taking Fourier transform on both sides of (\ref{eq:exsuper}), we obtain
\begin{equation}
\label{eq:exsuperfourier}
\widehat{y}(f) = \sum_{j = 1}^J c_j e^{-i2\pi f\tau_j} \widehat{g}_j (f),
\end{equation}
which takes the form of (\ref{model}) when sampled. The goal is to simultaneously recover $\{c_j\}, \{\tau_j\}$ and samples of the point spread functions $\{\widehat{g}_j(f)\}$.
\vspace{0.05in}\\
{\bf Parameter estimation in radar imaging}: In radar imaging \cite{heckel2014super}, one is concerned with estimating the distances and velocities of the targets relative to the radar. These quantities can be inferred by estimating the unknown delay-Doppler parameters $(\mu_j, \nu_j)$, from the following signal model:
\begin{equation}
\label{eq:exradar}
y(t) = \sum_{j = 1}^J c_j e^{i2\pi \nu_j t}x(t - \mu_j),
\end{equation}
where both the transmitted waveform $x(t)$ and the received waveform $y(t)$ are known. We note that $\nu_j$ and $\mu_j$ can be arbitrary and do not necessarily lie on a grid. It is easy to see that sampling (\ref{eq:exradar}) also produces (\ref{model}).
\vspace{0.05in}\\
{\bf Frequency estimation with damping}: In applications such as nuclear magnetic resonance spectroscopy \cite{cai2015robust}, the signal is the superposition of complex exponentials with unknown frequencies $\{f_j\}$ and damping factors $\{ \varsigma_j \}$:
\begin{equation}
\label{eq:exdamping}
y(t) = \sum_{j = 1}^J c_j e^{i2\pi f_j t} e^{-\varsigma_j t}.
\end{equation}
By sampling the continuous variable $t$ in (\ref{eq:exdamping}), we again obtain an instance of (\ref{model}). Here the modulating waveforms $\bg_j(n)$ are samples of the damping terms $e^{-\varsigma_j t}$.

In some cases, to help regularize the inverse problems above, it may be appropriate to assume, as we do, that the unknown waveforms $\left\{ \bg_j\right\}$ belong to a known low-dimensional subspace. In super-resolution imaging, for example, point spread functions can often be modeled as Gaussians; see~\cite{huang2008three,quirin2012optimal} and references therein. When the widths of the point spread functions are unknown, however, a dictionary can be constructed consisting of Gaussian functions with different variances. Applying principal component analysis (PCA) on the constructed dictionary reveals an approximate low-dimensional subspace structure that captures the unknown point spread functions. We demonstrate this in our numerical experiments in Section~\ref{sec:Numericalsimulations}. Further, in multi-user communication systems it may be reasonable to assume that the unknown waveforms transmitted by different users belong to a subspace; in addition, the multiple received copies of a single user's waveform will all be identical (save for a delay, which becomes part of the modulation term in~\eqref{model}). On the other hand, in radar imaging, the subspace spanned by sampled, shifted copies of the transmitted waveform may not always have a low dimension. Related works such as~\cite{heckel2014super} may give sharper guarantees in this case.

\subsection{Related Work}
In the past few years, super-resolution via convex programming has become a popular approach since convex methods usually come with strong theoretical guarantees and robustness to noise and outliers. In \cite{candes2014towards}, a general mathematical framework for super-resolution using total variation (TV) norm minimization is proposed. The goal there is to super-resolve the unknown locations in $[0,1)$ of point sources from low-frequency samples of the spectrum. This TV norm minimization problem can be recast as a computationally efficient semidefinite program (SDP) \cite{bhaskar2013atomic}. It is shown that one can super-resolve $J$ point sources from $O(J)$ samples under a minimum separation condition. We note that this approach, however, requires perfect information of the point spread function. Based on \cite{candes2014towards}, \cite{candes2013super,tang2015near} study the robustness of TV norm minimization for super-resolution by considering the noisy data case; \cite{morgenshtern2015super} extends the super-resolution problem to the case when the point sources are positive; \cite{duval2015exact} examines the recovery property of sparse spikes using TV norm minimization through studying the non-degeneracy of the dual certificate. Recent work  \cite{schiebinger2015superresolution} studies the super-resolution problem without separation. In \cite{fernandez2015super}, the author considers super-resolution for demixing and super-resolution of multiple signals with a common support. In \cite{tang2013compressed}, driven by applications in line spectral estimation, an atomic norm minimization scheme is proposed for super-resolution of arbitrary unknown frequencies from random time samples of a superposition of complex exponentials. It has been shown that the sample complexity is proportional to the number of frequencies (up to a polylogarithmic factor) for exact frequency estimation. It is also worth mentioning subsequent work based on \cite{candes2014towards,tang2013compressed}. In \cite{li2014off,yang2014exact}, the authors study the problem of frequency estimation when multiple measurement vectors (MMV) are available. \cite{chen2014robust} proposes an enhanced matrix completion algorithm for frequency estimation from limited time samples by converting spectral sparsity in the model into a low-rank structure of the block Hankel matrix.

Another line of related work addresses the blind deconvolution problem. In \cite{ahmed2014blind}, the bilinear blind deconvolution problem is reformulated as a rank-one matrix sensing problem. A nuclear norm minimization program is then utilized for rank-one matrix recovery. It is shown that by employing subspace models for both signals, one can recover two length-$L$ vectors from their circular convolution when $L = O(Q+K)$, where $Q$ and $K$ are the dimensions of the two subspaces. Following this general idea of lifting for blind deconvolution using convex programming, \cite{aliahmedcov} considers the problem of blind deconvolution when multiple unknown inputs belong to a known and diverse subspace; \cite{ling2015blind} extends the work in \cite{ahmed2014blind} from rank-one case to a general rank-$r$ matrix sensing problem, achieving simultaneous blind deconvolution and demixing.
In \cite{lee2013near}, the authors propose an alternating minimization scheme for blind deconvolution under a sparsity model for the underlying signals. In \cite{tang2014convex}, the authors propose a nuclear norm minimization algorithm for blind deconvolution using random masks. More recently, \cite{bahmani2014lifting} generalizes the problem studied in \cite{tang2014convex} by considering the effect of subsampling in the measurement process. Other related works along this line include  \cite{li2015identifiability,choudhary2014identifiability}, which study conditions for the uniqueness of blind deconvolution.

Our work is most closely related to the recent works \cite{chi2015guaranteed,ling2015self}. In \cite{ling2015self}, a biconvex problem for simultaneous sparse recovery and unknown gain calibration is studied. In their work, a subspace model is employed for the unknown gains to make the problem well-posed. It is worth mentioning that they use $\ell_1$ minimization as a convex program, which is different from ours. Then, a sample complexity bound that is suboptimal is derived for sparse recovery and self-calibration. Inspired by \cite{ling2015self}, \cite{chi2015guaranteed} considers a super-resolution problem that has a similar setup to \cite{candes2014towards}, except that the point spread function is assumed unknown. By employing a subspace model for the point spread function, an atomic norm minimization program is formulated for simultaneous super-resolution of point sources and recovery of the unknown point spread function. The atomic norm minimization problem therein is recast as an SDP. The sample complexity bound derived there, however, is suboptimal.
As we explain in Section~\ref{subsec:maincontribution}, our work further generalizes the model in \cite{chi2015guaranteed} to the non-stationary case, where the point spread functions can vary with the point sources.

In \cite{heckel2014super}, super-resolution radar is formulated as a convex optimization program. In particular, the signal is modeled as a superposition of delayed and Doppler shifted versions of the template waveform, which is the same as model (\ref{eq:exradar}).
It should be pointed out that our model (\ref{model}) can be utilized for this problem as well.
Therefore, our proposed blind super-resolution method can conceivably be used for super-resolution radar.

Lastly, we would like to mention that the signal model in our work has both low-rank and spectrally sparse structures, and thus is simultaneously structured. Consistent with~\cite{oymak2015simultaneously}, we can achieve the information-theoretic limit on the measurement bound (up to a polylogarithmic factor) not by a combination of convex objectives but rather through a single convex objective---in this case via atomic norm minimization.

\subsection{Main Contributions}
\label{subsec:maincontribution}
Our contributions are twofold. First, we propose a general model for non-stationary blind super-resolution, which arises in a variety of disciplines. Our non-stationary blind super-resolution problem is naturally non-convex. By utilizing a subspace model for the unknown waveforms and a lifting trick \cite{ahmed2014blind,candes2015phase}, we relax the non-stationary blind super-resolution problem using atomic norm minimization, which can be further formulated as an SDP. Second, we derive a sample complexity bound that is near information-theoretically optimal under assumptions on the minimum separation of the $\tau_j$'s and on the randomness and incoherence properties of the subspace. Specifically, assuming that the subspace has dimension $K$, we show that when the number of measurements is proportional to the number of degrees of freedom in the problem, i.e., $O(JK)$ (up to a polylogarithmic factor), the non-stationary blind super-resolution problem is solvable by an SDP. Furthermore, we can faithfully recover $\{ \tau_j \}$, $\{ c_j \}$, and the samples of the unknown waveforms $\{ \bg_j(n) \}$.

It is also worth mentioning the recent work \cite{chi2015guaranteed}, which can be viewed as a special case of our general non-stationary blind super-resolution framework by assuming all of the unknown waveforms are the same. Our model is more realistic and powerful due to its generality. As illustrated by the examples in the introduction, our framework also covers a wider range of non-convex inverse problems beyond the super-resolution problem with unknown point spread functions, including blind multi-path channel identification in communication systems, parameter estimation in radar imaging and frequency estimation with damping. Additionally, on the theoretical side, we improve the sample complexity bound in \cite{chi2015guaranteed} from $O(J^2K^2)$ to $O(JK)$, up to a polylogarithmic factor. We elaborate on comparisons with~\cite{chi2015guaranteed} in Section~\ref{sec:discussmainresult}.

\subsection{Notation and Organization of the Paper}
Throughout the paper, the following notation is adopted. We use boldface letters $\bX, \bY$ and $\bx, \by$ to denote matrices and vectors, respectively.  For a vector $\bv$, $\norm[2]{\bv}$ is used to denote the $\ell_2$ norm of $\bv$. For a matrix $\bX$, $\norm[]{\bX}$ and $\norm[F]{\bX}$ represent the operator norm and the Frobenius norm of the matrix $\bX$, respectively. An $M\times N$ zero matrix is denoted as $\bzero_{M\times N}$. We also use $\bzero_M$ and $\bI_M$ to denote an $M \times M$ zero matrix and an $M\times M$ identity matrix, respectively. We use the matrix inequality notation $\bX \preceq \bY$ to represent that $\bY-\bX$ is positive semidefinite. Conventional notations $\langle \cdot, \cdot \rangle$, $\trace(\cdot)$, $(\cdot)^H$, $(\cdot)^T$ and $(\cdot)^{*}$ are used to denote the inner product, trace, Hermitian, transpose, and conjugate operations, respectively. For a set $\Omega$, we use $|\Omega|$ to denote its cardinality. $\expval [\cdot]$ and $\mathbb{P}\left\{\cdot \right\}$ denote expectation and probability of the underlying event. We use a calligraphic letter $\cB$ to denote a linear operator.

The rest of the paper is organized as follows. Section \ref{sec: problemformulation}
introduces our problem setup, its connection to the atomic norm minimization framework via a lifting trick, and an exact SDP reformulation of the atomic norm minimization. In Section \ref{sec:recovery guarantees}, we present the main theorem and discuss its implications. Section \ref{sec:Numericalsimulations} provides some numerical simulations to support and illustrate our theoretical findings. The detailed proof of our main theorem is presented in Section \ref{sec:detailedproofofmainthm}.
Finally, concluding remarks are given in Section \ref{sec:conclusions}.

\section{Problem Formulation}
\label{sec: problemformulation}
\subsection{Problem Setup via Atomic Norm Minimization}
Consider the model
\begin{equation}
\label{eq:model} \by (n) = \sum_{j = 1}^J c_j e^{-i2\pi n \tau_j} \bg_j(n), ~~n = -2M, \dots, 2M,
\end{equation}
where $\by(n) \in \mathbb{C}$ are observations, $(c_j, \tau_j)$ are unknown parameters of complex exponentials $c_j e^{-i2\pi n \tau_j}$, and $\bg_j(n)$ are samples of unknown waveforms. Without loss of generality, we assume that $\tau_j \in [0,1), j = 1,\dots, J$. Our goal is to recover $\tau_j$, $c_j$, and $\bg_j(n)$ from the samples $\by(n)$. It is apparent that one can only recover $c_j$ and $\bg_j(n)$ up to a scaling factor due to the multiplicative form in (\ref{eq:model}).

Unfortunately, this problem is severely ill-posed without any additional constraints on $\bg_j$ since the number of samples in (\ref{eq:model}) is $N: = 4M+1$, while the number of unknowns in (\ref{eq:model}) is $O(JN)$, which is larger than $N$. To alleviate this, we solve our problem under the assumption that all $\bg_j$ live in a common low-dimensional subspace spanned by the columns of a known $N\times K$
matrix $\bB$ with $K\leq N$, which we denote as
\begin{equation*}
\bB = \begin{bmatrix}
\bb_{-2M} & \bb_{-2M+1} & \cdots & \bb_{2M-1} & \bb_{2M}
\end{bmatrix}^H
\end{equation*}
with $\bb_n \in \complex^{K\times 1}$. In other words, $\bg_j = \bB \bh_j$ for some unknown $\bh_j\in \complex^{K\times 1}$.  Henceforth, we assume that $\norm[2]{\bh_j} = 1$ without loss of generality. Under the subspace assumption, recovery of $\bg_j$ is guaranteed if $\bh_j$ can be recovered. Therefore, the number of degrees of freedom in (\ref{eq:model}) becomes $O(JK)$, which can possibly be smaller than the number of samples $N$ when $J, K \ll N$.

Under the subspace assumption, we can rewrite (\ref{eq:model}) as
\begin{equation}
\label{eq:submodel} \by(n) = \sum_{j =1}^J c_j e^{-i2\pi n \tau_j} \bb_n^H \bh_j.
\end{equation}
Defining
\begin{equation*}
\ba(\tau)  = \begin{bmatrix} e^{i2\pi (-2M)\tau} & \cdots & e^{i2\pi (0)\tau} & \cdots & e^{i2\pi (2M)\tau} \end{bmatrix}^T,
\end{equation*}
we have
\begin{equation}
\label{eq:MMVsensing}
\begin{aligned}
\by(n) & = \sum_{j =1}^J c_j \ba(\tau_j)^H \be_n \bb_n^H \bh_j \\
& = \trace{\left( \be_n \bb_n^H   \sum_{j =1}^Jc_j  \bh_j \ba(\tau_j)^H\right)}\\
& = \left\langle \sum_{j =1}^J c_j \bh_j \ba(\tau_j)^H , \bb_n\be_n^H \right\rangle,
\end{aligned}
\end{equation}
where we have defined $\langle \bX, \bY \rangle = \trace{(\bY^H \bX)}$ and used $\be_n$, $-2M \leq n \leq 2M$, to denote the $(n+2M+1)$th column of the $N \times N$ identity matrix $\bI_N$. We see that
(\ref{eq:MMVsensing}) leads to a parametrized rank-$J$ matrix sensing problem, which we write as
\begin{equation*}
\begin{aligned}
\by  & = \cB(\bX_o),
\end{aligned}
\end{equation*}
where the linear operator $\cB: \mathbb{C}^{K\times N} \rightarrow \mathbb{C}^{N}$ is defined as $[\cB(\bX_o)]_n = \left\langle \bX_o, \bb_n \be_n^H \right\rangle, n = -2M, \cdots, 2M$ with $\bX_o = \sum_{j =1}^J c_j \bh_j \ba(\tau_j)^H$. Here we choose the number of measurements $N = 4M+1$, which is purely for ease of theoretical analysis. We note that our result is not restricted to the symmetric case presented here and does not necessarily require that $N$ should be an odd number. We refer the interested reader to Appendix~A of \cite{tang2013compressed} for a discussion of how to modify the argument for the general case.

In many scenarios, the number of complex exponentials $J$ is small. Therefore, we use the atomic norm to promote sparsity. As in~\cite{chi2015guaranteed}, define the atomic norm \cite{chandrasekaran2012convex} associated with the following set of atoms
\begin{equation*}
\cA = \left\{ \bh \ba(\tau)^H:~\tau\in [0,1), \norm{\bh} = 1, \bh \in \complex^{K\times 1} \right\}
\end{equation*}
as
\begin{equation*}
\begin{aligned}
\norm[\cA]{\bX}  & = \inf \left\{ t>0:~\bX \in t {\bf conv}(\cA) \right\} \\
&  = \inf_{c_k, \tau_k, \norm{\bh_k} = 1}\left\{\sum_k |c_k|:~\bX = \sum_k c_k \bh_k \ba(\tau_k)^H \right\}.
\end{aligned}
\end{equation*}
To enforce the sparsity of the atomic representation, we solve
\begin{equation}
\label{eq:mainprog}
\begin{aligned}
& \mathop{\text{minimize}}\limits_{\bX}~~\norm[\cA]{\bX}   \\
&  {\text{subject~to}}~~\by(n) =\langle \bX,  \bb_n\be_n^H\rangle,~~n = -2M, \cdots, 2M.
\end{aligned}
\end{equation}
Standard Lagrangian analysis shows that the dual of (\ref{eq:mainprog}) is given by
\begin{equation}
\label{eq:dual}
\mathop{\text {maximize}}\limits_{\blambda}~~\langle \blambda, \by \rangle_{\mathbb{R}}~~~~~~{\text{ subject~to}}~~\| \mathcal{B}^{*}(\blambda)\|_{\mathcal{A}}^{*} \leq 1
\end{equation}
where $\langle \blambda, \by \rangle_{\real} = \text{Re} \left(\langle \blambda, \by \rangle\right)$, $\cB^{*}: \mathbb{C}^{N} \rightarrow \mathbb{C}^{K\times N}$ denotes the adjoint operator of $\cB$ and $\cB^{*}(\blambda) = \sum_{n} \blambda(n)\bb_n \be_n^H$, and $\| \cdot \|_{\mathcal{A}}^{*}$ is the dual norm of the atomic norm.

 The following proposition characterizes the optimality condition of program (\ref{eq:mainprog}) with the vector polynomial $\bq(\tau)$ serving as a dual certificate to certify the optimality of $\bX_o$ in the primal problem (\ref{eq:mainprog}).
\begin{prop}
Suppose that the atomic set $\mathcal{A}$ is composed of atoms of the form $\bh \ba(\tau)^H$ with $\|\bh\|_2 = 1, \tau \in [0,1)$. Define the set $\mathbb{D} = \left\{\tau_j, 1\le j\le J \right\}$. Let $\widehat{\bX}$ be the optimal solution to (\ref{eq:mainprog}). Then $\widehat{\bX} = \bX_o$ is the unique optimal solution if the following two conditions are satisfied: \\
1) There exists a dual polynomial
\begin{equation}
\label{eq:cond}
\begin{aligned}
\bq(\tau) & = \mathcal{B}^{*}(\blambda) \ba(\tau) \\
& = \sum_{n = -2M}^{2M} \blambda(n) e^{i2\pi n \tau} \bb_n
\end{aligned}
\end{equation}
satisfying   
\begin{equation}
\bq(\tau_j) = \sign(c_j) \bh_j, ~~~\forall~\tau_j \in \mathbb{D} \label{cond1}
\end{equation}
\begin{equation}
\|\bq(\tau)\|_2 <1,~~~\forall~\tau \notin \mathbb{D}. \label{cond2}
\end{equation}
Here $\blambda$ is a dual optimizer and $\sign(c_j) := \frac{c_j}{|c_j|}$.
\label{optimalitycond}\\
2) $\left\{ \begin{bmatrix} \vdots \\  \ba(\tau_j)^H \be_n \bb_n^H \\ \vdots \end{bmatrix}, j = 1, \cdots, J\right\}$ is a linearly independent set.
\end{prop}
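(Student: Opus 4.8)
The plan is to establish this via the standard convex-analysis argument for exact recovery in atomic norm minimization, adapted to the matrix-valued setting where the ``lifted'' unknown $\bX_o = \sum_j c_j \bh_j \ba(\tau_j)^H$ plays the role of the sparse signal. First I would write down the subdifferential of $\norm[\cA]{\cdot}$ at $\bX_o$ and its characterization via dual certificates: $\bQ \in \partial \norm[\cA]{\bX_o}$ if and only if $\langle \bQ, \bX_o\rangle_{\real} = \norm[\cA]{\bX_o}$ and $\norm[\cA]{\bQ}^* \le 1$, i.e.\ $\norm{\bQ \ba(\tau)^H \text{-type pairing}}$ is controlled on all atoms. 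Concretely, for $\bQ = \cB^*(\blambda)$ we have $\langle \bQ, \bh\ba(\tau)^H\rangle = \bh^H \bq(\tau)$ where $\bq(\tau) = \cB^*(\blambda)\ba(\tau)$ as in \eqref{eq:cond}, so the dual-norm constraint $\norm[\cA]{\cB^*(\blambda)}^* \le 1$ is exactly $\sup_\tau \norm{\bq(\tau)}_2 \le 1$. Conditions \eqref{cond1} and \eqref{cond2} say precisely that $\bq(\tau)$ interpolates the sign pattern $\sign(c_j)\bh_j$ at the active nodes and is strictly contracted off the support.

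Next I would carry out the primal-dual certification. Let $\widehat{\bX}$ be any optimal solution and write $\widehat{\bX} = \bX_o + \bH$ with $\bH$ in the null space of $\cB$ (feasibility). Using the dual polynomial $\bq$ and $\blambda$ from condition 1, I would show
\begin{align*}
\norm[\cA]{\widehat{\bX}} \ge \langle \cB^*(\blambda), \widehat{\bX}\rangle_{\real} = \langle \blambda, \cB(\widehat{\bX})\rangle_{\real} = \langle \blambda, \by\rangle_{\real} = \langle \cB^*(\blambda), \bX_o\rangle_{\real} = \norm[\cA]{\bX_o},
\end{align*}
where the first inequality is the dual-norm bound $\langle \bQ, \bX\rangle_{\real} \le \norm[\cA]{\bQ}^*\norm[\cA]{\bX}$ and the last equality uses that $\bq(\tau_j) = \sign(c_j)\bh_j$ forces $\langle \cB^*(\blambda), \bX_o\rangle_{\real} = \sum_j |c_j| = \norm[\cA]{\bX_o}$ (the atomic decomposition of $\bX_o$ is norm-achieving because it is certified). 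Hence $\widehat{\bX}$ is optimal and $\bX_o$ is \emph{an} optimal solution; this gives existence of a minimizer attaining the bound.

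For \emph{uniqueness} I would argue by contradiction: suppose $\widehat{\bX} \ne \bX_o$ is also optimal, and write an atomic decomposition $\widehat{\bX} = \sum_k \widehat{c}_k \widehat{\bh}_k \ba(\widehat\tau_k)^H$ with $\sum_k |\widehat{c}_k| = \norm[\cA]{\widehat{\bX}} = \norm[\cA]{\bX_o}$. Chaining the inequality above with equality throughout forces $\langle \cB^*(\blambda), \widehat{\bX}\rangle_{\real} = \sum_k |\widehat c_k|$, i.e.\ $\mathrm{Re}(\widehat{c}_k^* \,\widehat{\bh}_k^H \bq(\widehat\tau_k)) = |\widehat c_k|$ for every $k$; since $\norm{\bq(\tau)}_2 < 1$ for $\tau \notin \mathbb{D}$ by \eqref{cond2} and $\norm{\widehat\bh_k}_2 = 1$, the Cauchy--Schwarz bound $|\widehat\bh_k^H\bq(\widehat\tau_k)| \le \norm{\bq(\widehat\tau_k)}_2 \le 1$ is tight only when $\widehat\tau_k \in \mathbb{D}$ and $\widehat\bh_k$ is aligned with $\bq(\widehat\tau_k) = \sign(c_j)\bh_j$. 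Therefore every active node of $\widehat{\bX}$ lies in $\mathbb{D}$ and the corresponding direction is $\pm\bh_j$, so $\widehat{\bX} = \sum_{j=1}^J \widehat{c}_j \bh_j \ba(\tau_j)^H$ for some coefficients $\widehat c_j$. Feasibility then gives $\cB(\widehat{\bX} - \bX_o) = 0$, i.e.\ $\sum_j (\widehat c_j - c_j)\, \bh_j \ba(\tau_j)^H \in \nullspace{\cB}$, which upon applying $\cB$ and reading off coordinates is exactly a linear combination $\sum_j (\widehat c_j - c_j)\,\ba(\tau_j)^H \be_n \bb_n^H = 0$ for all $n$; this is the vector being asserted to form a linearly independent set in condition 2, so all $\widehat c_j = c_j$, contradicting $\widehat{\bX} \ne \bX_o$.

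The main obstacle — and really the only subtle point — is correctly handling the matrix-valued (as opposed to scalar) atoms: one must verify that the inequality $\langle \cB^*(\blambda), \bX\rangle_{\real} \le \norm[\cA]{\bX}$ indeed follows from $\sup_\tau\norm{\bq(\tau)}_2 \le 1$ (this is where the dual norm of $\norm[\cA]{\cdot}$ must be computed as $\sup_{\tau}\norm{\cB^*(\blambda)\ba(\tau)}_2$, using $\max_{\norm\bh_2=1}\mathrm{Re}(\bh^H \bq(\tau)) = \norm{\bq(\tau)}_2$), and to pin down that equality in Cauchy--Schwarz at a node $\tau_j$ forces $\widehat\bh$ to be the \emph{specific} unit vector $\sign(c_j)\bh_j$ rather than merely some unit vector — which needs the exact interpolation condition \eqref{cond1}, not just $\norm{\bq(\tau_j)}_2 = 1$. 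The rest is a routine duality/complementary-slackness computation once the dual object $\bq$ is in hand.
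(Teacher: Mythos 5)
Your proof is correct and follows essentially the same primal--dual route as the paper: identify the dual norm as $\sup_\tau\norm[2]{\bq(\tau)}$, show that conditions \eqref{cond1}--\eqref{cond2} make $\cB^*(\blambda)$ a dual certificate with zero duality gap (so $\bX_o$ is optimal), and then establish uniqueness by observing that equality in the chain forces any other optimizer's atoms to live on $\mathbb{D}$ with directions aligned to $\bq(\tau_j)=\sign(c_j)\bh_j$, after which the linear-independence hypothesis (condition 2) pins down the coefficients. The only cosmetic difference is that you spell out the Cauchy--Schwarz/complementary-slackness step more explicitly than the paper's terse strict-inequality display, and your final linear system should carry the vectors $\widehat c_j\widehat\bh_j - c_j\bh_j\in\complex^K$ rather than scalar coefficients, but this is the same argument the paper intends.
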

We include the proof of Proposition \ref{optimalitycond} in Appendix \ref{sec:proofofmainthm}.

\subsection{SDP Characterization}

Since the convex hull of the set of atoms $\mathcal{A}$ can be characterized by a semidefinite program, $\| \bX \|_{\mathcal{A}}$ admits an equivalent SDP representation.
\begin{lemma} \cite{yang2014exact,li2014off} For any $\bX \in \mathbb{C}^{K\times N}$,
\begin{equation*}
\begin{aligned}
\| \bX\|_{\mathcal{A}} & = \inf_{\bu,
\bT}   \left\{ \frac{1}{2N} \trace \left(\Toep(\bu)\right) + \frac{1}{2}\trace(\bT): 
\begin{bmatrix}
\Toep(\bu) & \bX^H \\
\bX & \bT
\end{bmatrix}   \succeq 0
\right\},
\end{aligned}
\end{equation*}
where $\bu$ is a complex vector whose first entry is real, $\Toep(\bu)$ denotes the $N \times N$ Hermitian Toeplitz matrix whose first column is $\bu$, and $\bT$ is a Hermitian $K\times K$ matrix.
\end{lemma}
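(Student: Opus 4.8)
The plan is to establish the two directions of the SDP characterization separately, using the standard Vandermonde/Carath\'eodory machinery adapted to the vector-valued atomic set $\cA$. Concretely, I would prove that $\|\bX\|_{\cA} = \mathrm{SDP}(\bX)$, where $\mathrm{SDP}(\bX)$ denotes the infimum of $\frac{1}{2N}\trace(\Toep(\bu)) + \frac{1}{2}\trace(\bT)$ over all Hermitian PSD completions of the stated $2\times 2$ block matrix.

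First I would prove $\mathrm{SDP}(\bX) \le \|\bX\|_{\cA}$ (feasibility). Take any atomic decomposition $\bX = \sum_k c_k \bh_k \ba(\tau_k)^H$ with $\|\bh_k\|_2 = 1$. The key observation is that $\ba(\tau)\ba(\tau)^H$ is a rank-one PSD Toeplitz matrix, and likewise $\bh_k \bh_k^H$ is PSD of trace $1$. Setting $\bu$ so that $\Toep(\bu) = \sum_k |c_k| \ba(\tau_k)\ba(\tau_k)^H$ and $\bT = \sum_k |c_k| \bh_k\bh_k^H$, one checks that the block matrix equals $\sum_k |c_k| \begin{bmatrix} \ba(\tau_k) \\ \sgn(c_k)^{*}\bh_k \end{bmatrix}\begin{bmatrix} \ba(\tau_k) \\ \sgn(c_k)^{*}\bh_k \end{bmatrix}^H \succeq 0$, using $\ba(\tau_k)\bh_k^H \sgn(c_k) = \bX$ contributions summing correctly (here one must be careful with the placement of $\sgn(c_k)$ and the Hermitian transpose so that the off-diagonal block comes out to $\bX$ and $\bX^H$). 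Then $\frac{1}{2N}\trace(\Toep(\bu)) = \frac{1}{2N}\sum_k |c_k|\,\ba(\tau_k)^H\ba(\tau_k) = \frac{1}{2}\sum_k |c_k|$ since $\|\ba(\tau)\|_2^2 = N$, and $\frac{1}{2}\trace(\bT) = \frac12 \sum_k |c_k|$, giving objective value $\sum_k |c_k|$. Taking the infimum over atomic decompositions yields the bound.

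Next I would prove $\|\bX\|_{\cA} \le \mathrm{SDP}(\bX)$. Given any feasible $(\bu,\bT)$ for the SDP, I invoke the Carath\'eodory–Toeplitz (Vandermonde) decomposition: a PSD Hermitian Toeplitz matrix $\Toep(\bu)$ of rank $r \le N$ admits $\Toep(\bu) = \sum_{k=1}^{r} p_k \ba(\tau_k)\ba(\tau_k)^H$ with $p_k > 0$ and distinct $\tau_k \in [0,1)$. The PSD block constraint forces the column space of $\bX^H$ (equivalently the row space of $\bX$) to lie in the range of $\Toep(\bu)$, so each row of $\bX$ is a linear combination of the $\ba(\tau_k)^H$; writing $\bX = \sum_k \ba(\tau_k)^H$-combinations and using the Schur-complement/range argument one extracts vectors $\bw_k$ with $\bX = \sum_k \bw_k \ba(\tau_k)^H$ and $\sum_k \frac{1}{p_k}\bw_k\bw_k^H \preceq \bT$ (this is the matrix analogue of the scalar inequality in the line-spectral case). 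Normalizing $\bh_k = \bw_k/\|\bw_k\|_2$ and $c_k = \|\bw_k\|_2$ gives an atomic decomposition, and $\sum_k |c_k| = \sum_k \|\bw_k\|_2 \le \sum_k \frac12\big(\frac{1}{p_k}\|\bw_k\|_2^2 + p_k\big) \le \frac12\trace(\bT) + \frac12\sum_k p_k = \frac12\trace(\bT) + \frac{1}{2N}\trace(\Toep(\bu))$, using $\trace(\ba(\tau_k)\ba(\tau_k)^H) = N$. Taking the infimum over feasible $(\bu,\bT)$ completes the inequality, and the attainment of the infimum follows from a standard compactness argument.

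The main obstacle I expect is the second direction: carefully extracting the atomic decomposition of $\bX$ from the Vandermonde factorization of $\Toep(\bu)$ while simultaneously controlling $\bT$. In the scalar line-spectral setting this is a short argument, but here $\bX$ is matrix-valued and one needs the correct vector-valued generalization of the ``$\sum |c_k|^2/p_k \le \bT$'' inequality, which comes from applying the Schur complement to the PSD block matrix after substituting the Vandermonde decomposition and arguing about the range condition (rows of $\bX$ supported on $\mathrm{range}(\Toep(\bu))$). Since this lemma is quoted from \cite{yang2014exact,li2014off}, I would cite those references for the detailed manipulation and only sketch the structure here; the AM–GM step $\|\bw_k\|_2 \le \frac12(\frac{1}{p_k}\|\bw_k\|_2^2 + p_k)$ with the free scaling of $\bh_k$ is what makes the two halves of the objective balance.
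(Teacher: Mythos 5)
Your proof is correct and follows the standard approach used in the cited references \cite{yang2014exact,li2014off}; the paper itself states this lemma as a citation without providing its own proof, so there is no in-paper argument to compare against. The one small slip is the conjugate on $\sgn(c_k)$ in the rank-one factor: the bottom block should be $\sgn(c_k)\bh_k$ rather than $\sgn(c_k)^{*}\bh_k$, so that the $(2,1)$ block of $\sum_k |c_k|\begin{bmatrix}\ba(\tau_k)\\ \sgn(c_k)\bh_k\end{bmatrix}\begin{bmatrix}\ba(\tau_k)\\ \sgn(c_k)\bh_k\end{bmatrix}^{H}$ equals $\sum_k |c_k|\sgn(c_k)\bh_k\ba(\tau_k)^{H} = \bX$, a detail you already flag as needing care; the rest (Vandermonde decomposition, the range condition $\mathrm{range}(\bX^{H})\subseteq\mathrm{range}(\Toep(\bu))$, the Schur-complement bound $\sum_k p_k^{-1}\bw_k\bw_k^{H}\preceq\bT$, and the AM--GM balancing) is sound.
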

Hence, (\ref{eq:mainprog}) can be solved efficiently using off-the-shelf SDP solvers such as CVX \cite{grant2008cvx}. With many SDP solvers, one can also obtain a dual optimal solution $\blambda$ to (\ref{eq:dual}) for free by solving the primal program (\ref{eq:mainprog}), and as we discuss below, this can be used to localize the supports of the point sources. We note that the dual optimal solution is not unique in general. As discussed in  \cite{tang2013compressed}, the recovered support set from the dual solution must contain the true support set when the optimal primal solution is $\bX_o$. Though it is possible that the recovered support set contains spurious parameters, solving the SDP with the interior point method will avoid this pathological situation and recover the true support exactly. See \cite{tang2013compressed} for more technical discussions on this.

Now, given the dual optimal solution $\blambda$, consider the trigonometric polynomial:
\begin{equation*}
\begin{aligned}
  p(e^{i2\pi \tau})
 &  = 1 - \|\bq(\tau)\|_2^2\\
 &  = 1 - \bq(\tau)^H \bq(\tau) \\
& = 1 - \sum_{n = -4M}^{4M} u_n e^{i2\pi n \tau},
\end{aligned}
\end{equation*}
where $\bq(\tau)$ is defined in terms of $\blambda$ in~\eqref{eq:cond}, and $u_n$ are some scalars that can be computed from $\bq(\tau)$ explicitly. To localize the supports of the point sources, one can simply compute the roots of the polynomial $p(z)$ on the unit circle. This method allows for the recovery of point sources to very high precision as shown in \cite{candes2014towards}.

Another way to recover the support is by discretizing $\tau \in [0,1)$ on a fine grid up to a desired accuracy. Then, one can check the $\ell_2$ norm of the dual polynomial $\bq(\tau)$ and identify the $\{\tau_j\}$ by selecting the values of $\tau$ such that $\norm[2]{\bq(\tau)} = \|\sum_{n = -2M}^{2M} \blambda(n) e^{i2\pi n \tau} \bb_n \| \approx 1$. We use this heuristic in our numerical simulations.

Given an estimate of $\{\tau_j\}$, say $\{\widehat{\tau}_j\}$, one can plug these values back into~\eqref{eq:MMVsensing} to form the following overdetermined linear system of equations:
\begin{equation*}
\begin{aligned}
& \begin{bmatrix}
 \ba(\widehat{\tau}_1)^H  \be_{-2M} \bb_{-2M}^H & \cdots &  \ba(\widehat{\tau}_J)^H  \be_{-2M} \bb_{-2M}^H \\
\vdots  & \ddots &  \vdots \\
\ba(\widehat{\tau}_1)^H  \be_{2M} \bb_{2M}^H & \cdots &  \ba(\widehat{\tau}_J)^H  \be_{2M} \bb_{2M}^H \\
\end{bmatrix} \begin{bmatrix} c_1 \bh_1 \\ \vdots \\ c_J \bh_J \end{bmatrix}  =   \begin{bmatrix} \by(-2M) \\ \vdots \\ \by(2M)  \end{bmatrix}.
\end{aligned}
\end{equation*}
A unique solution for $\{c_j\bh_j\}_{j = 1}^J$ can be obtained by a simple least squares since the columns of the above matrix are linearly independent. However, we note that one cannot resolve the inherent scaling ambiguity between each $c_j$ and the corresponding $\bh_j$.

\section{Recovery Guarantee}
\label{sec:recovery guarantees}
\subsection{Sample Complexity Bound for Exact Recovery}
Given samples of the form $\by(n) = \langle \bX_o, \bb_n \be_n^H \rangle, n = -2M,\dots, 2M$, we wish to quantify precisely how large the number of measurements $N = 4M+1$ should be so that the atomic norm minimization (\ref{eq:mainprog})
exactly recovers $\bX_o$.

Before presenting the main result of our work, we discuss the assumptions that are used in the main theorem. The assumptions can be grouped into three categories: (a)  randomness and incoherence of the subspace spanned by the columns of $\bB$, (b)  minimum separation of the $\tau_j$, and (c) uniform distribution of $\bh_j$ on the complex unit sphere $\mathbb{CS}^{K-1}$.

We assume that the columns of the matrix $\bB^H$, namely, $\bb_n, -2M\leq n \leq 2M$,  are independently sampled from a population $\cF$ with the following properties \cite{candes2011probabilistic}:
\begin{itemize}
\item {\bf Isotropy  property:} We assume that the distribution $\cF$ obeys the isotropy property in that
\begin{equation}
\label{eq:isotropyproperty}
\mathbb{E} \bb \bb^H = \bI_K,~~~ \bb \sim \cF.
\end{equation}
\item {\bf Incoherence property:} We assume that $\cF$ satisfies the incoherence property with coherence $\mu$ in that
\begin{equation}
\label{eq:incoherenceproperty}
\max_{1\leq p\leq K} |\bb(p)|^2 \leq \mu,~~~ \bb \in \mathcal{F}
\end{equation}
holds almost surely, where $\bb(p)$ is the $p$th element of $\bb$. For ease of theoretical analysis, hereafter we also assume that $\mu K\geq 1$, which can always be ensured by choosing $\mu$ sufficiently large. In particular, when the incoherence property (\ref{eq:incoherenceproperty}) holds almost surely, the isotropy property (\ref{eq:isotropyproperty}) ensures that $\mu \geq 1$ and thus that $\mu K\geq 1$.
\end{itemize}

Furthermore, we require the following conditions on the parameters of the complex exponentials and the rotations of $\bg_j$ in the subspace $\bB$, namely, $\bh_j$.
\begin{itemize}
\item {\bf Minimum separation:} We assume that
\begin{equation*}
\Delta_{\tau} = \min_{k \neq j} |\tau_k - \tau_j| \geq \frac{1}{M}
\end{equation*}
where the distance $|\tau_k - \tau_j|$ is understood as the wrap-around distance on $[0,1)$.
\item {\bf Random rotation:} We assume that the coefficient vectors $\bh_j$ are drawn i.i.d.\ from the uniform distribution on the complex unit sphere $\mathbb{CS}^{K-1}$.
\end{itemize}
\begin{thm}
\label{mainthm}
Assume that the minimum separation condition $\Delta_{\tau} \geq \frac{1}{M}$ is satisfied and that $M \geq 64$. Also, assume that $\bg_{j} = \bB \bh_j$ with the columns of $\bB^H$, namely, $\bb_n$, being i.i.d.\ samples from a distribution $\mathcal{F}$ that satisfies the isotropy  and incoherence properties with coherence parameter $\mu$. Additionally, assume that $\bh_j$ are drawn i.i.d.\ from the uniform distribution on the complex unit sphere $\mathbb{CS}^{K-1}$. Then, there exists a numerical constant $C$ such that
\begin{equation}
\label{eq:samplebound}M \geq C \mu J K\log\left(\frac{MJK}{ \delta}\right) \log^2\left(\frac{MK}{ \delta}\right)
\end{equation}
is sufficient to guarantee that we can recover $\bX_o$ via program (\ref{eq:mainprog}) with probability at least $1-\delta$.
\end{thm}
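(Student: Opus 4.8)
The plan is to verify the two sufficient conditions of Proposition~\ref{optimalitycond} by explicitly constructing a vector‑valued dual polynomial $\bq(\tau)$ and then invoking concentration of measure. Since every polynomial of the form $\sum_{n=-2M}^{2M}\blambda(n)e^{i2\pi n\tau}\bb_n$ automatically equals $\mathcal{B}^{*}(\blambda)\ba(\tau)$, it suffices to exhibit coefficients (equivalently, an interpolating combination of kernels built from the $\bb_n$) making \eqref{cond1} and \eqref{cond2} hold, and then to check condition~2). I would build $\bq$ from a randomized squared‑Fej\'er kernel: for a direction $\bv\in\complex^{K}$ and a node $\tau_0$, set $\bq^{(0)}_{\bv,\tau_0}(\tau)=\sum_n \bar g_n e^{i2\pi n(\tau-\tau_0)}\bb_n\bb_n^H\bv$ and an analogous $\bq^{(1)}_{\bv,\tau_0}$ using the derivative weights, and put $\bq(\tau)=\sum_{j=1}^J\bigl(\bq^{(0)}_{\balpha_j,\tau_j}(\tau)+\bq^{(1)}_{\bbeta_j,\tau_j}(\tau)\bigr)$. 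Because $\expval\,\bb_n\bb_n^H=\bI_K$, one has $\expval\,\bq(\tau)=\sum_j\bigl(\bar K(\tau-\tau_j)\balpha_j+\bar K'(\tau-\tau_j)\bbeta_j\bigr)$, i.e.\ the Cand\`es--Fern\'andez-Granda construction tensored with the identity on $\complex^{K}$.

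Next I would fix the coefficients $\{\balpha_j,\bbeta_j\}\subset\complex^{K}$ by imposing the interpolation equations $\bq(\tau_j)=\sign(c_j)\bh_j$ and $\bq'(\tau_j)=\bzero$ for all $j$; this is a square $2JK\times 2JK$ linear system $\bM\,[\balpha;\bbeta]=[\bu;\bzero]$, where $\bu$ stacks the $\sign(c_j)\bh_j$. In expectation $\bM$ is (a permutation of) $\bD\otimes\bI_K$ with $\bD$ the CFG interpolation matrix, which is invertible and well conditioned once $\Delta_\tau\ge 1/M$ and $M\ge 64$. A matrix‑Bernstein bound on $\bM-\expval\,\bM$, summing the independent contributions of the $\bb_n$, then shows $\bM$ is invertible with $\|\bM^{-1}\|=O(1)$ on the stated event, so the coefficients exist and obey $\|\balpha_j\|_2,\|\bbeta_j\|_2\lesssim 1$; condition \eqref{cond1} holds by construction. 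For \eqref{cond2} I would split $[0,1)$ into a near region $\bigcup_j\{|\tau-\tau_j|\le c/M\}$ and its complement. On the near region, Taylor expansion around $\tau_j$ with $\bq'(\tau_j)=\bzero$ gives $\|\bq(\tau)\|_2^2 = 1+(\tau-\tau_j)^2\,\mathrm{Re}\langle\sign(c_j)\bh_j,\bq''(\tau_j)\rangle+O(|\tau-\tau_j|^3)$, so it remains to show $\bq''(\tau_j)$ is close to its expectation $\bar K''(0)\sign(c_j)\bh_j$ (with $\bar K''(0)<0$) while $\bq'''$ stays bounded, forcing a strict decrease off $\tau_j$. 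On the far region one shows $\|\bq(\tau)-\expval\,\bq(\tau)\|_2$ is uniformly small and uses the scalar CFG estimate $\|\expval\,\bq(\tau)\|_2<1$. Both statements reduce to proving that the relevant vector/matrix‑valued random trigonometric polynomials and their first three derivatives concentrate around their expectations uniformly in $\tau$: evaluate on a sufficiently fine grid via vector/matrix Bernstein, then extend to all $\tau$ using Bernstein's polynomial inequality to control the modulus of continuity. Finally, condition~2) of Proposition~\ref{optimalitycond} follows from the same circle of ideas, since the associated Gram matrix is in expectation a well‑conditioned Vandermonde‑type matrix under the minimum‑separation hypothesis, and concentration keeps it nonsingular.

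\textbf{Main obstacle.} The crux is performing these uniform concentration estimates at the sample complexity $O(\mu JK\,\mathrm{polylog})$ rather than the crude $O(J^2K^2)$. Two difficulties compound: (i) $\bq$ is \emph{not} a sum of independent random terms, because the interpolation coefficients $\balpha_j,\bbeta_j$ themselves depend on all the $\bb_n$ through $\bM^{-1}$, so one must first analyze an idealized certificate with the deterministic coefficients $\expval\bM^{-1}[\bu;\bzero]$ and then transfer the bounds back; and (ii) naive bounds lose factors of $J$ and $K$, so one must exploit the random‑rotation assumption — the $\bh_j$ being i.i.d.\ uniform on $\mathbb{CS}^{K-1}$ makes cross terms $\bh_j^H\bh_k$ of order $1/\sqrt{K}$ and makes quantities such as $\sum_j \bar K(\tau-\tau_j)\sign(c_j)\bh_j$ concentrate sharply — possibly in combination with a golfing‑type construction of $\bq$ from independent batches of the measurements. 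Carefully tracking all derivatives up to third order, with the right constants, uniformly over $\tau$, is the technical heart of the argument.
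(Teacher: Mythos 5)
Your proposal follows essentially the same route as the paper: construct a randomized squared-Fej\'er dual polynomial interpolating $\sign(c_j)\bh_j$ with vanishing derivative at each $\tau_j$, verify invertibility of the interpolation matrix by matrix Bernstein, then prove $\|\bq(\tau)\|_2<1$ off $\mathbb{D}$ by a grid-plus-Bernstein-polynomial argument combined with a near/far split mirroring Cand\`es--Fern\'andez-Granda. Two small clarifications against what the paper actually does: (a) no golfing scheme is needed --- the paper achieves the $O(\mu JK\,\mathrm{polylog})$ complexity entirely through a clean three-term decomposition $\frac{1}{\sqrt{|K''_M(0)|}^{\ell}}\bq^{\ell}(\tau)=\frac{1}{\sqrt{|K''_M(0)|}^{\ell}}\overline{\bq}^{\ell}(\tau)+\bI_1^{\ell}(\tau)+\bI_2^{\ell}(\tau)$, where $\overline{\bq}^{\ell}$ is the fully deterministic MMV dual polynomial, $\bI_1^{\ell}=(\bV_\ell-\expval\bV_\ell)^H\bL\bh$ captures the fluctuation of the kernel blocks, and $\bI_2^{\ell}=(\expval\bV_\ell)^H(\bL-\bL'\otimes\bI_K)\bh$ captures the fluctuation of the interpolation coefficients; and (b) the random-rotation assumption enters not through cross-term smallness $\bh_j^H\bh_k\sim 1/\sqrt K$, but through a \emph{second} application of matrix Bernstein, now over the $\bh_j$, applied to $\sum_j\bQ_j\sign(c_j)\bh_j$: the identity $\expval\,\bh_j\bh_j^H=\tfrac{1}{K}\bI_K$ gives the variance $\tfrac{1}{K}\|\bQ\|_F^2$, and it is precisely this $1/K$ factor that trades the naive $O(J^2K^2)$ down to $O(JK)$. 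With those two refinements your outline is on target.
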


\subsection{Discussion}
\label{sec:discussmainresult}

Inspired by~\cite{chi2015guaranteed}, we use the same assumptions on the random subspace model (the isotropy and incoherence properties) in order to prove our Theorem~\ref{mainthm}. The randomness assumption on the subspace does not appear to be critical in practice, as evidenced by our numerical experiments in Section~\ref{sec:Numericalsimulations}; being able to replace this with a deterministic condition would increase the relevance of our theory to the applications discussed in the introduction.

Also, as noted in the introduction, our work generalizes the model in \cite{chi2015guaranteed} to the non-stationary case. It may also be possible to extend the result developed in \cite{chi2015guaranteed} to the non-stationary case; however, the sample complexity would still be $O(J^2K^2)$, up to a polylogarithmic factor. In contrast, we reduce the sample complexity to $O(JK)$, which is information theoretically optimal, up to a polylogarithmic factor. In order to do this, in the proof of Lemma \ref{lem:matroperatornormbound} in Section \ref{sec:detailedproofofmainthm}, we apply a matrix Bernstein's inequality instead of Talagrand's concentration inequality which was used by \cite{chi2015guaranteed}. Our theorem also relies on an additional assumption that was not present in~\cite{chi2015guaranteed}, namely that the coefficient vectors $\bh_j$ are drawn randomly. We do not believe that this randomness assumption is important in practice and suspect that it is merely an artifact of our proof.

Our bound on $M$ suggests that when $\mu$ is a constant (e.g., when the rows of $\bB$ are drawn from a sub-Gaussian distribution, $\mu$ can be bounded by a constant times $\log K$ with high probability \cite{candes2011probabilistic}), $M = O(JK)$ is sufficient for exact recovery and this matches the number of degrees of freedom in the problem, up to a polylogarithmic factor. Thus, our sample complexity bound is tight and there is little room for further improvement. When the dimension of the subspace is bounded by a constant, $M = O(J)$ (up to a polylogarithmic factor) is sufficient for exact recovery. This bound matches the one in the deterministic super-resolution framework \cite{candes2014towards}, where $N = O(J)$ suffices to exactly localize the unknown spikes under the same minimum separation condition used here.
We also see that our bound improves the one derived in \cite{chi2015guaranteed} even when $\bg_{j} = \bB \bh$, i.e., when $\bg_{j}$ has no dependence on $j$. We note that the number of degrees of freedom in that problem is $O(J+K)$. It would be interesting to see if further improvement upon our bound is possible in the stationary scenario.

Finally, when the measurements are contaminated by noise, one can extend the observation model as
\begin{equation*}
\by =   \cB(\bX_o) + \bz,
\end{equation*}
where $\|\bz\|_2 \leq \delta_{\text{noise}}$ and $\delta_{\text{noise}}$ is a parameter controlling the noise level. To recover an estimate of $\bX_o$, we propose to solve the following inequality constrained program:
\begin{equation}
\label{eq:noisecase}
 \mathop{\text{minimize}}\limits_{\bX}~\norm[\cA]{\bX} ~~{\text{ subject~to}}~~\| \by -\cB(\bX) \|_2 \leq \delta_{\text{noise}}.
\end{equation}
We leave robust performance analysis of (\ref{eq:noisecase}) for future research.

\section{Numerical Simulations}
\label{sec:Numericalsimulations}

\begin{figure*}[htb]
  \centering
  \begin{subfigure}{.49\textwidth}
  \centering
  \centerline{\includegraphics[width=1\linewidth]{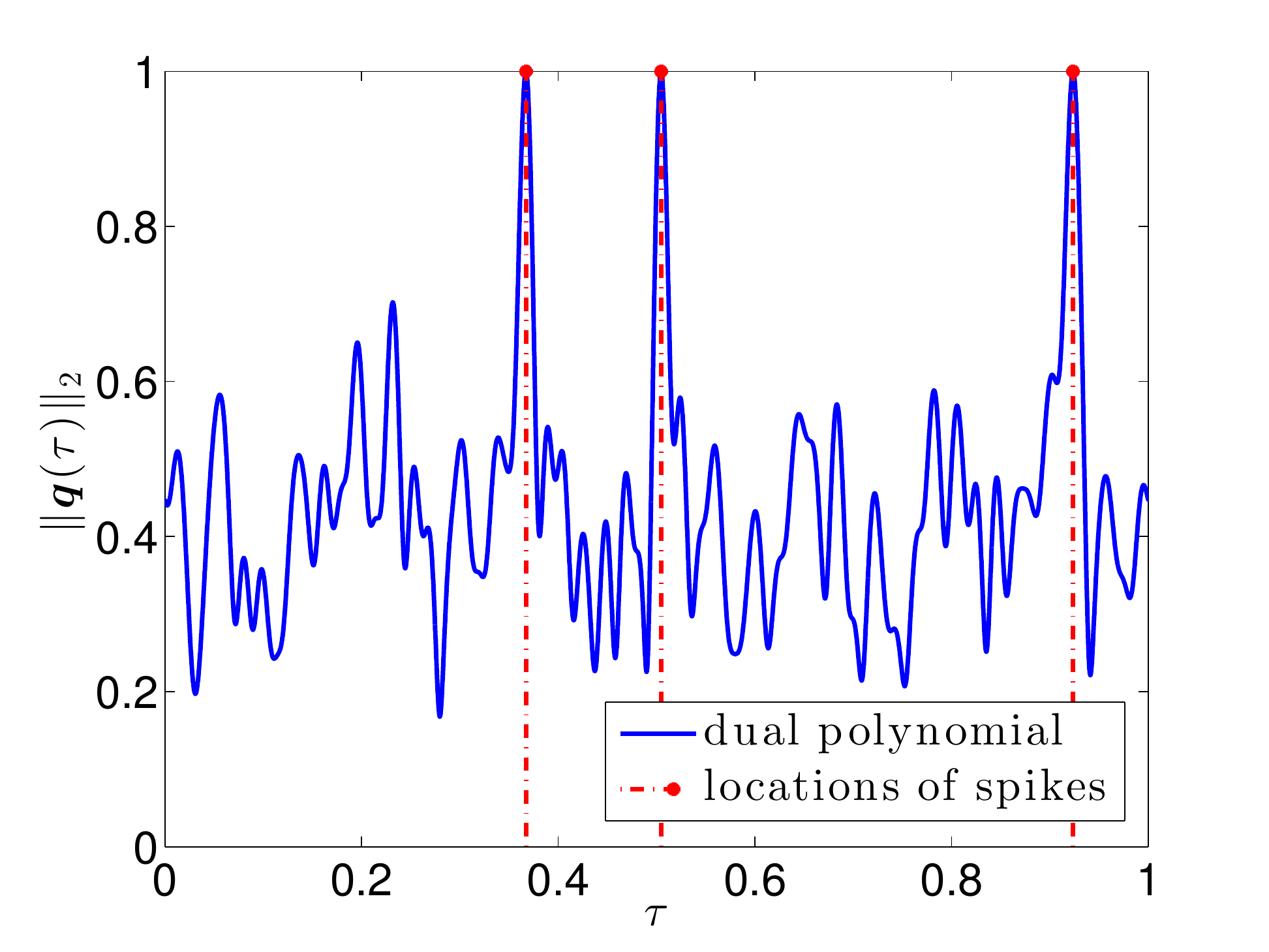}}
  \caption{}
  \label{fig:sub1}
\end{subfigure}
\begin{subfigure}{.49\textwidth}
\centering
  \centerline{\includegraphics[width=1\linewidth]{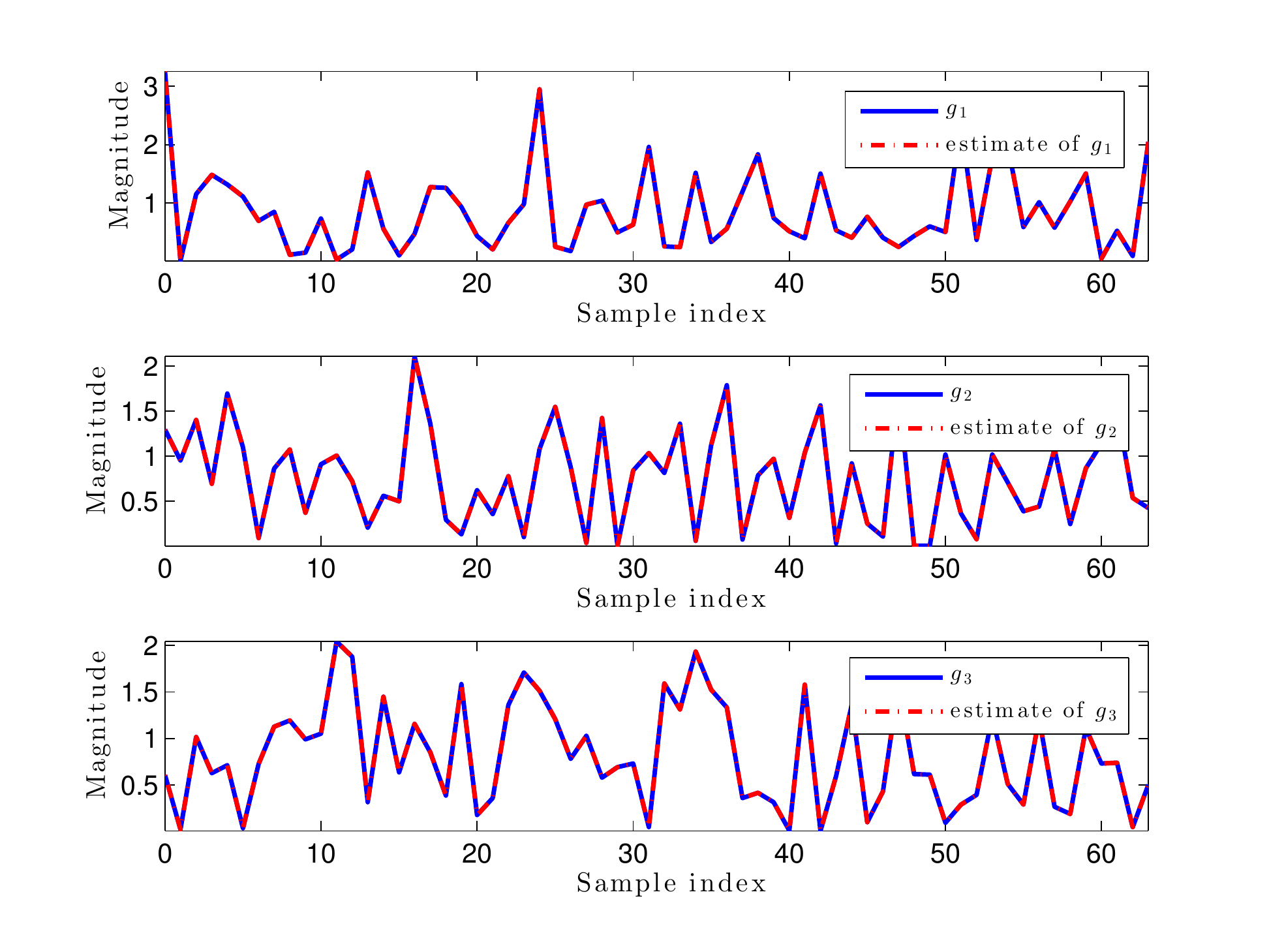}}
   \caption{}
  \label{fig:sub2}
 \end{subfigure}
\caption{(a) The $\ell_2$-norm of the dual polynomial $\norm[2]{\bq(\tau)}$ and the locations of the true spikes when entries of $\bB$ are built from the standard real Gaussian distribution. (b) The magnitude of samples of the waveforms $\bg_1, \bg_2, \bg_3$ and their estimates $\widehat{\bg}_1, \widehat{\bg}_2, \widehat{\bg}_3$ from least squares (best viewed in color).}
\label{fig:figure1}
\end{figure*}

\begin{figure*}[htb]
  \centering
  \begin{subfigure}{.32\textwidth}
  \centering
  \centerline{\includegraphics[width=1\linewidth]{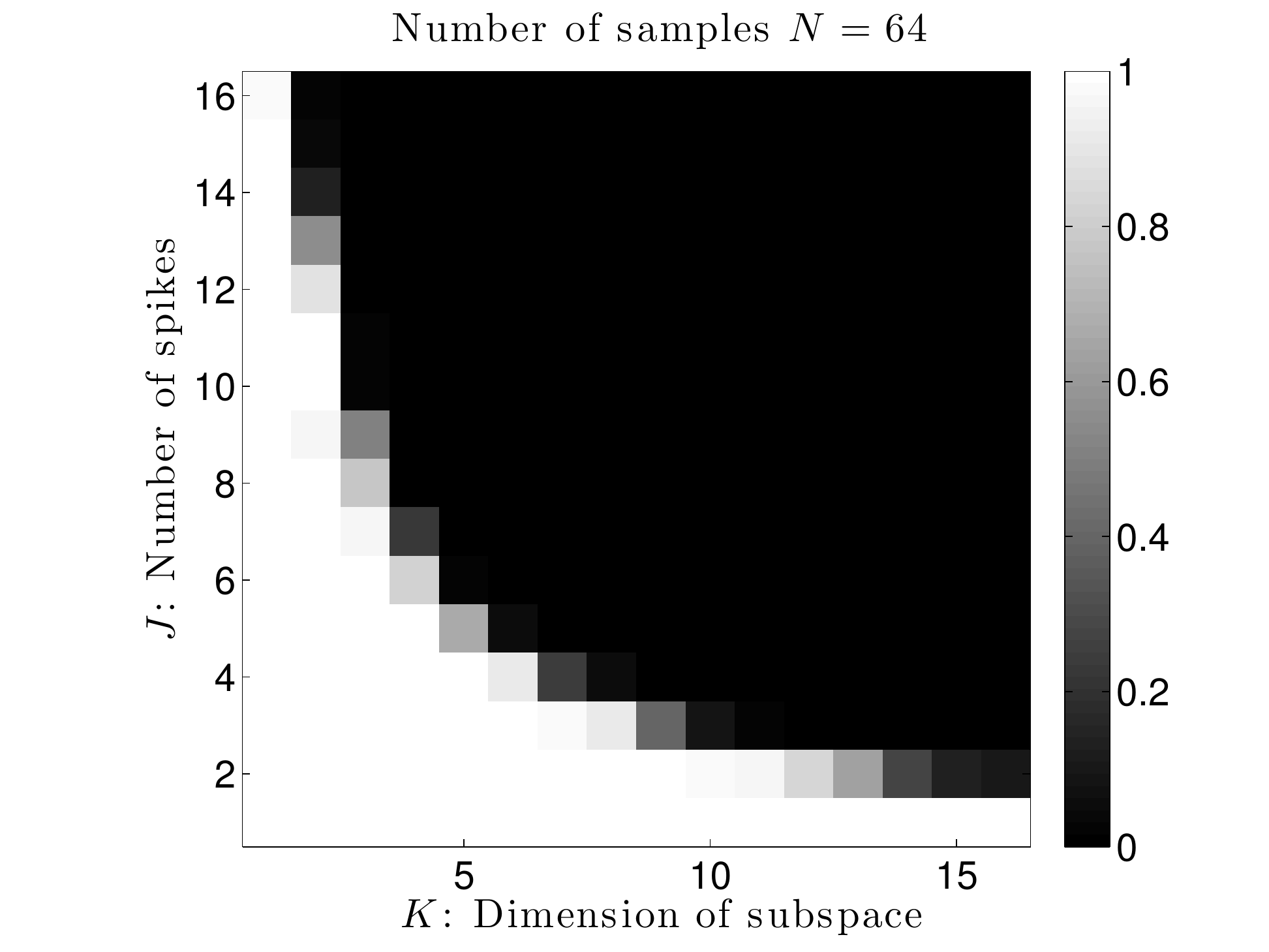}}
  \caption{}
  \label{fig3:sub1}
\end{subfigure}
\begin{subfigure}{.32\textwidth}
\centering
  \centerline{\includegraphics[width=1\linewidth]{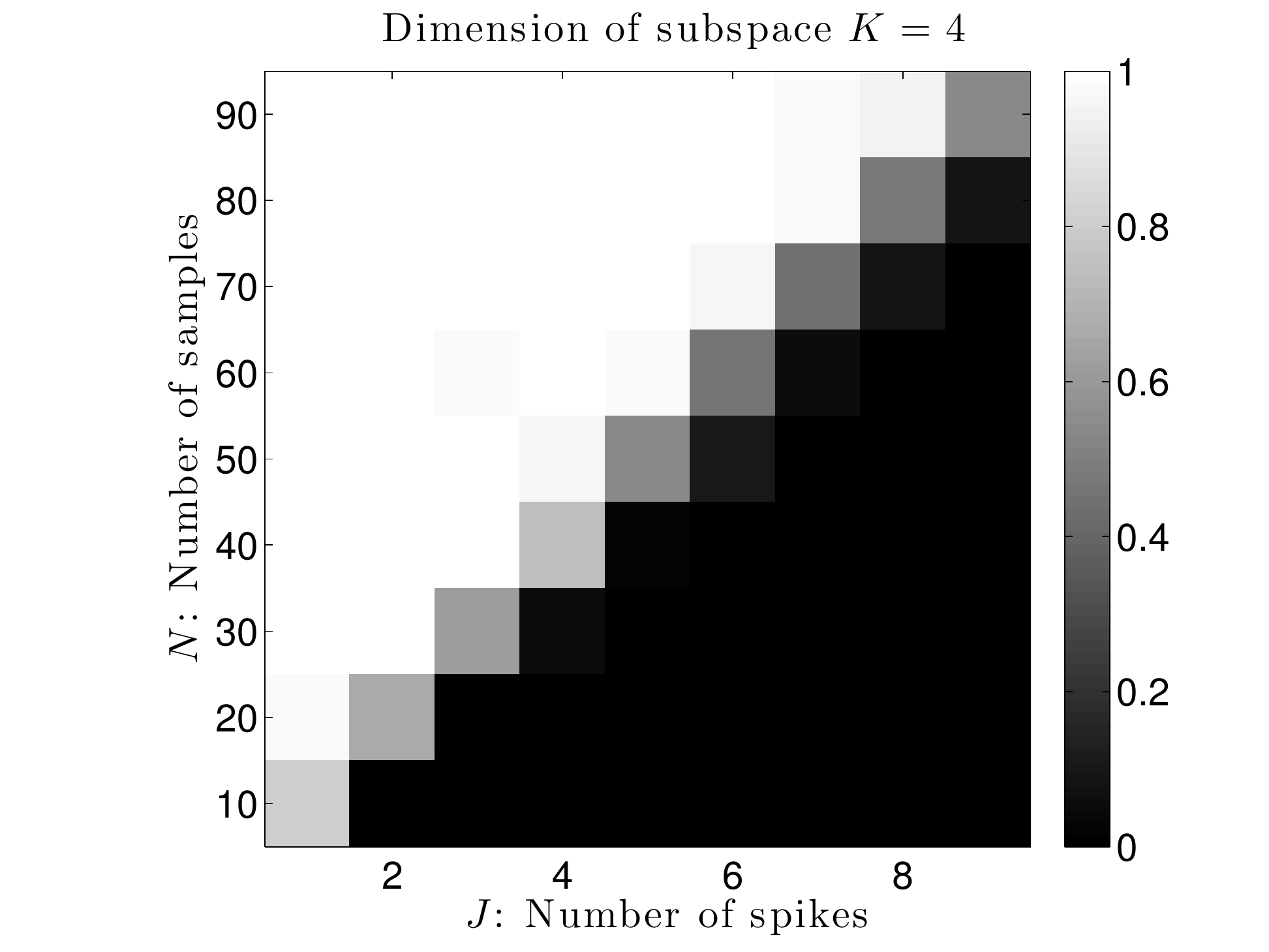}}
    \caption{}
  \label{fig3:sub2}
 \end{subfigure}
 \begin{subfigure}{.32\textwidth}
\centering
  \centerline{\includegraphics[width=1\linewidth]{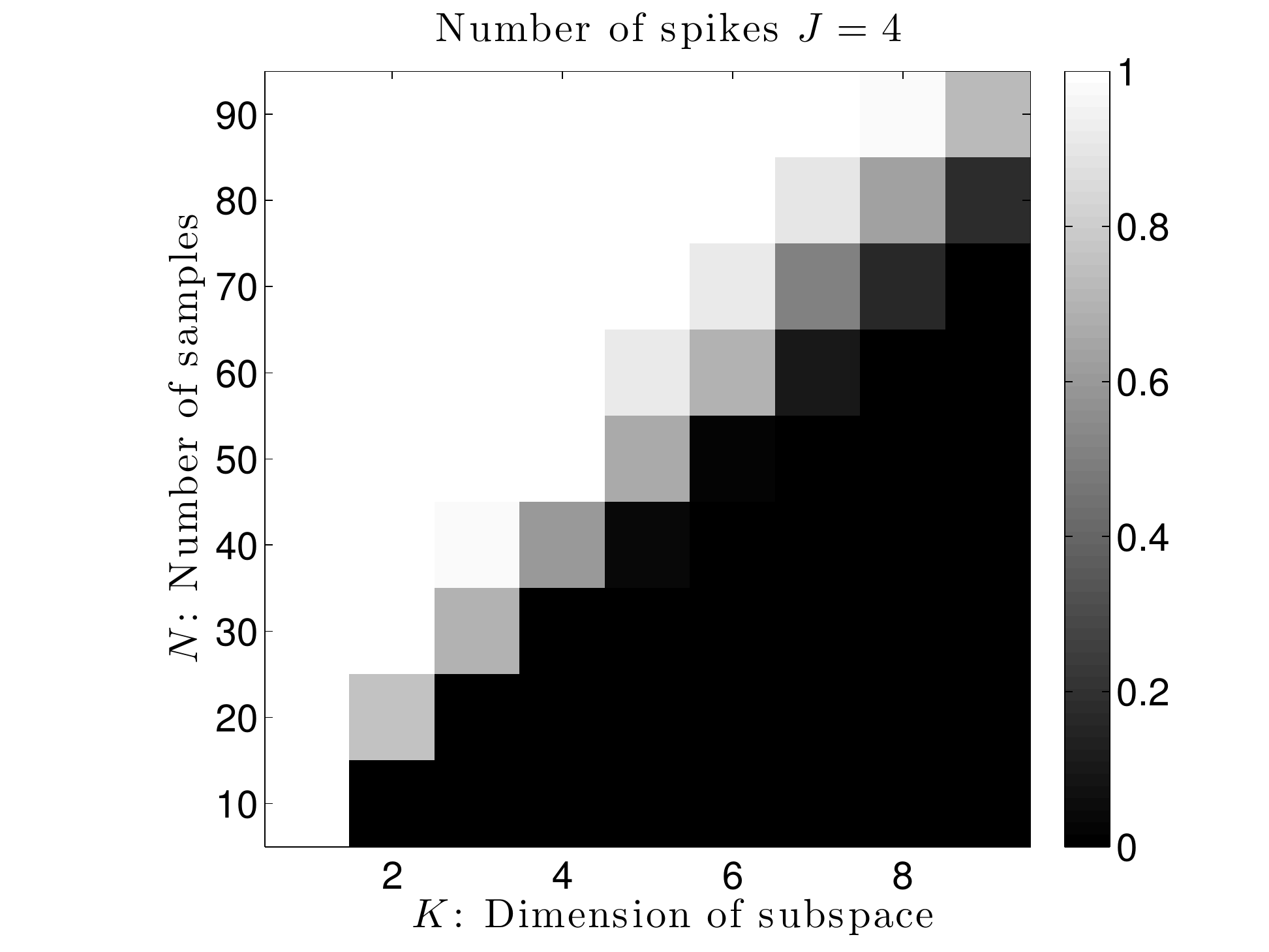}}
    \caption{}
  \label{fig3:sub3}
 \end{subfigure}
 \caption{(a) The probability of success of non-stationary blind super-resolution using atomic norm minimization when $N = 64$ is fixed. (b) The phase transition of non-stationary blind super-resolution using atomic norm minimization when the dimension of subspace is fixed, $K = 4$. (c) The phase transition of non-stationary blind super-resolution using atomic norm minimization when the number of spikes is fixed, $J = 4$.}
 \label{fig:figure2}
\end{figure*}

\begin{figure*}
  \centering
  \begin{subfigure}{.49\textwidth}
  \centering
  \centerline{\includegraphics[width=1\linewidth]{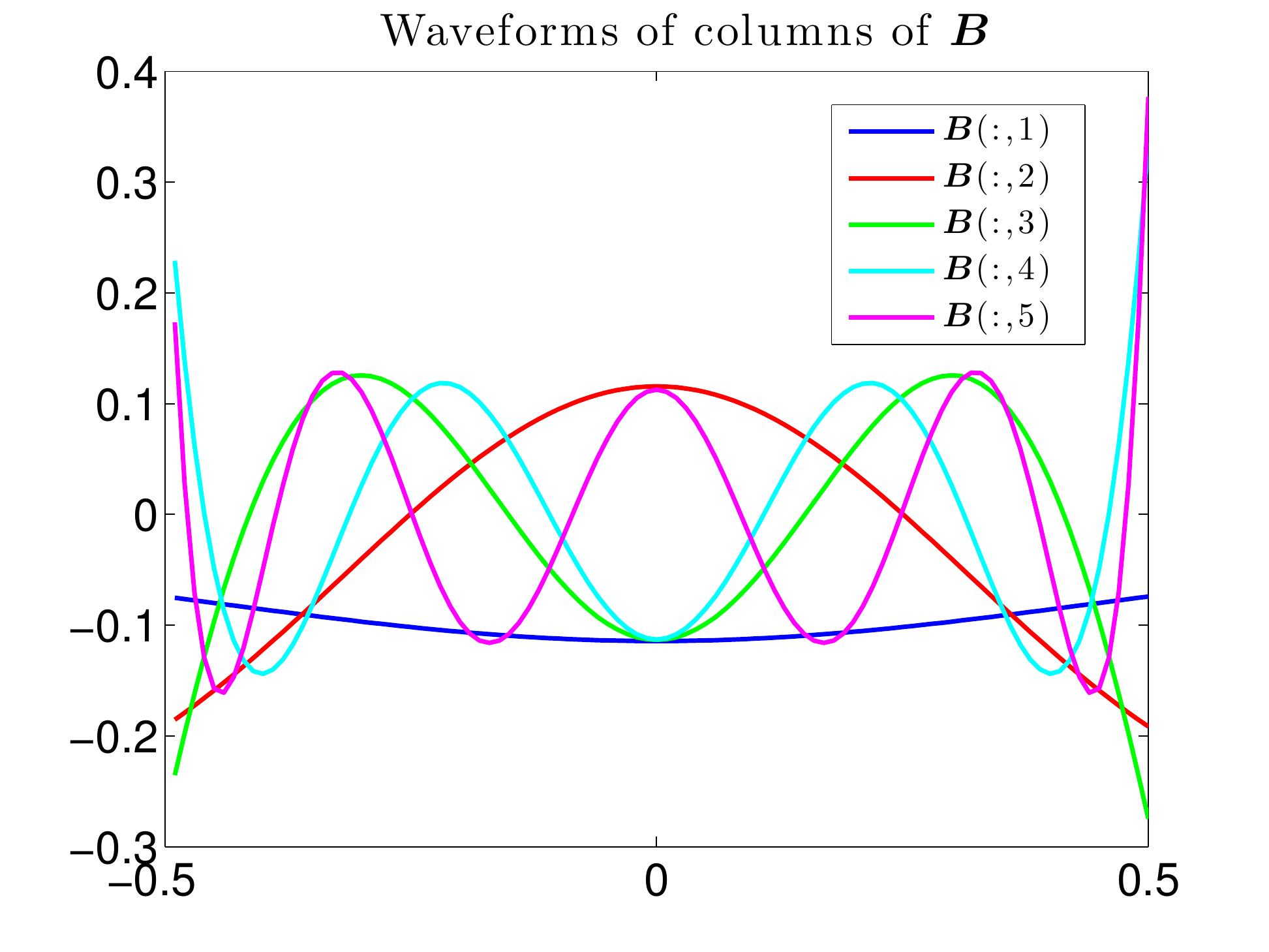}}
  \caption{}
  \label{fig4:sub1}
\end{subfigure} 
\begin{subfigure}{.49\textwidth}
\centering
  \centerline{\includegraphics[width=1\linewidth]{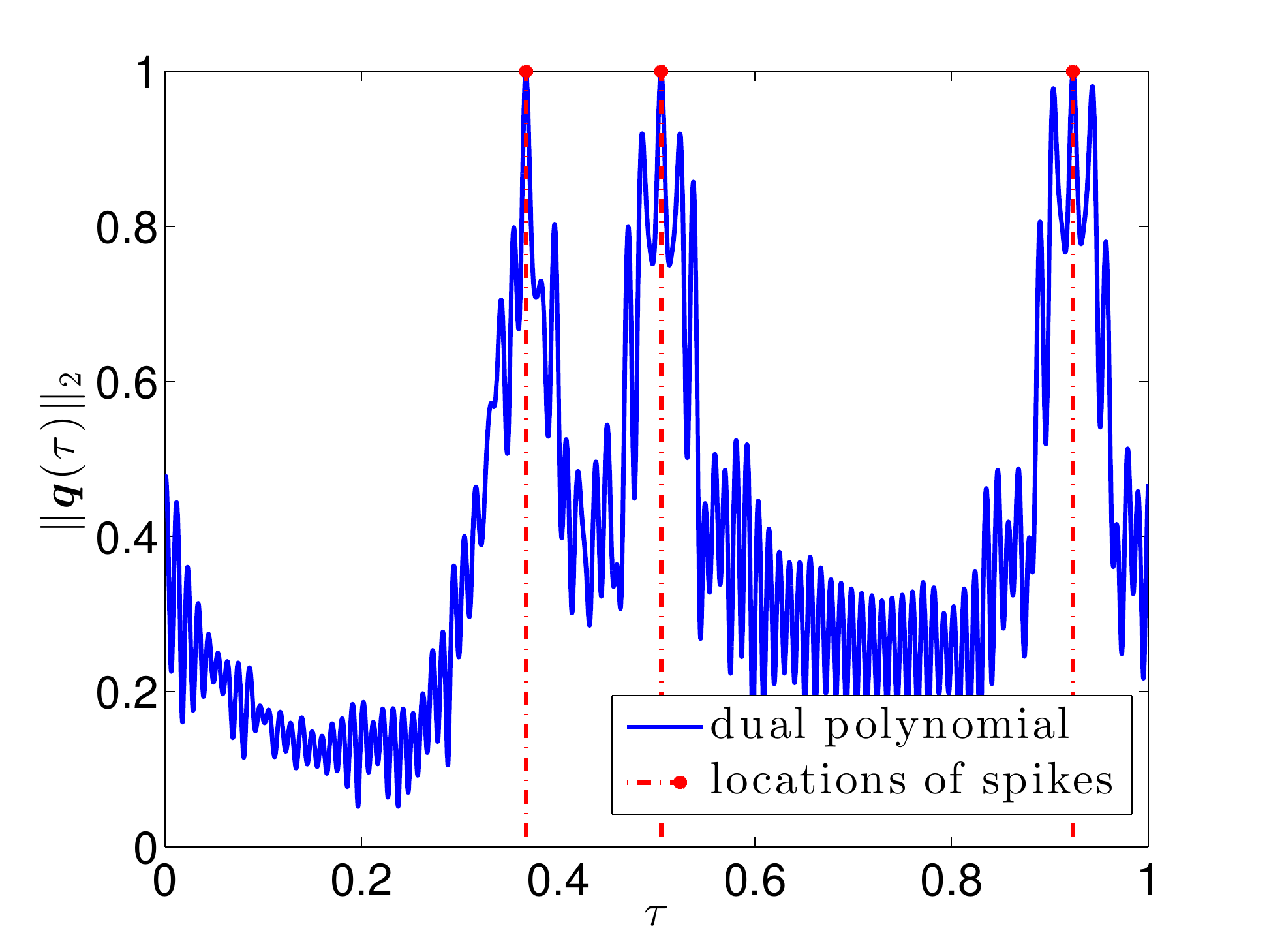}}
    \caption{}
  \label{fig4:sub2}
 \end{subfigure}
 \caption{(a) Plot of orthonormal columns of $\bB$. (b)  The $\ell_2$-norm of the dual polynomial $\norm[2]{\bq(\tau)}$ and the locations of the true spikes when $\bB$ is built from the left singular vectors of the low-rank approximation of the dictionary $\bD_{\bg}$.}
 \label{fig:figure3}
\end{figure*}

In this section, we provide synthetic numerical simulations to
support our theoretical findings. In all of the simulations, we solve the atomic norm minimization problem (\ref{eq:mainprog}) using CVX. We start with a simple example to illustrate how to localize unknown spikes using the dual polynomial $\bq(\tau)$ and recover samples of the unknown waveforms $\bg_j(n)$. For the sake of illustration, we set $N = 64$, $J = 3$ and $K = 4$ and randomly generate the locations of $J$ spikes on $[0,1)$ from a uniform distribution. We regenerate the set of locations if a particular minimum separation condition is violated; in particular, we ensure that the minimum separation $\Delta_{\tau}$ between spikes is not less than $\frac{1}{N}$, which is slightly smaller than $\frac{1}{M}$ required by our theorem. Each $c_j$ is generated randomly with a dynamic range of $20$dB and uniform phase. We build $\bB$ with entries generated randomly from the standard Gaussian distribution. Each $\bh_j$ is also generated using i.i.d.\ real Gaussian random variables and is then normalized. Figure \ref{fig:figure1}(a) shows the $\ell_2$-norm of the dual polynomial, namely, $\norm[2]{\bq(\tau)}$ on a discretized set of points on [0,1] with discretization step size $10^{-4}$.
Once the $\tau_j$'s are identified, we solve a least squares problem using a pseudo-inverse to estimate the $\bh_j$'s. Because we cannot exactly recover the $\bh_j$'s, hence the $\bg_j$'s,  due to phase rotations, we plot the magnitude of the $\bg_j$'s and the estimated $\widehat{\bg}_j$'s in Figure \ref{fig:figure1}(b). We also compute $|\langle \bh_j, \widehat{\bh}_j \rangle|, j = 1, 2, 3$, which equal $0.99999984, 0.99999973$ and $0.99999996$, respectively. For $j = 1, 2, 3$, the agreements on the magnitude of $\bg_j$ and $\widehat{\bg}_j$ and on the absolute inner product between $\bh_j$ and $\widehat{\bh}_j$ confirm that we can faithfully recover $\bg_j$.

Next, we characterize the phase transition of atomic norm minimization (\ref{eq:mainprog}). We run $50$ trials for each pair of the underlying changing variables. For each trial, we declare success if the relative reconstruction Frobenius norm error of $X_o$ is less than $10^{-4}$. In the first experiment, we fix $N = 64$ and vary the values of $J$ and $K$. Other specifications are the same as the warm-up experiment above.  Figure \ref{fig:figure2}(a) shows the phase transition in this situation. From Figure \ref{fig:figure2}(a), we can roughly see that the success transition curve behaves like a hyperbola, which matches the bound appearing in Theorem \ref{mainthm}. In the second experiment, we fix the dimension of the subspace $K = 4$ and study the phase transition between $N$ and $J$. Figure \ref{fig:figure2}(b) indicates a linear relationship between $N$ and $J$ when $K$ is fixed. Finally, we test the phase transition between $N$ and $K$ when the number of spikes $J$ is fixed. We set $J = 4$ and vary $N$ and $K$ in the experiment. Figure \ref{fig:figure2}(c) shows the phase transition in this situation and also implies a linear relationship between $N$ and $K$.

Finally, we test the atomic norm minimization (\ref{eq:mainprog}) for localization of spikes in a more practical scenario where the matrix $\bB$ is generated by extracting the principal components of a structured matrix, and thus has less randomness. To set up the experiment, we set $J = 3$ and generate the locations of $\{\tau_j\}$ uniformly at random between $0$ and $1$ under the minimum separation $\Delta_{\tau} = \frac{4}{N}$, which is roughly the same as what Theorem \ref{mainthm} requires. We assume that $\bg_j(n)$ are samples of the zero mean Gaussian waveform $g_{\sigma^2}(t) = \frac{1}{\sqrt{2\pi \sigma^2}} e^{-\frac{t^2}{2\sigma^2}}$ with unknown variance $\sigma^2 \in [0.1,1]$. We take $N = 100$ samples around the origin with sampling interval $\Delta t = 0.01$. To obtain the matrix $\bB$, we first build a dictionary as follows:
\begin{equation*}
\bD_{\bg} = \begin{bmatrix} \bg_{\sigma^2 = 0.1} & \bg_{\sigma^2= 0.11} & \bg_{\sigma^2 = 0.12} & \cdots &  \bg_{\sigma^2 = 1}  \end{bmatrix},
\end{equation*}
where columns of $\bD_{\bg}$ are from the samples of Gaussians with a discretized set of variances $\sigma^2 \in [0.1,1]$. Then, we apply PCA on
$\bD_{\bg}$ and obtain the best rank-$5$ approximation, which we denote as $\bD_{\bg,5}$. Finally, we construct $\bB$ by taking its columns to be the left singular vectors of $\bD_{\bg,5}$. Figure \ref{fig:figure3}(a) depicts the waveforms of the columns of $\bB$. We note that $\bD_{\bg,5}$ gives a very good approximation to $\bD_{\bg}$ due to the sharp decay of the singular values of $\bD_{\bg}$. In particular, $\norm[F]{\bD_{\bg}-\bD_{\bg,5}}/\norm[F]{\bD_{\bg}} = 1.9860\times 10^{-8}$. We generate each $\bg_j(n)$  by choosing its variance uniformly at random between $0.1$ and $1$, and then we build the measurements $\{\by(n)\}$. In particular, we note that $\{\bg_j\}$ do not necessarily belong to the columns of $\bD_{\bg}$. Finally, we solve (\ref{eq:mainprog}) using CVX. Figure \ref{fig:figure3}(b) shows the $\ell_2$-norm of the dual polynomial, namely, $\norm[2]{\bq(\tau)}$, on a discretized set of points on [0,1] with discretization step size $10^{-4}$. We can see that one can still localize $\{\tau_j\}$ even when there is model mismatch between the subspace spanned by the columns of $\bB$ and $\bg_j(n)$.

\section{Proof of Theorem~\ref{mainthm}}
\label{sec:detailedproofofmainthm}
In this section, we prove Theorem \ref{mainthm}, the main result of this paper. We divide the proof into three subsections. In Subsection \ref{subsec:subsection1}, we construct the pre-certificate dual polynomial $\bq(\tau)$. In Subsections \ref{subsec: subsection2} and \ref{subsec:subsection3}, we show that our constructed dual polynomial $\bq(\tau)$ satisfies conditions (\ref{cond1}) and (\ref{cond2}), respectively.
\subsection{Construction of the Dual Polynomial $\bq(\tau)$}
\label{subsec:subsection1}
According to Proposition \ref{optimalitycond}, our goal is to explicitly construct a dual polynomial $\bq(\tau)$ such that conditions (\ref{cond1}) and (\ref{cond2}) hold. To proceed, we require that for all $\tau_j \in \mathbb{D}$
\begin{equation}
\label{subcond1}
\bq(\tau_j) = \sign (c_j) \bh_j,
\end{equation}
\begin{equation}
\label{subcond2}
-\bq'(\tau_j) = \bzero_{K\times 1}.
\end{equation}
Note that (\ref{subcond1}) is exactly the same as (\ref{cond1}), while (\ref{subcond1}) and (\ref{subcond2}) together form a necessary condition for (\ref{cond2}) to hold.
We construct the dual polynomial in (\ref{eq:cond}) by finding a proper $\blambda$ from solving the following weighted least energy minimization program with diagonal weighting matrix $\bW = \diag\left(\begin{bmatrix}w_{-2M} & \cdots &  w_{2M} \end{bmatrix}\right)$ to be determined later, $w_n>0, -2M \leq n \leq 2M$:
\begin{equation}
\label{eq:weightedleastnorm}
\begin{aligned}
& \mathop{\text{ minimize}}\limits_{\blambda}~~~~\norm[2]{\bW \blambda}^2   \\
&  {\text{ subject~to}}~~~\bq(\tau_j) = \sign (c_j) \bh_j,~~j = 1, \cdots, J,\\
&~~~~~~~~~~~~~~~~~ -\bq'(\tau_j) = \bzero_{K\times 1},~~j = 1, \cdots, J.
\end{aligned}
\end{equation}
We can see that the equality constraints in (\ref{eq:weightedleastnorm}) can be written as a linear system of equations (\ref{eq:largeeqation}). 
\begin{figure*}[!t]
\normalsize
\begin{equation}
\begin{aligned}
\label{eq:largeeqation}
\begin{bmatrix}
\bb_{-2M} \be_{-2M}^H \ba(\tau_1) & \cdots  & \bb_{2M} \be_{2M}^H \ba(\tau_1) \\
\vdots & \ddots & \vdots \\
\bb_{-2M} \be_{-2M}^H \ba(\tau_J) & \cdots  & \bb_{2M} \be_{2M}^H \ba(\tau_J) \\
-i2\pi (-2M)\bb_{-2M} \be_{-2M}^H \ba(\tau_1) & \cdots  & -i2\pi (2M)\bb_{2M} \be_{2M}^H \ba(\tau_1) \\
\vdots & \ddots & \vdots \\
-i2\pi (-2M)\bb_{-2M} \be_{-2M}^H \ba(\tau_J) & \cdots  & -i2\pi (2M)\bb_{2M} \be_{2M}^H \ba(\tau_J) \\
\end{bmatrix}
 \begin{bmatrix} \blambda(-2M) \\
\vdots \\
 \blambda(2M)
\end{bmatrix} = \begin{bmatrix} \sign(c_1)\bh_1 \\
\vdots \\
\sign(c_J) \bh_J \\
\bzero_{K\times 1} \\
\vdots \\
\bzero_{K\times 1} \end{bmatrix}.
\end{aligned}
\end{equation}
\hrulefill
\end{figure*}
For notational simplicity, we denote equation (\ref{eq:largeeqation}) as
\begin{equation*}
\begin{aligned}
\bA \blambda = \bp.
\end{aligned}
\end{equation*}
The optimality condition ensures that the optimal $\blambda$ has the form
\begin{equation*}
\begin{aligned}
\blambda & = \left(\bW^H \bW\right)^{-1} \bA^H \begin{bmatrix} \balpha \\
\bbeta \end{bmatrix} \\
& = \left(\bW^H \bW\right)^{-1} \left( \sum_{j = 1}^J \begin{bmatrix} \ba(\tau_j)^H \be_{-2M}\bb_{-2M}^H \\
\vdots \\
\ba(\tau_j)^H \be_{2M}\bb_{2M}^H\end{bmatrix} \balpha_j +  \sum_{j = 1}^J  \begin{bmatrix} i2\pi (-2M) \ba(\tau_j)^H \be_{-2M}\bb_{-2M}^H \\
\vdots \\
i2\pi (2M)\ba(\tau_j)^H \be_{2M}\bb_{2M}^H\end{bmatrix} \bbeta_j\right),
\end{aligned}
\end{equation*}
for some
\begin{equation*}
\begin{aligned}
\begin{bmatrix} \balpha \\
\bbeta \end{bmatrix} & = \begin{bmatrix}
\balpha_1^H &
\cdots &
\balpha_J^H &
\bbeta_1^H &
\cdots &
\bbeta_J^H
\end{bmatrix}^H, \balpha_j, \bbeta_j \in \mathbb{C}^{K\times 1}
\end{aligned}
\end{equation*}
satisfying the normal equation
\begin{equation*}
 \bA \left(\bW^H \bW\right)^{-1} \bA^H \begin{bmatrix} \balpha \\
\bbeta \end{bmatrix} = \bp.
\end{equation*}
Inserting $\blambda$ back into (\ref{eq:cond}), we obtain the dual polynomial
\begin{equation*}
\begin{aligned}
\bq(\tau) & = \sum_{n = -2M}^{2M} \blambda(n) e^{i2\pi n \tau} \bb_n  \\
& = \sum_{j = 1}^J \left(\sum_{n = -2M}^{2M} \frac{1}{w_n^2} e^{-i2\pi n\tau_j } \bb_n^H \balpha_j e^{i2\pi n\tau} \bb_n +  \sum_{n = -2M}^{2M} \frac{i2\pi n}{w_n^2} e^{-i2\pi n\tau_j } \bb_n^H \bbeta_j e^{i2\pi n\tau} \bb_n\right) \\
& = \sum_{j = 1}^J  \left(\sum_{n = -2M}^{2M} \frac{1}{w_n^2} e^{i2\pi n (\tau-\tau_j)} \bb_n \bb_n^H \balpha_j  +  \sum_{n = -2M}^{2M} \frac{i2\pi n}{w_n^2} e^{i2\pi n(\tau-\tau_j)} \bb_n \bb_n^H \bbeta_j  \right)\\
& = \sum_{j = 1}^J \bK_M (\tau - \tau_j) \balpha_j + \sum_{j =1}^J \bK_{M}'(\tau-\tau_j) \bbeta_j,
\end{aligned}
\end{equation*}
where
\begin{equation*}
\bK_M(\tau) =\sum_{n = -2M}^{2M} \frac{1}{w_n^2} e^{i2\pi n\tau} \bb_n \bb_n^H
\end{equation*}
and
\begin{equation*}
\bK'_M(\tau) = \sum_{n = -2M}^{2M} \frac{i2\pi n}{w_n^2} e^{i2\pi n\tau} \bb_n \bb_n^H,
\end{equation*}
which is the entry-wise derivative of $\bK_M(\tau)$ with respect to $\tau$. In the following, we choose $w_n = \sqrt{\frac{M}{g_M(n)}}$, where $g_M(n)$ is the convolution of two triangle functions given below:
\begin{equation*}
\begin{aligned}
g_M(n)
&  =  \frac{1}{M} \sum_{k = \max{\left\{n-M,-M\right\}}}^{\min{\left\{n+M,M\right\}}} \left( 1-\frac{|k|}{M} \right)\left( 1- \frac{|n-k|}{M}\right).
\end{aligned}
\end{equation*}
This particular choice of weights ensures that
\begin{equation*}
\begin{aligned}
\sum_{n = -2M}^{2M}\frac{1}{w_n^2}e^{i2\pi n\tau} &  = \frac{1}{M} \sum_{n= -2M}^{2M} g_M(n) e^{i2\pi n\tau} \\
& = \left[\frac{\sin (\pi M\tau)}{M \sin(\pi \tau)}\right]^4 \\
& =: K_M(\tau),
\end{aligned}
\end{equation*}
where $K_M(\tau)$ is known as the squared Fej{\'e}r kernel. Consequently, $\bK_M(\tau), \bK_M'(\tau) \in \complex^{K\times K}$ are the random matrix kernels produced from the squared Fej{\'e}r kernel. They have the form
\begin{equation*}
\begin{aligned}
\bK_M(\tau)
& =  \frac{1}{M} \sum_{n = -2M}^{2M} g_M(n)e^{i2\pi n\tau}  \bb_n \bb_n^H
\end{aligned}
\end{equation*}
and
\begin{equation*}
\begin{aligned}
\bK_M'(\tau)
& =  \frac{1}{M} \sum_{n = -2M}^{2M} (i2\pi n) g_M(n)e^{i2\pi n\tau} \bb_n \bb_n^H.
\end{aligned}
\end{equation*}
Denoting $\bK_M^{\ell}(\tau)$ as the $\ell$th entry-wise derivative of $\bK_M(\tau)$, we have
\begin{equation*}
\begin{aligned}
\expval \bK_M^{\ell}(\tau) & = \frac{1}{M} \sum_{n = -2M}^{2M} (i2\pi n)^{\ell} g_M(n)e^{i2\pi n\tau} \expval\left[ \bb_n \bb_n^H \right] \\
& = K_M^{\ell}(\tau)\bI_K,
\end{aligned}
\end{equation*}
where the second equation comes from the isotropy property of $\bb_n$.

We define
\begin{equation}
\label{eq:MMVkernel}
\begin{aligned}
\overline{\bq}^{\ell}(\tau) & = \sum_{j = 1}^J \left[ \expval \bK_M^{\ell}(\tau-\tau_j) \right] \overline{\balpha}_j + \sum_{j = 1}^J \left[ \expval \bK_M^{\ell+1} (\tau-\tau_j)\right] \overline{\bbeta}_j \\
&  = \sum_{j = 1}^J K_M^{\ell}(\tau-\tau_j) \overline{\balpha}_j + \sum_{j = 1}^J K_M^{\ell+1}(\tau-\tau_j) \overline{\bbeta}_j,
\end{aligned}
\end{equation}
where $\overline{\balpha}_j$ and $\overline{\bbeta}_j$ are solutions of the following linear system of equations:
\begin{equation*}
\begin{aligned}
\overline{\bq}(\tau_s) & = \sum_{j =1}^J  K_M(\tau_s -\tau_j) \overline{\balpha}_j + \sum_{j =1}^J  K_M'(\tau_s-\tau_j) \overline{\bbeta}_j \\
& = \sign(c_s) \bh_s,~~~~\tau_s \in \mathbb{D},
\end{aligned}
\end{equation*}
\begin{equation*}
\begin{aligned}
- \overline{\bq}'(\tau_s) & = -\left(\sum_{j =1}^J  K'_M(\tau_s -\tau_j) \overline{\balpha}_j + \sum_{j =1}^J  K_M''(\tau_s-\tau_j) \overline{\bbeta}_j\right) \\
& = \bzero_{K\times 1},~~~~\tau_s\in \mathbb{D}.
\end{aligned}
\end{equation*}
We note that $\overline{\bq}(\tau)$ is the vector dual polynomial constructed in  \cite{yang2014exact} for the full data multiple measurement vectors problem. Thus, the vector dual polynomial $\bq(\tau)$ constructed in our work is a random version of $\overline{\bq}(\tau)$, with the randomness introduced by $\bb_n$.
\subsection{Showing (\ref{cond1})}
\label{subsec: subsection2}
To show (\ref{cond1}), our effort becomes to explicitly find $\balpha_j$ and $\bbeta_j$ such that (\ref{cond1}) holds. To accomplish this, for all $\tau_s \in \mathbb{D}$, we plug the form of $\bq(\tau)$ back into (\ref{subcond1}) and (\ref{subcond2}). Thus, we can write
\begin{equation*}
\begin{aligned}
\bq(\tau_s) & = \sum_{j =1}^J  \bK_M(\tau_s -\tau_j) \balpha_j + \sum_{j =1}^J  \bK_M'(\tau_s-\tau_j) \bbeta_j \\
& = \sign(c_s) \bh_s,\\
- \bq'(\tau_s) & = - \left(\sum_{j =1}^J \bK_M'(\tau_s -\tau_j) \balpha_j  + \sum_{j =1}^J  \bK_M''(\tau_s-\tau_j) \bbeta_j\right)\\
& = \bzero_{K\times 1}.\\
\end{aligned}
\end{equation*}
Equivalently, we have the following linear system of equations
\begin{equation*}
\begin{aligned}
& \begin{bmatrix} \bD_0 & \frac{1}{\sqrt{|K_M^{''}(0)}|} \bD_1\\
-\frac{1}{\sqrt{|K_M^{''}(0)|}} \bD_1 & -\frac{1}{|K_M^{''}(0)|}\bD_2\end{bmatrix}\begin{bmatrix}\balpha \\ \sqrt{|K_M^{''}(0)|}\bbeta \end{bmatrix} = \begin{bmatrix} \bh \\ \bzero_{JK\times 1}\end{bmatrix},
\end{aligned}
\end{equation*}
where $\bh = \begin{bmatrix} (\sign(c_1)\bh_1)^H &  \dots & (\sign(c_J)\bh_J)^H \end{bmatrix}^H$,
$[\bD_{\ell}]_{sj} = \bK_M^{\ell}(\tau_s-\tau_j)$, and $K_M^{''}(0) = -\frac{4\pi^2 (M^2-1)}{3}$.

Denote
\begin{equation*}
\begin{aligned}
\bD & = \begin{bmatrix} \bD_0 & \frac{1}{\sqrt{|K_M^{''}(0)}|} \bD_1\\
-\frac{1}{\sqrt{|K_M^{''}(0)|}} \bD_1 & -\frac{1}{|K_M^{''}(0)|}\bD_2\end{bmatrix} \in \mathbb{C}^{2JK \times 2JK}.
\end{aligned}
\end{equation*}
Thus, as long as $\bD$ is invertible, one can obtain $\balpha_j$ and $\bbeta_j$.
To show the invertibility of $\bD$, we first show that $\expval \bD$ is invertible under the minimum separation condition in Lemma \ref{le:lemma1}. Then, in Lemma \ref{lem:lemma5.4}, we verify that $\bD$ is invertible with high probability after arguing that $\bD$ is close to $\expval \bD$ given enough measurements in Lemma \ref{lem:matrixbernstein}. Defining
\begin{equation*}
\bE(n) = \begin{bmatrix}
e^{i2\pi n \tau_1}\\
\vdots \\
e^{i2\pi n \tau_J}\\
\frac{-i2\pi n}{\sqrt{|K_M^{''}(0)|}} e^{i2\pi n\tau_1}\\
\vdots \\
\frac{-i2\pi n}{\sqrt{|K_M^{''}(0)|}} e^{i2\pi n \tau_J}
\end{bmatrix},
\end{equation*}
we have
\begin{equation*}
\bD  = \frac{1}{M} \sum_{n = -2M}^{2M} \left( \left( g_M(n) \bE(n) \bE(n)^H \right) \otimes \left(\bb_n \bb_n^H\right) \right)
\end{equation*}
and
\begin{equation*}
\begin{aligned}
\mathbb{E} \bD & = \frac{1}{M} \sum_{n = -2M}^{2M} \left( \left( g_M(n) \bE(n) \bE(n)^H \right) \otimes \bI_K \right) \\
& =  \bD'\otimes \bI_K,
\end{aligned}
\end{equation*}
where
\begin{equation*}
\begin{aligned}
\bD' & = \begin{bmatrix} \bD'_0 & \frac{1}{\sqrt{|K_M^{''}(0)}|} \bD'_1\\
-\frac{1}{\sqrt{|K_M^{''}(0)|}} \bD'_1 & -\frac{1}{|K_M^{''}(0)|}\bD'_2\end{bmatrix}\in \mathbb{C}^{2J \times 2J}
\end{aligned}
\end{equation*}
with $[\bD'_{\ell}]_{sj} = K_M^{\ell}(\tau_s-\tau_j)$.
\begin{lemma}
\label{le:lemma1}
Suppose that $\Delta_{\tau} \geq \frac{1}{M}$. Then, $\expval \bD$ is invertible and
\begin{equation*}
\norm[]{\bI_{2JK} - \expval \bD} \leq 0.3623,
\end{equation*}
\begin{equation*}
\norm[]{\expval \bD} \leq 1.3623,
\end{equation*}
\begin{equation*}
\norm[]{\left(\expval \bD\right)^{-1}} \leq 1.568.
\end{equation*}
\end{lemma}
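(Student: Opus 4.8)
The plan is to reduce the three estimates to statements about a single $2J \times 2J$ scalar matrix and then close the argument with a Neumann-series / Gershgorin estimate. First I would exploit the Kronecker structure $\expval\bD = \bD' \otimes \bI_K$: since the spectral norm satisfies $\norm[]{A \otimes \bI_K} = \norm[]{A}$, and $(A \otimes \bI_K)^{-1} = A^{-1} \otimes \bI_K$ whenever $A$ is invertible, all three displayed bounds follow from the corresponding facts about $\bD'$ alone, i.e.\ $\norm[]{\bI_{2J} - \bD'} \leq 0.3623$, invertibility of $\bD'$, and $\norm[]{(\bD')^{-1}} \leq 1.568$. Moreover, because the squared Fej\'er kernel $K_M$ and its second derivative $K_M''$ are real and even while $K_M'$ is real and odd, $\bD'_0$ and $\bD'_2$ are real symmetric and $\bD'_1$ is real antisymmetric, so the block matrix $\bD'$ is itself real symmetric. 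Hence it suffices to bound $\norm[]{\bI_{2J} - \bD'}$: once this is below $1$, the Neumann series gives invertibility of $\bD'$ (and therefore of $\expval\bD$), the bound $\norm[]{\bD'} \leq 1 + 0.3623 = 1.3623$, and $\norm[]{(\bD')^{-1}} \leq (1 - 0.3623)^{-1} < 1.568$.

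Next I would bound $\norm[]{\bI_{2J} - \bD'}$ by a maximum-absolute-row-sum (Schur/Gershgorin) estimate, which is legitimate since $\bI_{2J} - \bD'$ is real symmetric. Its diagonal vanishes: $K_M(0) = 1$, $K_M'(0) = 0$, and $-K_M''(0)/|K_M''(0)| = 1$ because $K_M''(0) = -\tfrac{4\pi^2(M^2-1)}{3} < 0$. For a row associated with a spike $\tau_s$, every off-diagonal entry is, up to sign, one of $|K_M(\tau_s - \tau_j)|$, $\tfrac{1}{\sqrt{|K_M''(0)|}}|K_M'(\tau_s - \tau_j)|$, or $\tfrac{1}{|K_M''(0)|}|K_M''(\tau_s - \tau_j)|$ with $|\tau_s - \tau_j| \geq \Delta_\tau \geq \tfrac1M$ (wrap-around). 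For these I would invoke the standard pointwise decay bounds for the squared Fej\'er kernel and its first two normalized derivatives on the range $[\tfrac1M, \tfrac12]$ established in \cite{candes2014towards} and reused in \cite{yang2014exact}, together with the monotone behavior of these kernels away from the origin.

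The final step is the summation. Using $\Delta_\tau \geq 1/M$ to bound the number of remaining spikes in each distance band $[\ell/M, (\ell+1)/M)$ around $\tau_s$ by an absolute constant, the decay estimates yield, for each of the three kernels, a rapidly convergent series whose combined total over the off-diagonal entries of a row is at most $0.3623$; this is exactly the computation carried out for the full-data MMV dual certificate in \cite{yang2014exact} (and, in the single-measurement case, \cite{candes2014towards}), re-tabulated for the separation $\Delta_\tau = 1/M$. I expect this numerical bookkeeping to be the main obstacle: the target constant $0.3623$ is tight enough that the first few near-diagonal off-diagonal terms must be estimated precisely rather than crudely, the tails must be controlled via monotonicity, and one must check that the mild lower bound on $M$ available in our setting (we assume $M \geq 64$ in Theorem~\ref{mainthm}) is enough to make every constant come out as claimed. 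Given $\norm[]{\bI_{2J} - \bD'} \leq 0.3623$, the three conclusions of the lemma are immediate from the Neumann-series estimates and the Kronecker reduction of the first paragraph.
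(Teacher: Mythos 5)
Your approach is essentially the same as the paper's: reduce to the scalar matrix $\bD'$ via the Kronecker norm identity $\norm[]{A\otimes\bI_K}=\norm[]{A}$, then invoke the Fej\'er-kernel row-sum estimate on $\norm[]{\bI_{2J}-\bD'}$ from \cite{candes2014towards} (the paper simply cites this as a standing lemma rather than re-deriving it, and you correctly note that \cite{yang2014exact} redoes the same tabulation). Your observation that $\bD'$ is real symmetric, so that a Gershgorin row-sum bound controls the spectral norm, is also the implicit justification used in those references.

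One small numerical slip worth flagging: you claim $\norm[]{(\bD')^{-1}}\leq(1-0.3623)^{-1}<1.568$, but $(1-0.3623)^{-1}=1/0.6377\approx 1.5681>1.568$, so the Neumann-series bound applied literally to the rounded constant $0.3623$ does not deliver $1.568$. The constant $1.568$ quoted in \cite{candes2014towards} comes from carrying their sharper intermediate bound on the row sum through the Neumann inequality before rounding; the paper sidesteps this by citing all three $\bD'$ estimates at once rather than chaining the rounded constants. If you want a self-contained derivation you would need to retain the unrounded row-sum bound, or simply state the slightly weaker $\norm[]{(\bD')^{-1}}\leq 1.569$, which is harmless downstream.
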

The proof of Lemma \ref{le:lemma1} is given in Appendix \ref{sec:prooflemma1}. We denote $\bD'^{-1} = \begin{bmatrix} \bL' & \bR'\end{bmatrix}$, where $\bL'\in \mathbb{C}^{2J\times J}$ and $\bR' \in \mathbb{C}^{2J\times J}$.
To have a concentration of measure result for $\bD$ as given in the following lemma, we note that
\begin{equation*}
\begin{aligned}
&  \bD - \mathbb{E} \bD =  \frac{1}{M} \sum_{n = -2M}^{2M} \left( \left( g_M(n) \bE(n) \bE(n)^H \right) \otimes \left( \bb_n \bb_n^H - \bI_{K}\right)\right),
\end{aligned}
\end{equation*}
which is a sum of independent matrices of zero mean. It is derived in Lemma 2 of  \cite{chi2015guaranteed}.
\begin{lemma}
\label{lem:matrixbernstein}
 \cite{chi2015guaranteed} For any $\varepsilon_1 \in (0,0.6376), 0<\delta_1<1$, if
$M \geq \frac{80 \mu JK}{\varepsilon_1^2} \log\left( \frac{4JK}{\delta_1}\right)$ and $\Delta_{\tau}\geq \frac{1}{M}$,
then $\norm[]{\bD-\expval \bD} \leq \varepsilon_1$ with probability at least $1-\delta_1$.
\end{lemma}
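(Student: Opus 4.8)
The plan is to recognize the lemma as a direct application of the two-sided matrix Bernstein inequality to the zero-mean, independent decomposition displayed immediately above its statement, namely $\bD-\expval\bD=\sum_{n=-2M}^{2M}\bZ_n$ with $\bZ_n:=\tfrac{1}{M}\bigl(g_M(n)\bE(n)\bE(n)^H\bigr)\otimes\bigl(\bb_n\bb_n^H-\bI_K\bigr)$. First I would record the structural facts that make Bernstein applicable: the $\bZ_n$ are independent (since the $\bb_n$ are), are Hermitian matrices of size $2JK\times2JK$, and satisfy $\expval\bZ_n=\bzero$ by the isotropy property $\expval\bb_n\bb_n^H=\bI_K$. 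Thus the only quantities to control are a uniform bound $R\ge\norm[]{\bZ_n}$ and a variance proxy $\sigma^2\ge\norm[]{\sum_n\expval\bZ_n^2}$; Bernstein then yields $\prob{\norm[]{\bD-\expval\bD}\ge\varepsilon_1}\le 2\cdot(2JK)\exp\!\bigl(\tfrac{-\varepsilon_1^2/2}{\sigma^2+R\varepsilon_1/3}\bigr)$, and it is precisely the dimensional factor $2\cdot(2JK)=4JK$ of the two-sided Hermitian bound that produces the $\log(4JK/\delta_1)$ in the claimed sample bound.

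For $R$, I would use $\norm[]{\bZ_n}=\tfrac1M g_M(n)\,\norm[]{\bE(n)}^2\,\norm[]{\bb_n\bb_n^H-\bI_K}$ and bound the three factors separately: $g_M(n)\le g_M(0)\le1$ (the weights $\tfrac1M g_M(n)$ are a normalized autocorrelation of a triangle, hence maximized at the origin, and $g_M(0)\le1$ since $\sum_k(1-|k|/M)^2\le\sum_k(1-|k|/M)=M$); $\norm[]{\bb_n\bb_n^H-\bI_K}\le\max\{\norm[2]{\bb_n}^2-1,\,1\}\le\mu K$ by the incoherence property and $\mu K\ge1$; and, using $|K_M''(0)|=\tfrac{4\pi^2(M^2-1)}{3}$ together with $|n|\le2M$, $\norm[]{\bE(n)}^2=\sum_{j=1}^J\bigl(1+\tfrac{4\pi^2n^2}{|K_M''(0)|}\bigr)=J\bigl(1+\tfrac{3n^2}{M^2-1}\bigr)\le J\bigl(1+\tfrac{12M^2}{M^2-1}\bigr)\le17J$ (the last step valid once $M\ge2$). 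This gives $R=\tfrac{17\mu JK}{M}$. For $\sigma^2$, since $\bZ_n$ is Hermitian I would compute $\bZ_n^2=\tfrac1{M^2}g_M(n)^2\norm[]{\bE(n)}^2\bigl(\bE(n)\bE(n)^H\bigr)\otimes\bigl(\bb_n\bb_n^H-\bI_K\bigr)^2$ using $\bigl(\bE(n)\bE(n)^H\bigr)^2=\norm[]{\bE(n)}^2\bE(n)\bE(n)^H$, then take the expectation in $\bb_n$ via $\expval\bigl(\bb_n\bb_n^H-\bI_K\bigr)^2=\expval\!\bigl[\norm[2]{\bb_n}^2\bb_n\bb_n^H\bigr]-\bI_K\preceq\mu K\,\bI_K$. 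Summing, and bounding $g_M(n)^2\le g_M(n)$ and $\norm[]{\bE(n)}^2\le17J$, collapses the sum back to $\bD'$: $\sum_n\expval\bZ_n^2\preceq\tfrac{17\mu JK}{M^2}\bigl(\sum_n g_M(n)\bE(n)\bE(n)^H\bigr)\otimes\bI_K=\tfrac{17\mu JK}{M}\bigl(\bD'\otimes\bI_K\bigr)$, and Lemma~\ref{le:lemma1} gives $\norm[]{\bD'\otimes\bI_K}=\norm[]{\expval\bD}\le1.3623$, so I can take $\sigma^2=\tfrac{24\mu JK}{M}$.

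It then remains to choose $M$ large enough that the Bernstein exponent exceeds $\log(4JK/\delta_1)$, i.e.\ $M\ge\tfrac{2M(\sigma^2+R\varepsilon_1/3)}{\varepsilon_1^2}\log(4JK/\delta_1)$. Since $\varepsilon_1<0.6376$ we have $M\sigma^2\le24\mu JK$ and $MR\varepsilon_1/3\le\tfrac{17\cdot0.6376}{3}\mu JK\le3.62\mu JK$, whence $M(\sigma^2+R\varepsilon_1/3)\le28\mu JK<40\mu JK$, and therefore the stated hypothesis $M\ge\tfrac{80\mu JK}{\varepsilon_1^2}\log(4JK/\delta_1)$ is sufficient to force the tail probability below $\delta_1$.

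There is no genuine conceptual obstacle here — the result is essentially Lemma~2 of \cite{chi2015guaranteed} — so the only ``hard part'' is the bookkeeping: keeping the numerical constants honest so that they close at the advertised $80$ and $4JK$, and, in particular, making sure the variance sum is genuinely reduced to $M\bD'$ (so that Lemma~\ref{le:lemma1} applies with its sharp constant $1.3623$) rather than to a cruder bound that would inflate the constant. The one hypothesis that must be spent is the elementary estimate $\tfrac{12M^2}{M^2-1}\le16$ used in the bound on $\norm[]{\bE(n)}^2$, for which any mild lower bound on $M$ — e.g.\ the $M\ge64$ already assumed in Theorem~\ref{mainthm}, under which this improves to roughly $\norm[]{\bE(n)}^2\le13J$ — is more than enough.
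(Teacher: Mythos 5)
Your reconstruction is correct and is, in substance, the argument behind Lemma~2 of \cite{chi2015guaranteed}, which the present paper cites without reproducing. The structural observations (independence and zero mean of the summands, Hermitian size $2JK$ yielding the dimensional factor $4JK$), the two Bernstein inputs $R=\frac{17\mu JK}{M}$ and $\sigma^2\le\frac{17\mu JK}{M}\norm[]{\bD'\otimes\bI_K}\le\frac{24\mu JK}{M}$, and the closing arithmetic at the stated constant $80$ are all sound. You also correctly single out the one step that actually matters: the variance sum collapses exactly to $\frac{17\mu JK}{M}\,\bD'\otimes\bI_K$, so that Lemma~\ref{le:lemma1} supplies the sharp bound $1.3623$; a direct estimate of $\bigl\|\sum_n g_M(n)\bE(n)\bE(n)^H\bigr\|$ without this reduction would not close at $80$. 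The only cosmetic discrepancy is that the paper's own Appendix~\ref{operatornormboundonV} (for the analogous concentration of $\bV_\ell$) uses $\norm[2]{\bE(n)}^2\le 14J$ under $M\ge4$ where you use $17J$ under $M\ge2$; both are harmless, since the lemma's hypothesis together with $\mu JK\ge1$ and $\varepsilon_1<0.6376$ already forces $M$ into the hundreds.
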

\begin{lemma}
\label{lem:lemma5.4}
Under the assumptions of Lemma \ref{lem:matrixbernstein}, $\bD$ is invertible with probability at least $1-\delta_1$.
\end{lemma}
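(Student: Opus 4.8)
The plan is to deduce invertibility of $\bD$ from a simple perturbation argument that combines the deterministic bounds of Lemma \ref{le:lemma1} with the concentration estimate of Lemma \ref{lem:matrixbernstein}. By Lemma \ref{le:lemma1}, $\expval \bD$ is invertible with $\norm[]{(\expval \bD)^{-1}} \leq 1.568$. By Lemma \ref{lem:matrixbernstein}, under the stated sample complexity hypothesis, $\norm[]{\bD - \expval \bD} \leq \varepsilon_1$ with probability at least $1-\delta_1$, where $\varepsilon_1$ can be taken in the range $(0, 0.6376)$. First I would write $\bD = \expval\bD + (\bD - \expval\bD) = (\expval\bD)\left(\bI_{2JK} + (\expval\bD)^{-1}(\bD - \expval\bD)\right)$, so that $\bD$ is invertible whenever the second factor is, i.e. whenever $\norm[]{(\expval\bD)^{-1}(\bD - \expval\bD)} < 1$.

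The key step is the submultiplicative estimate
\begin{equation*}
\norm[]{(\expval\bD)^{-1}(\bD - \expval\bD)} \leq \norm[]{(\expval\bD)^{-1}} \cdot \norm[]{\bD - \expval\bD} \leq 1.568\,\varepsilon_1,
\end{equation*}
which is strictly less than $1$ as soon as $\varepsilon_1 < 1/1.568 \approx 0.6377$. Since Lemma \ref{lem:matrixbernstein} permits any $\varepsilon_1 \in (0, 0.6376)$, choosing such an $\varepsilon_1$ (for concreteness one may fix $\varepsilon_1$ to a convenient value like $1/2$) forces $1.568\,\varepsilon_1 < 1$ on the event where the concentration bound holds. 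On that event the Neumann series $\sum_{k\geq 0}\left(-(\expval\bD)^{-1}(\bD-\expval\bD)\right)^k$ converges, giving an explicit inverse of $\bI_{2JK} + (\expval\bD)^{-1}(\bD-\expval\bD)$, hence $\bD^{-1} = \left(\bI_{2JK} + (\expval\bD)^{-1}(\bD-\expval\bD)\right)^{-1}(\expval\bD)^{-1}$ exists. Since the concentration event has probability at least $1-\delta_1$, the conclusion follows.

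There is essentially no obstacle here: the lemma is an immediate corollary of the two preceding lemmas via the standard fact that a small operator-norm perturbation of an invertible matrix is invertible. The only thing to be careful about is the bookkeeping of constants — verifying that the admissible range $(0, 0.6376)$ for $\varepsilon_1$ in Lemma \ref{lem:matrixbernstein} is compatible with the threshold $1/\norm[]{(\expval\bD)^{-1}} \leq 1/1.568$ coming from Lemma \ref{le:lemma1} — and this is exactly why the numerical bound $0.6376$ was chosen in the statement of Lemma \ref{lem:matrixbernstein}. If one wants quantitative control for later use, the same argument also yields $\norm[]{\bD^{-1}} \leq \norm[]{(\expval\bD)^{-1}}/(1 - 1.568\,\varepsilon_1)$, which will be convenient when bounding $\balpha_j$ and $\bbeta_j$ in the subsequent steps of the proof of Theorem \ref{mainthm}.
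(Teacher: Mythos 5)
Your proof is correct, and it takes a slightly different route from the paper's. The paper argues more directly: it bounds $\norm[]{\bI_{2JK} - \bD} \leq \norm[]{\bI_{2JK} - \expval \bD} + \norm[]{\expval \bD - \bD} \leq 0.3623 + \varepsilon_1 < 1$ using the triangle inequality and the first estimate of Lemma~\ref{le:lemma1}, and then concludes invertibility from the Neumann series for $\bD = \bI_{2JK} - (\bI_{2JK} - \bD)$. You instead factor $\bD = (\expval\bD)\bigl(\bI_{2JK} + (\expval\bD)^{-1}(\bD - \expval\bD)\bigr)$ and invoke the third estimate $\norm[]{(\expval\bD)^{-1}}\leq 1.568$ together with submultiplicativity. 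The two thresholds on $\varepsilon_1$ agree up to rounding ($1 - 0.3623 = 0.6377$ versus $1/1.568 \approx 0.6378$), which is no accident, since the paper's bound $\norm[]{(\expval\bD)^{-1}}\leq 1.568$ is itself derived from $\norm[]{\bI_{2J}-\bD'}\leq 0.3623$ via a Neumann estimate; the two arguments are essentially equivalent. What your version buys is that it hands you for free the explicit inverse $\bD^{-1} = \bigl(\bI_{2JK} + (\expval\bD)^{-1}(\bD-\expval\bD)\bigr)^{-1}(\expval\bD)^{-1}$ and the resolvent-type bound $\norm[]{\bD^{-1}} \leq \norm[]{(\expval\bD)^{-1}}/(1-1.568\,\varepsilon_1)$; the paper opts for the shorter triangle-inequality proof here and recovers the quantitative control on $\norm[]{\bD^{-1}}$ and $\norm[]{\bD^{-1}-(\expval\bD)^{-1}}$ separately in the subsequent Lemma~\ref{lem:inverseopnorm}, where it restricts to $\varepsilon_1\in(0,\tfrac14]$ to get clean constants.
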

\begin{proof}
\begin{equation*}
\begin{aligned}
\norm[]{\bI_{2JK} - \bD}  & \leq \norm[]{\bI_{2JK} - \expval \bD} + \norm[]{\expval \bD - \bD} \\
& \leq 0.3623 + \varepsilon_1 \\
& <1,
\end{aligned}
\end{equation*}
where the first inequality uses the triangle inequality, and the second inequality follows from Lemmas \ref{le:lemma1} and \ref{lem:matrixbernstein}.
\end{proof}
For $\varepsilon_1>0$, we define the event $\cE_{1, \varepsilon_1} = \left\{ \norm[]{\bD-\expval \bD} \leq \varepsilon_1\right\}$. The following lemma will also be useful later in our analysis.
\begin{lemma}
\label{lem:inverseopnorm}
\cite{chi2015guaranteed,tang2013compressed} Under the event $\cE_{1, \varepsilon_1}$ with $\varepsilon_1 \in \left(0,\frac{1}{4}\right]$, we have
\begin{equation*}
\norm[]{\bD^{-1} - (\expval \bD)^{-1}} \leq 2 \norm[]{(\expval \bD)^{-1}}^2 \varepsilon_1,
\end{equation*}
and
\begin{equation*}
\norm[]{\bD^{-1}} \leq 2 \norm[]{(\expval \bD)^{-1}}.
\end{equation*}
\end{lemma}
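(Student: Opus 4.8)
The plan is to treat $\bD$ as a small perturbation of $\expval\bD$ and run the standard matrix-inverse perturbation argument; on the event $\cE_{1,\varepsilon_1}$ the whole computation is deterministic. Writing $\bZ \defby \bD - \expval\bD$, on $\cE_{1,\varepsilon_1}$ we have $\norm[]{\bZ} \le \varepsilon_1$, and since $\expval\bD$ is invertible by Lemma~\ref{le:lemma1} we may factor $\bD = \expval\bD\bigl(\bI_{2JK} + (\expval\bD)^{-1}\bZ\bigr)$.

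First I would bound $\norm[]{\bD^{-1}}$. By Lemma~\ref{le:lemma1}, $\norm[]{(\expval\bD)^{-1}} \le 1.568$, so for $\varepsilon_1 \le \tfrac14$ we get $\norm[]{(\expval\bD)^{-1}\bZ} \le \norm[]{(\expval\bD)^{-1}}\,\varepsilon_1 \le 1.568\cdot\tfrac14 < \tfrac12$. Hence $\bI_{2JK} + (\expval\bD)^{-1}\bZ$ is invertible and, by the Neumann series, $\norm[]{\bigl(\bI_{2JK} + (\expval\bD)^{-1}\bZ\bigr)^{-1}} \le \bigl(1 - \norm[]{(\expval\bD)^{-1}\bZ}\bigr)^{-1} \le 2$. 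Since $\bD^{-1} = \bigl(\bI_{2JK} + (\expval\bD)^{-1}\bZ\bigr)^{-1}(\expval\bD)^{-1}$, this gives $\norm[]{\bD^{-1}} \le 2\,\norm[]{(\expval\bD)^{-1}}$.

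Next I would invoke the resolvent identity
\[
\bD^{-1} - (\expval\bD)^{-1} = \bD^{-1}\bigl(\expval\bD - \bD\bigr)(\expval\bD)^{-1} = -\,\bD^{-1}\bZ(\expval\bD)^{-1},
\]
obtained by inserting $\bI_{2JK} = \expval\bD(\expval\bD)^{-1}$ after $\bD^{-1}$ in the first term and $\bI_{2JK} = \bD^{-1}\bD$ before $(\expval\bD)^{-1}$ in the second. Taking operator norms, using submultiplicativity, $\norm[]{\bZ} \le \varepsilon_1$, and the bound on $\norm[]{\bD^{-1}}$ just derived, I get $\norm[]{\bD^{-1} - (\expval\bD)^{-1}} \le \norm[]{\bD^{-1}}\,\varepsilon_1\,\norm[]{(\expval\bD)^{-1}} \le 2\,\norm[]{(\expval\bD)^{-1}}^2\varepsilon_1$, which is the claim.

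There is essentially no obstacle here: the only point requiring care is ensuring the Neumann series converges with enough slack to extract the clean constant $2$, which is why the hypothesis $\varepsilon_1 \le \tfrac14$ is combined with the explicit bound $\norm[]{(\expval\bD)^{-1}} \le 1.568$ from Lemma~\ref{le:lemma1} (indeed any $\varepsilon_1$ with $\norm[]{(\expval\bD)^{-1}}\,\varepsilon_1 \le \tfrac12$ would suffice). Everything else is just submultiplicativity of the operator norm, and the proof is attributed to \cite{chi2015guaranteed,tang2013compressed}.
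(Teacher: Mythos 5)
Your proof is correct, and since the paper itself states this lemma without proof (citing \cite{chi2015guaranteed,tang2013compressed}), the standard Neumann-series perturbation argument you give is precisely what those references use. The numerical check $\norm[]{(\expval\bD)^{-1}}\varepsilon_1 \le 1.568 \cdot \tfrac14 < \tfrac12$ is exactly what is needed to extract the constant $2$, and the resolvent identity step is routine.
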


Therefore, the construction of the dual polynomial ensures that condition (\ref{cond1}) is satisfied automatically. It should also be pointed out that Lemma \ref{lem:lemma5.4} essentially guarantees that the set formed in the second condition of Proposition \ref{optimalitycond} is linearly independent. In the following subsection, we focus on showing (\ref{cond2}):  $\norm[2]{\bq(\tau)}<1$ for all $\tau \notin \mathbb{D}$. The proof is partitioned into the following three major steps:
\begin{itemize}
\item Showing that the random dual polynomial $\bq(\tau)$ concentrates around $\overline{\bq}(\tau)$ on a discrete set $\Omega_{\Grid}$.
\item Then, showing that the random dual polynomial $\bq(\tau)$ concentrates around $\overline{\bq}(\tau)$ everywhere in $[0,1)$.
\item Eventually, showing that $\norm[]{\bq(\tau)}<1, \tau \notin \mathbb{D}$.
\end{itemize}
This proof strategy was first developed in \cite{tang2013compressed} for compressed sensing off the grid, and was later adopted by  \cite{li2014off,yang2014exact} for line spectrum estimation with multiple measurement vectors and by  \cite{chi2015guaranteed} for blind sparse spikes deconvolution.

\subsection{Showing (\ref{cond2})}
\label{subsec:subsection3}
We first show that $\bq^{\ell}(\tau)$ concentrates around $\overline{\bq}^{\ell} (\tau)$ on a finite discrete set $\Omega_{\Grid}$, whose size $|\Omega_{\Grid}|$ will be determined later on. For this purpose, we introduce
\begin{align*}
\bV_{\ell}(\tau)
 & = \frac{1}{\sqrt{|K''_M(0)|}^{\ell}}
\begin{bmatrix}
\bK_M^{\ell} (\tau-\tau_1)^{H} \\
\vdots \\
\bK_M^{\ell} (\tau-\tau_J)^{H} \\
\frac{1}{\sqrt{|K''_M(0)|}}\bK_M^{\ell +1}(\tau-\tau_1)^{H}\\
\vdots \\
\frac{1}{\sqrt{|K''_M(0)|}}\bK_M^{\ell +1}(\tau-\tau_J)^{H}\\
\end{bmatrix} \notag \displaybreak\\
& = \frac{1}{M} \sum_{n = -2M}^{2M}  g_M(n)\left(\frac{-i2\pi n}{\sqrt{|K''_M(0)|}}\right)^{\ell} e^{-i2\pi  n \tau}\bE(n)\otimes \bb_n \bb_n^H \in \complex^{2JK\times K}.
\end{align*}
Taking the expectation of the above leads to
\begin{equation*}
\begin{aligned}
 \expval \bV_{\ell} (\tau)
&  = \frac{1}{M} \sum_{n = -2M}^{2M} g_M(n)\left(\frac{-i2\pi n}{\sqrt{|K''_M(0)|}}\right)^{\ell} e^{-i2\pi n  \tau} \bE(n) \otimes \bI_{K} \\
 &=  \frac{1}{\sqrt{|K''_M(0)|}^{\ell}}
\begin{bmatrix}
K_M^{\ell} (\tau-\tau_1)^{*} \\
\vdots \\
K_M^{\ell} (\tau-\tau_J)^{*} \\
\frac{1}{\sqrt{|K''_M(0)|}}K_M^{\ell +1}(\tau-\tau_1)^{*}\\
\vdots \\
\frac{1}{\sqrt{|K''_M(0)|}}K_M^{\ell +1}(\tau-\tau_J)^{*}\\
\end{bmatrix} \otimes \bI_{K}\\
& =: \bv_{\ell}(\tau) \otimes \bI_K.
\end{aligned}
\end{equation*}
Setting $\bD^{-1} = \begin{bmatrix} \bL & \bR \end{bmatrix}$, with $\bL, \bR\in \complex^{2JK\times JK}$
and using the fact that
\begin{equation*}
\bD \begin{bmatrix} \balpha \\  \sqrt{|K''_M(0)|}\bbeta \end{bmatrix}  = \begin{bmatrix} \bh \\ \bzero_{JK\times 1}\end{bmatrix},
\end{equation*}
we have
\begin{equation*}
\begin{aligned}
\frac{1}{\sqrt{|K''_M(0)|}^{\ell}} \bq^{\ell} (\tau) 
 &=  \sum_{j = 1}^J\frac{1}{\sqrt{|K''_M(0)|}^{\ell}} \bK_M^{\ell} (\tau- \tau_j) \balpha_j + \sum_{j =1}^J
\sqrt{|K''_M(0)|} \frac{1}{\sqrt{|K''_M(0)|}^{\ell + 1}} \bK_{M}^{\ell +1} (\tau- \tau_j)\bbeta_j \\
 &= \bV_{\ell}(\tau)^H \bD^{-1} \begin{bmatrix} \bh \\ \bzero_{JK \times 1}\end{bmatrix}\\
 & =  \bV_{\ell}(\tau)^H \bL \bh.
\end{aligned}
\end{equation*}
We decompose $\bV_{\ell}(\tau)^H \bL \bh$ into three parts as follows:
\begin{equation*}
\begin{aligned}
\bV_{\ell}(\tau)^H \bL \bh & = \left( \bV_{\ell}(\tau) - \mathbb{E} \bV_{\ell}(\tau) + \mathbb{E} \bV_{\ell}(\tau) \right)^H \left( \bL - \bL' \otimes \bI_{K} + \bL' \otimes \bI_{K} \right) \bh\\
 & =  \left[\mathbb{E}\bV_{\ell}(\tau)\right]^H \left( \bL'\otimes \bI_{K} \right) \bh
+  (\bV_{\ell}(\tau) - \mathbb{E} \bV_{\ell}(\tau) )^H \bL \bh + \left[\mathbb{E} \bV_{\ell}(\tau)\right]^H  \left(\bL - \bL' \otimes \bI_{K}\right) \bh\\
  & = \left[\mathbb{E}\bV_{\ell}(\tau)\right]^H \left( \bL'\otimes \bI_{K} \right) \bh + \bI_1^{\ell}(\tau) + \bI_2^{\ell}(\tau),
\end{aligned}
\end{equation*}
where we have defined
\begin{equation*}
\bI_1^{\ell}(\tau) = (\bV_{\ell}(\tau) - \mathbb{E} \bV_{\ell}(\tau) )^H \bL \bh,\end{equation*}
\begin{equation*}
\bI_2^{\ell}(\tau)= \left[ \mathbb{E} \bV_{\ell}(\tau)\right]^H  \left(\bL - \bL' \otimes \bI_{K}\right) \bh.
\end{equation*}
For the first term $\left[\mathbb{E}\bV_{\ell}(\tau)\right]^H \left( \bL'\otimes \bI_{K} \right) \bh$ above, we notice that
\begin{equation}
\label{eq:predecomposition}
\begin{aligned}
\left[\mathbb{E}\bV_{\ell}(\tau)\right]^H \left( \bL'\otimes \bI_{K} \right) \bh   
  &= (\bv_{\ell}(\tau) \otimes \bI_K)^H (\bL'\otimes \bI_K)\bh \\
  &= (\bv_{\ell}(\tau) \otimes \bI_K)^H \begin{bmatrix} \bL'\otimes \bI_K & \bR' \otimes \bI_K \end{bmatrix} \begin{bmatrix} \bh \\ \bzero_{JK\times 1}\end{bmatrix}\\
 &= (\bv_{\ell}(\tau) \otimes \bI_K)^H \left(\bD'\otimes \bI_K\right)^{-1} \begin{bmatrix} \bh \\ \bzero_{JK\times 1}\end{bmatrix}\\
 &= (\bv_{\ell}(\tau) \otimes \bI_K)^H \begin{bmatrix}\overline{\balpha} \\
\sqrt{|K''_M(0)|}\overline{\bbeta} \end{bmatrix},
\end{aligned}
\end{equation}
where $\overline{\balpha} = \begin{bmatrix}
\overline{\balpha}_1^H &
\cdots &
\overline{\balpha}_J^H \end{bmatrix}^H$ and $\overline{\bbeta} = \begin{bmatrix}
\overline{\bbeta}_1^H &
\cdots &
\overline{\bbeta}_J^H \end{bmatrix}^H.$
Matching (\ref{eq:predecomposition}) to (\ref{eq:MMVkernel}), we have
\begin{equation*}
\left[\mathbb{E}\bV_{\ell}(\tau)\right]^H \left( \bL'\otimes \bI_{K} \right) \bh  =\frac{1}{\sqrt{|K''_M(0)|}^{\ell}} \overline{\bq}^{\ell} (\tau).
\end{equation*}
As a consequence, we obtain the following decomposition for $\frac{1}{\sqrt{|K''_M(0)|}^{\ell}} \bq^{\ell} (\tau)$:
\begin{equation}
\label{eq:decomposition}
\begin{aligned}
\frac{1}{\sqrt{|K''_M(0)|}^{\ell}} \bq^{\ell} (\tau)  
 &= \frac{1}{\sqrt{|K''_M(0)|}^{\ell}} \overline{\bq}^{\ell} (\tau) + \bI_1^{\ell}(\tau) + \bI_2^{\ell}(\tau).
\end{aligned}
\end{equation}
We now present the roadmap for the rest of the proof. First, we demonstrate that $\norm[2]{\bI_1^{\ell}(\tau)}$ and $\norm[2]{\bI_2^{\ell}(\tau)}$ are small on the set of grid points $\Omega_{\Grid}$ in Sections \ref{subsubsec: boundfirstterm} and \ref{subsubsec: boundsecondterm}, respectively. Section \ref{subsubsec: bounddeviationongridpoint} combines these facts to establish that $\frac{1}{\sqrt{|K''_M(0)|}^{\ell}} \bq^{\ell} (\tau)$ is close to $\frac{1}{\sqrt{|K''_M(0)|}^{\ell}} \overline{\bq}^{\ell} (\tau)$ on $\Omega_{\Grid}$. Finally, in Section \ref{subsubsec: showeverywhere}, we extend the argument in Section \ref{subsubsec: bounddeviationongridpoint} to the whole continuous domain $[0,1)$ and show that $\norm[2]{\bq(\tau)}<1, \tau \in [0,1)\setminus\mathbb{D}$. \\

\subsubsection{Bound $\norm[2]{\bI_1^{\ell}(\tau)}$ on the set of points $\Omega_{\Grid}$}
\label{subsubsec: boundfirstterm}

In order to bound $\norm[2]{\bI_1^{\ell}(\tau)}$, we will apply the matrix Bernstein inequality by exploiting the randomness of $\bh_j$. For this purpose, we first need to control $\norm[]{(\bV_{\ell}(\tau) - \expval \bV_{\ell}(\tau))^H \bL}$, which further requires estimating $\norm[]{\bV_{\ell}(\tau) - \expval \bV_{\ell}(\tau)}$. The latter is established in Lemma \ref{lem:matroperatornormbound}, whose
proof uses the matrix Bernstein inequality and is provided in Appendix \ref{operatornormboundonV}.
\begin{lemma}
\label{lem:matroperatornormbound}
Fix $\tau \in [0,1)$, and let $0<\varepsilon_2<1$. Then, for $\ell = 0, 1, 2, 3$, $\norm[]{\bV_{\ell}(\tau) - \expval \bV_{\ell}(\tau)} \leq \varepsilon_2$ holds with probability at least $1-4\delta_2$ provided that $M \geq \frac{640 \cdot 4^{2 \ell} \mu JK} {3 \varepsilon_2^2}\log\left(\frac{2JK +K}{ \delta_2}\right).$
\end{lemma}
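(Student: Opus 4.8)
The plan is to apply the matrix Bernstein inequality to the centered sum
\[
\bS \;:=\; \bV_\ell(\tau) - \expval\bV_\ell(\tau) \;=\; \frac1M\sum_{n=-2M}^{2M}\bZ_n,
\qquad
\bZ_n \;:=\; g_M(n)\Bigl(\tfrac{-i2\pi n}{\sqrt{|K_M''(0)|}}\Bigr)^{\ell} e^{-i2\pi n\tau}\,\bE(n)\otimes\bigl(\bb_n\bb_n^H - \bI_K\bigr).
\]
Because the $\bb_n$ are independent and $\expval[\bb_n\bb_n^H]=\bI_K$ by the isotropy property, the $\bZ_n$ are independent and mean zero, so $\bS$ is a zero-mean sum of $4M+1$ independent $2JK\times K$ random matrices. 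It then suffices to produce an almost-sure operator-norm bound $\|\tfrac1M\bZ_n\|\le R$ and a variance proxy $\sigma^2$ dominating both $\bigl\|\tfrac1{M^2}\sum_n\expval[\bZ_n\bZ_n^H]\bigr\|$ and $\bigl\|\tfrac1{M^2}\sum_n\expval[\bZ_n^H\bZ_n]\bigr\|$; Bernstein then gives $\prob{\|\bS\|\ge\varepsilon_2}\le(2JK+K)\exp\bigl(-\tfrac{\varepsilon_2^2/2}{\sigma^2+R\varepsilon_2/3}\bigr)$, the ambient dimension $2JK+K$ being the source of the $\log\tfrac{2JK+K}{\delta_2}$ in the claimed bound.

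For the uniform bound I would use $\|\bA\otimes\bB\|=\|\bA\|\,\|\bB\|$ together with the elementary facts that $0\le g_M(n)\le 1$; that $|K_M''(0)|=\tfrac{4\pi^2(M^2-1)}{3}$, hence $\tfrac{2\pi|n|}{\sqrt{|K_M''(0)|}}\le 4$ for all $|n|\le 2M$ whenever $M\ge 2$, so the scalar prefactor has modulus at most $4^{\ell}$ and $\|\bE(n)\|_2^2=J\bigl(1+\tfrac{4\pi^2n^2}{|K_M''(0)|}\bigr)\le 17J$; and that the incoherence property forces $\|\bb_n\|_2^2\le\mu K$, so $\|\bb_n\bb_n^H-\bI_K\|\le\mu K$ (using $\mu K\ge 1$). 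Multiplying these gives $\|\tfrac1M\bZ_n\|\le\tfrac1M\,4^{\ell}\sqrt{17J}\,\mu K=:R$.

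For the variance the key algebraic identity is $(\bb_n\bb_n^H-\bI_K)^2=(\|\bb_n\|_2^2-2)\bb_n\bb_n^H+\bI_K$, which with $\bb_n\bb_n^H\succeq 0$, $\|\bb_n\|_2^2\le\mu K$, and isotropy yields the \emph{linear} bound $\bzero\preceq\expval\bigl[(\bb_n\bb_n^H-\bI_K)^2\bigr]\preceq\mu K\,\bI_K$. Using $(\bE(n)\otimes\bM)(\bE(n)\otimes\bM)^H=(\bE(n)\bE(n)^H)\otimes\bM\bM^H$ and $(\bE(n)\otimes\bM)^H(\bE(n)\otimes\bM)=\|\bE(n)\|_2^2\,\bM^H\bM$, together with $g_M(n)^2\le g_M(n)$, the prefactor bound $4^{2\ell}$, the identity $\sum_n g_M(n)=M$, and $\sum_n g_M(n)\bE(n)\bE(n)^H=M\bD'$ with $\|\bD'\|=\|\expval\bD\|\le 1.3623$ from Lemma~\ref{le:lemma1}, I obtain
\[
\Bigl\|\tfrac1{M^2}\sum_n\expval[\bZ_n\bZ_n^H]\Bigr\|\le\frac{1.3623\cdot 4^{2\ell}\mu K}{M},
\qquad
\Bigl\|\tfrac1{M^2}\sum_n\expval[\bZ_n^H\bZ_n]\Bigr\|\le\frac{17\cdot 4^{2\ell}\mu JK}{M},
\]
so one may take $\sigma^2=\tfrac{17\cdot 4^{2\ell}\mu JK}{M}$. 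Since $\varepsilon_2<1$ and a one-line computation shows $R/3\le\sigma^2$, the Bernstein denominator is at most $2\sigma^2$, and demanding that the exponent exceed $\log\tfrac{2JK+K}{\delta_2}$ yields a sufficient condition of the form $M\ge c\,4^{2\ell}\mu JK\,\varepsilon_2^{-2}\log\tfrac{2JK+K}{\delta_2}$; tracking constants gives the stated $\tfrac{640}{3}$. A union bound over the four values $\ell\in\{0,1,2,3\}$ then produces the $4\delta_2$. The one step that requires genuine care is the variance estimate — in particular securing the linear PSD bound $\expval[(\bb_n\bb_n^H-\bI_K)^2]\preceq\mu K\bI_K$ rather than the far weaker $(\mu K)^2\bI_K$, and threading the $\bE(n)$ factors correctly through the Kronecker products — and it is precisely here that fixing $\tau$ lets a matrix Bernstein bound replace the Talagrand-type argument of \cite{chi2015guaranteed}, thereby removing a factor of order $JK$ from the overall sample complexity.
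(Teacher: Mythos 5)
Your proof follows essentially the same route as the paper's: decompose $\bV_\ell(\tau)-\expval\bV_\ell(\tau)$ into a sum of independent zero-mean Kronecker-product terms and apply the rectangular matrix Bernstein inequality, using $\|g_M(n)\|_\infty\le 1$, $2\pi|n|/\sqrt{|K_M''(0)|}\le 4$, $\|\bE(n)\|_2^2=O(J)$, and the incoherence/isotropy bounds (giving $\expval[(\bb_n\bb_n^H-\bI_K)^2]\preceq\mu K\,\bI_K$) to estimate $R$ and $\sigma^2$, then finishing with a union bound over $\ell$. You are slightly more careful in recording both variance factors $\sum_n\expval[\bZ_n\bZ_n^H]$ and $\sum_n\expval[\bZ_n^H\bZ_n]$ (the paper writes out only the dominant $K\times K$ one), and your constants come out a bit tighter than $640/3$, which is harmless since the stated hypothesis is then a fortiori sufficient.
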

We define the event $\cE_{2,\varepsilon_2} = \left\{ \norm[]{\bV_{\ell}(\tau) - \expval \bV_{\ell}(\tau)} \leq \varepsilon_2, \ell = 0,1,2,3\right\}$. The following two lemmas control $\norm[]{\left(\bV_{\ell}(\tau) - \expval \bV_{\ell}(\tau)\right)^H \bL}$ and $\norm[F]{\bV_{\ell}(\tau) - \expval \bV_{\ell}(\tau)}$, respectively.
\begin{lemma}
\label{lem:unionfailprob}
Assume that $\varepsilon_1 \in (0,\frac{1}{4}]$. Consider the finite set of grid points $\Omega_{\Grid} = \{\tau_d \}$, whose cardinality $|\Omega_{\Grid}|$ will be determined later. We have
\begin{equation*}
\begin{aligned}
\mathbb{P} \left\{ \sup_{\tau_d \in \Omega_{\Grid}} \norm[]{\left(\bV_{\ell}(\tau_d) - \expval \bV_{\ell}(\tau_d)\right)^H \bL} \geq 4 \varepsilon_2,
\vphantom{\sup_{\tau_d \in \Omega_{\Grid}} \norm[]{\left(\bV_{\ell}(\tau_d) - \expval \bV_{\ell}(\tau_d)\right)^H \bL}} \ell = 0, 1, 2, 3 \right\} \leq|\Omega_{\Grid}|4 \delta_2 + \mathbb{P}\left(\cE_{1, \varepsilon_1}^c\right)
\end{aligned}
\end{equation*}
provided that $M \geq \frac{640 \cdot 4^{2\ell} \mu JK} {3 \varepsilon_2^2}\log\left(\frac{2JK +K}{ \delta_2}\right)$.
\end{lemma}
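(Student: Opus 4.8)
The plan is to bound $\norm[]{(\bV_{\ell}(\tau_d) - \expval \bV_{\ell}(\tau_d))^H \bL}$ via submultiplicativity of the operator norm,
\[
\norm[]{(\bV_{\ell}(\tau_d) - \expval \bV_{\ell}(\tau_d))^H \bL} \leq \norm[]{\bV_{\ell}(\tau_d) - \expval \bV_{\ell}(\tau_d)} \cdot \norm[]{\bL},
\]
and to control the two factors separately: the first probabilistically through Lemma~\ref{lem:matroperatornormbound}, and the second deterministically on the event $\cE_{1,\varepsilon_1}$.

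First I would control $\norm[]{\bL}$. Since $\bL$ consists of the first $JK$ columns of $\bD^{-1}$, we have $\norm[]{\bL} \leq \norm[]{\bD^{-1}}$. Restricting to the event $\cE_{1,\varepsilon_1}$ with $\varepsilon_1 \in (0,\tfrac14]$, Lemma~\ref{lem:inverseopnorm} gives $\norm[]{\bD^{-1}} \leq 2\norm[]{(\expval \bD)^{-1}}$, and Lemma~\ref{le:lemma1} gives $\norm[]{(\expval \bD)^{-1}} \leq 1.568$, so that $\norm[]{\bL} \leq 3.136 < 4$ on $\cE_{1,\varepsilon_1}$. Next, fixing a grid point $\tau_d \in \Omega_{\Grid}$, Lemma~\ref{lem:matroperatornormbound} shows that under the stated lower bound on $M$ (worst case $\ell = 3$, which subsumes $\ell = 0,1,2$), the event $\{\norm[]{\bV_{\ell}(\tau_d) - \expval \bV_{\ell}(\tau_d)} \leq \varepsilon_2,\ \ell = 0,1,2,3\}$ holds with probability at least $1-4\delta_2$. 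On the intersection of this event with $\cE_{1,\varepsilon_1}$ we obtain $\norm[]{(\bV_{\ell}(\tau_d) - \expval \bV_{\ell}(\tau_d))^H \bL} \leq 3.136\,\varepsilon_2 < 4\varepsilon_2$ for every $\ell \in \{0,1,2,3\}$.

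Finally I would apply a union bound over the finite set $\Omega_{\Grid}$: the probability that $\norm[]{\bV_{\ell}(\tau_d) - \expval \bV_{\ell}(\tau_d)} > \varepsilon_2$ for some $\tau_d \in \Omega_{\Grid}$ and some $\ell \in \{0,1,2,3\}$ is at most $|\Omega_{\Grid}|\,4\delta_2$; adding $\mathbb{P}(\cE_{1,\varepsilon_1}^c)$ to account for the conditioning used to bound $\norm[]{\bL}$ yields the claimed bound. The main obstacle here is minor — essentially bookkeeping: one must notice that $\bV_{\ell}(\tau)$ and $\bL$ are both functions of the same random vectors $\{\bb_n\}$ and hence are not independent, but since we only need both of two events to occur and we bound the failure probability by a union bound on complements, independence is not required; one also simply checks the numerical inequality $2 \times 1.568 < 4$.
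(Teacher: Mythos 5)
Your proof is correct and follows essentially the same route as the paper's: submultiplicativity of the operator norm, bounding $\norm[]{\bL}\le\norm[]{\bD^{-1}}\le 2\norm[]{(\expval\bD)^{-1}}\le 3.136$ on $\cE_{1,\varepsilon_1}$ via Lemmas~\ref{lem:inverseopnorm} and~\ref{le:lemma1}, invoking Lemma~\ref{lem:matroperatornormbound} for the fluctuation term, and a union bound over $\Omega_{\Grid}$ and $\ell$ plus $\mathbb{P}(\cE_{1,\varepsilon_1}^c)$. The observation that independence of $\bV_{\ell}$ and $\bL$ is unnecessary because the argument only needs a union bound on complements is a nice explicit remark, but otherwise this is the paper's argument.
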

Lemma \ref{lem:unionfailprob} is a consequence of Lemmas \ref{lem:inverseopnorm} and \ref{lem:matroperatornormbound}. Its proof is given in Appendix \ref{proofofLemma5_6}.
\begin{lemma}
\label{lem:froboperatornormineq}
Conditioned on the event $\cE_{2,\varepsilon_2}$, we have
\begin{equation*}
\begin{aligned}
\norm[F]{\bV_{\ell}(\tau) - \expval \bV_{\ell}(\tau)}  & \leq \sqrt{K} \norm[]{\bV_{\ell} (\tau) - \expval \bV_{\ell}(\tau)}
\leq \sqrt{K} \varepsilon_2.
\end{aligned}
\end{equation*}
\end{lemma}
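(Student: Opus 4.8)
The plan is to exploit the fact that $\bV_{\ell}(\tau) - \expval \bV_{\ell}(\tau)$ is an element of $\complex^{2JK \times K}$ and hence has rank at most $K$. For any matrix $\bM$ whose nonzero singular values are $\sigma_1 \geq \sigma_2 \geq \cdots \geq \sigma_r > 0$ with $r \leq K$, we have
\begin{equation*}
\norm[F]{\bM}^2 = \sum_{i=1}^r \sigma_i^2 \leq r\,\sigma_1^2 = r\,\norm[]{\bM}^2 \leq K\,\norm[]{\bM}^2,
\end{equation*}
so that $\norm[F]{\bM} \leq \sqrt{K}\,\norm[]{\bM}$. Applying this with $\bM = \bV_{\ell}(\tau) - \expval \bV_{\ell}(\tau)$ immediately gives the first of the two claimed inequalities, for each $\ell \in \{0,1,2,3\}$.

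For the second inequality, I would simply invoke the definition of the conditioning event: on $\cE_{2,\varepsilon_2}$ we have $\norm[]{\bV_{\ell}(\tau) - \expval \bV_{\ell}(\tau)} \leq \varepsilon_2$ for all $\ell \in \{0,1,2,3\}$, and substituting this into the bound above yields $\norm[F]{\bV_{\ell}(\tau) - \expval \bV_{\ell}(\tau)} \leq \sqrt{K}\,\varepsilon_2$.

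There is no substantive obstacle here: the statement is a direct consequence of the elementary rank-versus-operator-norm inequality together with the definition of $\cE_{2,\varepsilon_2}$. The only point worth checking is that the relevant rank bound is the number of \emph{columns}, namely $K$, rather than the number of rows $2JK$; this is clear from the definition of $\bV_{\ell}(\tau)$ as a $2JK \times K$ matrix (equivalently, from its Kronecker-structured representation with the $\bb_n \bb_n^H$ factor appearing on the right), so the smaller factor $\sqrt{K}$ is the correct one.
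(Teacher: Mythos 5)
Your proof is correct and is the standard argument: since $\bV_{\ell}(\tau) - \expval \bV_{\ell}(\tau) \in \complex^{2JK\times K}$ has rank at most $K$, the Frobenius norm is bounded by $\sqrt{K}$ times the operator norm, and the second inequality is immediate from the definition of $\cE_{2,\varepsilon_2}$. The paper states this lemma without proof, and your argument is exactly the one the authors had in mind.
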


Based on Lemmas \ref{lem:unionfailprob} and \ref{lem:froboperatornormineq},
the following lemma shows that $\norm[2]{\bI_1^{\ell}(\tau)}$ can be well-controlled on the set of points $\Omega_{\Grid}$ with high probability.
\begin{lemma}
\label{lem:vectorbernstein}
Assume that $\bh_j \in \complex^K, j = 1, \dots, J$, are i.i.d.\ symmetric random samples from the complex unit sphere $\mathbb{CS}^{K-1}$. There exists a numerical constant $C$ such that if
\begin{equation*}
\begin{aligned}
M \geq C\mu JK \max\left\{ \frac{1}{\varepsilon_4^2} \log\left(\frac{|\Omega_{\Grid}|JK}{\delta}\right)
 \vphantom{\frac{1}{\varepsilon_4^2} \log\left(\frac{|\Omega_{\Grid}|JK}{\delta}\right) \log^2 \left(\frac{|\Omega_{\Grid}|K}{\delta}\right)}   \log^2 \left(\frac{|\Omega_{\Grid}|K}{\delta}\right), \log\left(\frac{JK}{\delta}\right)\right\},
\end{aligned}
\end{equation*}
then we have
\begin{equation*}
\begin{aligned}
\mathbb{P} \left\{ \sup_{\tau_d \in \Omega_{\Grid}}\norm[2]{\bI_1^{\ell}(\tau_d)}\leq \varepsilon_4, \ell = 0,1,2, 3\right\} \geq 1-12 \delta.\\
\end{aligned}
\end{equation*}
\end{lemma}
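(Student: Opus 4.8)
The plan is to bound $\bI_1^{\ell}(\tau_d) = (\bV_\ell(\tau_d) - \expval \bV_\ell(\tau_d))^H \bL \bh$ at each fixed grid point $\tau_d$ by exploiting the randomness of $\bh = \begin{bmatrix} (\sign(c_1)\bh_1)^H & \cdots & (\sign(c_J)\bh_J)^H\end{bmatrix}^H$, and then take a union bound over $\Omega_{\Grid}$. Fix $\ell$ and $\tau_d$, and write $\bM := (\bV_\ell(\tau_d) - \expval \bV_\ell(\tau_d))^H \bL \in \complex^{K \times 2JK}$, so that $\bI_1^\ell(\tau_d) = \bM \bh$. Partitioning $\bM = \begin{bmatrix} \bM_1 & \cdots & \bM_J & \widetilde{\bM}_1 & \cdots & \widetilde{\bM}_J\end{bmatrix}$ into $K \times K$ blocks, we have $\bM\bh = \sum_{j=1}^J (\bM_j + i\bM_j')\sign(c_j)\bh_j$ for appropriate blocks, i.e.\ a sum of $J$ independent, mean-zero random vectors (each $\bh_j$ is uniform on $\mathbb{CS}^{K-1}$, hence symmetric and in particular mean zero, and is independent of $\bM$ once we also condition on the $\bb_n$'s). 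This is exactly the setup for a matrix/vector Bernstein inequality.

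The key steps, in order, are: (i) condition on the event $\cE_{1,\varepsilon_1}$ (so that $\norm[]{\bL} \le \norm[]{\bD^{-1}} \le 2\norm[]{(\expval\bD)^{-1}} \le 3.136$ by Lemmas~\ref{lem:inverseopnorm} and~\ref{le:lemma1}) and on the event $\cE_{2,\varepsilon_2}$; (ii) on these events, bound the per-summand operator norm $\max_j \norm[]{\bM_j^{(\cdot)}} \lesssim \norm[]{(\bV_\ell(\tau_d)-\expval\bV_\ell(\tau_d))^H\bL} \le 4\varepsilon_2$ (Lemma~\ref{lem:unionfailprob}) and the matrix variance parameter, which is controlled by $\norm[F]{(\bV_\ell(\tau_d)-\expval\bV_\ell(\tau_d))^H\bL}^2 \le \norm[F]{\bV_\ell(\tau_d)-\expval\bV_\ell(\tau_d)}^2\norm[]{\bL}^2 \lesssim K\varepsilon_2^2$ (Lemma~\ref{lem:froboperatornormineq}), together with the normalization $\norm[2]{\bh_j} = 1$; (iii) apply the matrix Bernstein inequality conditionally on $\{\bb_n\}$ to get $\prob{\norm[2]{\bI_1^\ell(\tau_d)} \ge \varepsilon_4 \mid \cE_{1,\varepsilon_1}\cap\cE_{2,\varepsilon_2}} \le 2JK \exp\!\left(-c\min\{\varepsilon_4^2/(K\varepsilon_2^2), \varepsilon_4/\varepsilon_2\}\right)$ or similar; (iv) choose $\varepsilon_2 \asymp \varepsilon_4/\sqrt{K\log(\cdot)}$ so that $\varepsilon_2$'s requirement $M \gtrsim \mu JK \varepsilon_2^{-2}\log(\ldots)$ becomes $M \gtrsim \mu JK \varepsilon_4^{-2} K \log(\ldots)$ — wait, that would give an extra $K$; instead one chooses $\varepsilon_2$ depending on $\varepsilon_4$ and the Bernstein failure bound so that the variance term $K\varepsilon_2^2$ and the deviation $\varepsilon_4$ are balanced against $\log(JK/\delta)$, absorbing the $\sqrt K$ into the logarithmic and $\varepsilon$ factors without a polynomial loss, which is precisely why the sample bound has the form $M \gtrsim \mu JK\max\{\varepsilon_4^{-2}\log(|\Omega_{\Grid}|JK/\delta)\log^2(|\Omega_{\Grid}|K/\delta), \log(JK/\delta)\}$; (v) union bound over $\ell \in \{0,1,2,3\}$ and over $\tau_d \in \Omega_{\Grid}$, and add $\prob{\cE_{1,\varepsilon_1}^c}$, collecting the total failure probability into $12\delta$.

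The main obstacle I expect is step (iv): carefully choosing the auxiliary parameters $\varepsilon_2$ (and the Bernstein truncation/variance split) as functions of $\varepsilon_4$, $K$, $|\Omega_{\Grid}|$, and $\delta$ so that the final measurement requirement is $O(\mu JK \cdot \mathrm{polylog})$ rather than $O(\mu JK^2 \cdot \mathrm{polylog})$. The Frobenius bound $\norm[F]{\bV_\ell-\expval\bV_\ell} \le \sqrt{K}\norm[]{\bV_\ell-\expval\bV_\ell}$ introduces a $\sqrt K$ that must be shown to cost only logarithmic factors in $K$; this is handled by invoking the matrix Bernstein inequality in its form where the variance proxy appears inside a $\log$, and noting the subexponential tail lets us trade the $\sqrt K$ factor for a $\log K$ in the deviation, at the price of squaring one logarithm (hence the $\log^2(|\Omega_{\Grid}|K/\delta)$ term). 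A secondary technical point is making the conditioning rigorous: the $\bb_n$ and the $\bh_j$ are independent, so one conditions on $\{\bb_n\}$, notes that $\cE_{1,\varepsilon_1}$ and $\cE_{2,\varepsilon_2}$ are $\{\bb_n\}$-measurable and that $\bL$ and $\bV_\ell$ are then deterministic, applies the (now genuinely conditional) matrix Bernstein bound over the randomness of $\{\bh_j\}$, and finally integrates out. Everything else — the block decomposition, the operator-norm and Frobenius-norm estimates, and the union bound arithmetic — is routine given Lemmas~\ref{lem:inverseopnorm}, \ref{lem:matroperatornormbound}, \ref{lem:unionfailprob}, and~\ref{lem:froboperatornormineq}.
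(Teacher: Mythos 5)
Your block decomposition and the overall Bernstein-plus-union-bound structure match the paper's, but there is a genuine gap in the variance computation, and the attempted fix in your step (iv) cannot work. You bound the matrix variance by $\norm[F]{(\bV_\ell-\expval\bV_\ell)^H\bL}^2\lesssim K\varepsilon_2^2$ ``together with the normalization $\norm[2]{\bh_j}=1$,'' i.e.\ you treat $\bh_j$ as a worst-case unit vector when computing $\sum_j\expval\bZ_j^H\bZ_j$. That gives a variance proxy of order $K\varepsilon_2^2$, which, as you yourself notice, would force $\varepsilon_2\lesssim\varepsilon_4/\sqrt{K\log}$ and hence $M\gtrsim\mu JK^2\varepsilon_4^{-2}\,\mathrm{polylog}$ through Lemma~\ref{lem:matroperatornormbound}. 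Your proposed remedy---that ``the subexponential tail lets us trade the $\sqrt K$ factor for a $\log K$ ... at the price of squaring one logarithm''---is not correct: a Bernstein tail with variance proxy $K\varepsilon_2^2$ cannot be reparametrized to eliminate a polynomial factor of $K$ from the sample-size requirement; polynomial and logarithmic factors are not interchangeable by a choice of truncation level.

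The missing ingredient is that the variance must use the actual second moment of $\bh_j$, not merely its norm bound. Since $\bh_j$ is uniform on $\mathbb{CS}^{K-1}$, one has $\expval\bh_j\bh_j^H=\tfrac{1}{K}\bI_K$ (this is the content of Lemma~\ref{lem:expvalunitsphere}). Therefore
\begin{equation*}
\sum_j\expval\bZ_j^H\bZ_j=\sum_j\trace\bigl(\bQ_j^H\bQ_j\,\expval[\bh_j\bh_j^H]\bigr)=\frac{1}{K}\norm[F]{\bQ}^2\le\frac{1}{K}\cdot CK\varepsilon_2^2=C\varepsilon_2^2,
\end{equation*}
so the $1/K$ from the covariance exactly cancels the $K$ from the Frobenius bound of Lemma~\ref{lem:froboperatornormineq}. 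With variance proxy $\sigma^2\asymp\varepsilon_2^2$ (and $R\asymp\varepsilon_2$), one only needs $\varepsilon_2\asymp\varepsilon_4/\log(\cdot)$, and plugging this into $M\gtrsim\mu JK\varepsilon_2^{-2}\log(\cdot)$ gives $M\gtrsim\mu JK\varepsilon_4^{-2}\log(\cdot)\log^2(\cdot)$ as claimed, with no extra power of $K$. The rest of your outline---conditioning on $\cE_{1,\varepsilon_1}$, using Lemmas~\ref{lem:inverseopnorm}, \ref{le:lemma1}, \ref{lem:matroperatornormbound}, \ref{lem:unionfailprob}, and the union bound over $\ell$ and $\Omega_\Grid$---is correct and matches the paper; but without the exact covariance $\expval\bh_j\bh_j^H=\tfrac{1}{K}\bI_K$ the argument does not close at $O(\mu JK\cdot\mathrm{polylog})$.
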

We provide the detailed proof in Appendix \ref{proofofLemma5_8}, which is based on an application of the matrix Bernstein inequality for the vector case. \\

\subsubsection{Bound $\norm[2]{\bI_2^{\ell}(\tau)}$ on the set of points $\Omega_{\Grid}$}
\label{subsubsec: boundsecondterm}
The strategy for bounding $\norm[2]{\bI_2^{\ell}(\tau)}$ is similar to the one in Section \ref{subsubsec: boundfirstterm}. First of all, we use the following lemma to bound $\norm[F]{\left(\bL -\bL'\otimes \bI_K \right)^H\expval \bV_{\ell}(\tau)}^2$. Its proof is given in Appendix \ref{appendix:lemma5.9}.
\begin{lemma}
\label{lem:frobeniusnorm2}
Conditioned on the event $\cE_{1, \varepsilon_1}$ with $ \varepsilon_1 \in (0,\frac{1}{4}]$, we have
\begin{equation*}
\norm[F]{\left(\bL -\bL'\otimes \bI_K \right)^H\expval \bV_{\ell}(\tau)}^2 \leq CK\varepsilon_1^2
\end{equation*}
for some constant $C$.
\end{lemma}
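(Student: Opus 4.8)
The plan is to expand $\norm[F]{\left(\bL -\bL'\otimes \bI_K \right)^H\expval \bV_{\ell}(\tau)}^2$ by writing it as a sum of columnwise $\ell_2$ norms and exploiting the tensor structure of $\expval\bV_\ell(\tau) = \bv_\ell(\tau)\otimes\bI_K$. First I would bound the Frobenius norm by the operator norm times a rank factor: since $\bv_\ell(\tau)\otimes\bI_K$ has rank at most $K$, we have $\norm[F]{\left(\bL -\bL'\otimes \bI_K \right)^H\left(\bv_\ell(\tau)\otimes\bI_K\right)} \le \sqrt{K}\,\norm[]{\bL -\bL'\otimes \bI_K}\cdot\norm[]{\bv_\ell(\tau)\otimes\bI_K}$. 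Then $\norm[]{\bv_\ell(\tau)\otimes\bI_K} = \norm[2]{\bv_\ell(\tau)}$, and I would invoke the standard bound on $\norm[2]{\bv_\ell(\tau)}$ coming from the squared Fej\'er kernel estimates (as in \cite{tang2013compressed,yang2014exact}), which gives a numerical constant independent of $M,J,K$ under the minimum separation hypothesis. It remains to control $\norm[]{\bL -\bL'\otimes \bI_K}$.

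For that piece, the key observation is that $\bL'\otimes\bI_K$ consists of the first $JK$ columns of $\left(\bD'\otimes\bI_K\right)^{-1} = (\expval\bD)^{-1}$, while $\bL$ consists of the first $JK$ columns of $\bD^{-1}$; hence $\norm[]{\bL -\bL'\otimes \bI_K} \le \norm[]{\bD^{-1} - (\expval\bD)^{-1}}$. Now I would apply Lemma \ref{lem:inverseopnorm}: conditioned on $\cE_{1,\varepsilon_1}$ with $\varepsilon_1\in(0,\tfrac14]$, this is at most $2\norm[]{(\expval\bD)^{-1}}^2\varepsilon_1 \le 2\cdot(1.568)^2\varepsilon_1$ by the third bound in Lemma \ref{le:lemma1}. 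Combining the two steps yields
\begin{equation*}
\norm[F]{\left(\bL -\bL'\otimes \bI_K \right)^H\expval \bV_{\ell}(\tau)}^2 \le K\cdot\bigl(2(1.568)^2\varepsilon_1\bigr)^2\cdot\norm[2]{\bv_\ell(\tau)}^2 \le CK\varepsilon_1^2
\end{equation*}
with $C$ absorbing $(2(1.568)^2)^2$ and the uniform bound on $\norm[2]{\bv_\ell(\tau)}^2$, which holds for all $\ell\in\{0,1,2,3\}$ and all $\tau\in[0,1)$.

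The main obstacle, such as it is, will be bookkeeping the constant rather than any genuine difficulty: I need to make sure the uniform bound on $\norm[2]{\bv_\ell(\tau)}$ over $\tau$ and over $\ell=0,1,2,3$ is correctly extracted from the Fej\'er-kernel machinery (this is where the normalization factors $1/\sqrt{|K_M''(0)|}$ built into $\bV_\ell$ and $\bv_\ell$ do their work, keeping everything of order one), and that the rank bound $\mathrm{rank}(\bv_\ell(\tau)\otimes\bI_K)\le K$ is applied on the correct side of the product so the $\sqrt{K}$ — and only $\sqrt{K}$ — appears. One subtlety worth flagging is that $\bL - \bL'\otimes\bI_K$ is a $2JK\times JK$ block and one should verify the column-selection identification is consistent with the partitions $\bD^{-1} = [\bL\ \bR]$ and $\bD'^{-1} = [\bL'\ \bR']$ introduced above; once that is checked, the operator-norm inequality $\norm[]{\bL - \bL'\otimes\bI_K}\le\norm[]{\bD^{-1} - (\expval\bD)^{-1}}$ is immediate since deleting columns cannot increase the operator norm.
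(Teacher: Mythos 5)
Your proposal is correct and follows essentially the same route as the paper: both reduce the Frobenius norm to $K\,\norm[]{\bL-\bL'\otimes\bI_K}^2\,\norm[2]{\bv_\ell(\tau)}^2$, bound the operator-norm factor via Lemma~\ref{lem:inverseopnorm} and Lemma~\ref{le:lemma1} using the submatrix relation to $\bD^{-1}-(\expval\bD)^{-1}$, and invoke the Fej\'er-kernel estimate for $\norm[2]{\bv_\ell(\tau)}$. The only cosmetic difference is that you use the rank-Frobenius bound $\norm[F]{\bA}\le\sqrt{\rank(\bA)}\,\norm[]{\bA}$, whereas the paper uses $\norm[F]{\bX\bY}\le\norm[]{\bX}\,\norm[F]{\bY}$ with $\norm[F]{\expval\bV_\ell(\tau)}^2=K\norm[2]{\bv_\ell(\tau)}^2$; since $\bv_\ell(\tau)\otimes\bI_K$ has exactly $K$ equal nonzero singular values, these give the identical estimate.
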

Lemma \ref{lem:frobeniusnorm2} together with the matrix Bernstein inequality ensure that $\norm[2]{\bI_2^{\ell}(\tau)}$ is small on the set of points $\Omega_{\Grid}$ with high probability.
\begin{lemma}
\label{lem:boundI2}
Assume that each $\bh_j \in \complex^K$ is an i.i.d.\ symmetric random sample from the complex unit sphere $\mathbb{CS}^{K-1}$. There exists a numerical constant $C$ such that if
\begin{equation*}
\begin{aligned}
M \geq C \frac{\mu JK}{\varepsilon_5^2} \log\left(\frac{JK}{\delta}\right)\log^2\left(\frac{|\Omega_{\Grid}|K}{\delta}\right),
\end{aligned}
\end{equation*}
we have
\begin{equation*}
\begin{aligned}
\mathbb{P} \left\{ \sup_{\tau_d \in \Omega_{\Grid}}\norm[2]{\bI_2^{\ell}(\tau_d)}\leq \varepsilon_5, \ell = 0,1,2,3\right\} \geq 1- 8 \delta.\\
\end{aligned}
\end{equation*}
\end{lemma}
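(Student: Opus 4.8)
The plan is to bound $\bI_2^{\ell}(\tau) = \left[\expval \bV_{\ell}(\tau)\right]^H\left(\bL - \bL'\otimes \bI_K\right)\bh$ pointwise for each fixed $\tau_d \in \Omega_{\Grid}$ using a vector Bernstein-type inequality that exploits the randomness of the $\bh_j$'s, and then take a union bound over the grid $\Omega_{\Grid}$. First I would write $\bh = \begin{bmatrix}(\sign(c_1)\bh_1)^H & \cdots & (\sign(c_J)\bh_J)^H\end{bmatrix}^H$ and decompose $\bI_2^{\ell}(\tau)$ as a sum $\sum_{j=1}^J \bz_j$ of $J$ independent random vectors, where each $\bz_j$ is a fixed $K\times K$ block of the matrix $\left(\bL - \bL'\otimes \bI_K\right)^H \expval\bV_{\ell}(\tau)$ applied to $\sign(c_j)\bh_j$. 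Since $\bh_j$ is uniform on $\mathbb{CS}^{K-1}$, each $\bz_j$ has zero mean, and its (Euclidean) norm is controlled by the operator norm of the corresponding block times $\norm{\bh_j}=1$, while its variance parameter is governed by the Frobenius norm of that block. The conditioning on $\cE_{1,\varepsilon_1}$ is what lets us invoke Lemma \ref{lem:frobeniusnorm2}, which gives $\norm[F]{\left(\bL-\bL'\otimes\bI_K\right)^H\expval\bV_{\ell}(\tau)}^2 \le CK\varepsilon_1^2$; combined with Lemma \ref{lem:inverseopnorm} (giving $\norm{\bD^{-1}}\le 2\norm{(\expval\bD)^{-1}}$) and Lemma \ref{le:lemma1}, this bounds both the per-summand norm and the total variance.

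Next I would apply the vector Bernstein inequality (e.g.\ the version used for $\bI_1^{\ell}$ in Lemma \ref{lem:vectorbernstein}, or a Hermitian-dilation matrix Bernstein bound) to $\sum_j \bz_j$. The variance proxy is of order $CK\varepsilon_1^2$, distributed across $J$ terms each of norm at most $O(\sqrt{K/J}\,\varepsilon_1)$ roughly (more precisely, the max block norm times the Frobenius bound), so Bernstein yields $\norm[2]{\bI_2^{\ell}(\tau_d)}\le \varepsilon_5$ with failure probability at most something like $K\exp\!\left(-c\,\varepsilon_5^2/(\varepsilon_1^2 + \varepsilon_1\varepsilon_5\cdot(\text{per-term norm}))\right)$. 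To make this at most $2\delta$ per value of $\ell$ per grid point, we need $\varepsilon_1$ small relative to $\varepsilon_5$ up to a $\log$ factor, i.e.\ $\varepsilon_1 \lesssim \varepsilon_5/\log(|\Omega_{\Grid}|K/\delta)$; recalling $\varepsilon_1 = O(\sqrt{\mu JK/M}\cdot\sqrt{\log(JK/\delta)})$ from Lemma \ref{lem:matrixbernstein}, this translates into the stated requirement $M \ge C\frac{\mu JK}{\varepsilon_5^2}\log\!\left(\frac{JK}{\delta}\right)\log^2\!\left(\frac{|\Omega_{\Grid}|K}{\delta}\right)$. Finally, a union bound over the $|\Omega_{\Grid}|$ grid points and the four values $\ell = 0,1,2,3$, together with the bound $\mathbb{P}(\cE_{1,\varepsilon_1}^c)\le\delta_1$ for the conditioning event, gives the claimed overall success probability $1 - 8\delta$ (with the constants absorbed into $C$ and $\delta_1,\delta_2$ chosen as multiples of $\delta$).

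The main obstacle I anticipate is the careful bookkeeping in translating the Frobenius-norm bound of Lemma \ref{lem:frobeniusnorm2} into a usable variance parameter for the vector Bernstein inequality, and in particular verifying that the per-summand norm bound $\max_j \norm[2]{\bz_j}$ is small enough that the "large-deviation" (sub-exponential) term in Bernstein does not dominate — this is what forces the extra $\log^2$ factor and the requirement that $\varepsilon_1$ be chosen a $\log$ factor smaller than $\varepsilon_5$. A secondary subtlety is that $\bL - \bL'\otimes\bI_K$ is itself random (it depends on $\bb_n$, not on $\bh_j$), so one must condition on $\cE_{1,\varepsilon_1}$ first and apply the $\bh_j$-randomness argument conditionally; fortunately, conditioning on $\cE_{1,\varepsilon_1}$ does not destroy the independence or the uniform-on-sphere distribution of the $\bh_j$'s since $\cE_{1,\varepsilon_1}$ is $\sigma(\bb_n)$-measurable. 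The rest — symmetry of $\bh_j$, zero mean, the union bound — is routine and parallels the treatment of $\bI_1^{\ell}$ in Lemma \ref{lem:vectorbernstein}.
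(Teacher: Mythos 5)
Your approach matches the paper's essentially step for step: decompose $\bI_2^{\ell}(\tau_d)$ as $\sum_j \widetilde{\bZ}_j$ with $\widetilde{\bZ}_j = \widetilde{\bQ}_j \sign(c_j)\bh_j$, where $\widetilde{\bQ}_j$ is the $j$th $K\times K$ block of $\widetilde{\bQ} := [\expval\bV_\ell(\tau_d)]^H(\bL - \bL'\otimes\bI_K)$; condition on $\cE_{1,\varepsilon_1}$ so that $\widetilde{\bQ}$ is deterministic and both Lemma~\ref{lem:frobeniusnorm2} and Lemma~\ref{lem:inverseopnorm} apply; then run matrix Bernstein over the $\bh_j$-randomness, union-bound over $\Omega_\Grid$ and $\ell = 0,1,2,3$, and trade $\varepsilon_1$ against $\varepsilon_5$ via Lemma~\ref{lem:matrixbernstein} to obtain the stated bound on $M$. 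Your observation that $\cE_{1,\varepsilon_1}$ is $\sigma(\bb_n)$-measurable and hence the conditioning preserves the independence and distribution of the $\bh_j$'s is correct and is exactly what justifies the paper's conditional application of Bernstein.

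One quantitative slip worth correcting, since you flag it yourself as a bookkeeping concern: the variance proxy in Bernstein is \emph{not} of order $CK\varepsilon_1^2$, and the per-summand norm is not $O(\sqrt{K/J}\,\varepsilon_1)$. The paper computes
\begin{equation*}
\sum_j \expval \widetilde{\bZ}_j^H \widetilde{\bZ}_j = \sum_j \trace\Bigl(\widetilde{\bQ}_j^H\widetilde{\bQ}_j \,\expval[\bh_j\bh_j^H]\Bigr) = \frac{1}{K}\norm[F]{\widetilde{\bQ}}^2 \leq C\varepsilon_1^2,
\end{equation*}
using the fact (Lemma~\ref{lem:expvalunitsphere}) that $\expval\bh_j\bh_j^H = \tfrac{1}{K}\bI_K$ for $\bh_j$ uniform on $\mathbb{CS}^{K-1}$. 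That extra $\tfrac{1}{K}$ is what cancels the $K$ coming from $\norm[F]{\widetilde{\bQ}}^2 \leq CK\varepsilon_1^2$; without it, Bernstein would force $\varepsilon_1 \lesssim \varepsilon_5 / \sqrt{K\log(\cdot)}$ and the sample-complexity bound would acquire an extra factor of $K$ (giving $\mu J K^2$ rather than $\mu J K$). Likewise, the per-term bound is simply $R = \norm[]{\widetilde{\bQ}_j} \leq \norm[]{\widetilde{\bQ}} \leq \norm{\bL - \bL'\otimes\bI_K}\,\norm{\expval\bV_\ell(\tau_d)} \leq C\varepsilon_1$, with no $\sqrt{K/J}$ factor. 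Once you substitute $\sigma^2 = C\varepsilon_1^2$ and $R = C\varepsilon_1$, the Bernstein bound and the choice $\varepsilon_1 \asymp \varepsilon_5/\log(|\Omega_\Grid|K/\delta)$ produce exactly the stated requirement on $M$, so the final answer you wrote is internally consistent only once this $\tfrac{1}{K}$ cancellation is made explicit.
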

The proof of Lemma \ref{lem:boundI2} is provided in Appendix \ref{appendix:lemma5.10}. \\

\subsubsection{Show that $\frac{1}{\sqrt{|K''_M(0)|}^{\ell}} \bq^{\ell} (\tau)$ is close to $ \frac{1}{\sqrt{|K''_M(0)|}^{\ell}} \overline{\bq}^{\ell} (\tau)$ on the set of points $\Omega_{\Grid}$}
\label{subsubsec: bounddeviationongridpoint}
Define the event
\begin{equation*}
\begin{aligned}
\cE = \left\{\sup_{\tau_d \in \Omega_{\Grid}}\norm[2]{\frac{1}{\sqrt{|K''_M(0)|}^{\ell}} \bq^{\ell} (\tau_d) -  \frac{1}{\sqrt{|K''_M(0)|}^{\ell}} \overline{\bq}^{\ell} (\tau_d)}
\vphantom{\sup_{\tau_d \in \Omega_{\Grid}}\norm[2]{\frac{1}{\sqrt{|K''_M(0)|}^{\ell}} \bq^{\ell} (\tau_d) -  \frac{1}{\sqrt{|K''_M(0)|}^{\ell}} \overline{\bq}^{\ell} (\tau_d)}} \leq \frac{\varepsilon}{3}, \ell = 0,1,2,3\right\}.
 \end{aligned}
\end{equation*}

Lemmas \ref{lem:vectorbernstein} and \ref{lem:boundI2} along with the decomposition (\ref{eq:decomposition}) immediately result in the following proposition.
\begin{prop}
\label{prop:concentrationeq}
Assume that $\Omega_{\Grid} \subset [0,1)$ is a finite set of points. Let $\delta\in (0,1)$ be the failure probability. Then, there exists a numerical constant $C$ such that when
\begin{equation*}
\begin{aligned}
M \geq C\mu JK \max \left\{ \frac{1}{\varepsilon^2}\log\left(\frac{|\Omega_{\Grid}|JK}{\delta}\right)
\vphantom{\frac{1}{\varepsilon^2}\log\left(\frac{|\Omega_{\Grid}|JK}{\delta}\right) \log^2\left(\frac{|\Omega_{\Grid}|K}{\delta}\right)} \log^2\left(\frac{|\Omega_{\Grid}|K}{\delta}\right),\log\left(\frac{JK}{\delta} \right)\right\},
\end{aligned}
\end{equation*}
we have
\begin{equation*}
\mathbb{P} (\cE) \geq 1-\delta.
\end{equation*}
\end{prop}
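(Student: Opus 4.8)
The plan is to read off Proposition~\ref{prop:concentrationeq} as an immediate consequence of the pointwise decomposition~\eqref{eq:decomposition} together with Lemmas~\ref{lem:vectorbernstein} and~\ref{lem:boundI2}. Recall from~\eqref{eq:decomposition} that for each $\ell\in\{0,1,2,3\}$ and each $\tau$,
\begin{equation*}
\frac{1}{\sqrt{|K''_M(0)|}^{\ell}}\bq^{\ell}(\tau) - \frac{1}{\sqrt{|K''_M(0)|}^{\ell}}\overline{\bq}^{\ell}(\tau) = \bI_1^{\ell}(\tau) + \bI_2^{\ell}(\tau),
\end{equation*}
so by the triangle inequality the quantity controlled in the event $\cE$ is bounded, uniformly over $\tau_d\in\Omega_{\Grid}$ and $\ell$, by $\sup_{\tau_d}\norm[2]{\bI_1^{\ell}(\tau_d)} + \sup_{\tau_d}\norm[2]{\bI_2^{\ell}(\tau_d)}$. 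Hence it suffices to drive each of these two suprema below $\varepsilon/6$ with high probability.

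First I would invoke Lemma~\ref{lem:vectorbernstein} with $\varepsilon_4=\varepsilon/6$ and Lemma~\ref{lem:boundI2} with $\varepsilon_5=\varepsilon/6$, in both cases running the argument with the failure parameter $\delta$ replaced by a suitable constant multiple of itself (concretely $\delta/20$, so that the union of the two bad events has probability at most $12\cdot(\delta/20)+8\cdot(\delta/20)=\delta$). Each lemma then demands a lower bound on $M$ of the form $C\mu JK\cdot(\text{polylog})/\varepsilon^2$; the bound coming from Lemma~\ref{lem:vectorbernstein} dominates the one from Lemma~\ref{lem:boundI2}, since $\log(|\Omega_{\Grid}|JK/\delta)\ge\log(JK/\delta)$ and $\varepsilon_4=\varepsilon_5$. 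Substituting $\varepsilon_4=\varepsilon/6$ turns $1/\varepsilon_4^2$ into $36/\varepsilon^2$, and the rescaling of $\delta$ inflates each logarithm by at most an additive constant; absorbing both effects into the numerical constant $C$ reproduces exactly the sample-complexity condition stated in the proposition (here $|\Omega_{\Grid}|,J,K\ge 1$ and $\delta<1$, so the relevant logarithms are bounded below and the absorption is legitimate).

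On the intersection of the two good events---which by the union bound has probability at least $1-\delta$ under the stated condition on $M$---we simultaneously have $\sup_{\tau_d\in\Omega_{\Grid}}\norm[2]{\bI_1^{\ell}(\tau_d)}\le\varepsilon/6$ and $\sup_{\tau_d\in\Omega_{\Grid}}\norm[2]{\bI_2^{\ell}(\tau_d)}\le\varepsilon/6$ for $\ell=0,1,2,3$, so the displayed bound defining $\cE$ holds with right-hand side $\varepsilon/3$. This yields $\mathbb{P}(\cE)\ge 1-\delta$, as claimed.

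Since the statement is obtained simply by assembling two lemmas that have already been proved, there is essentially no analytic difficulty here; the only thing requiring care is the bookkeeping of constants---splitting the target error $\varepsilon/3$ evenly between the two error terms, choosing the failure-probability rescaling so that the union bound closes at $1-\delta$, and checking that this rescaling together with $\varepsilon_4=\varepsilon_5=\varepsilon/6$ only perturbs the sample-complexity bound by constant factors that $C$ can swallow. If anything is delicate, it is merely ensuring that the implicit smallness requirements on $\varepsilon_1$ inherited from $\cE_{1,\varepsilon_1}$ (through Lemmas~\ref{lem:unionfailprob} and~\ref{lem:frobeniusnorm2}) are consistent with the parameter choices, but this has already been arranged inside the proofs of Lemmas~\ref{lem:vectorbernstein} and~\ref{lem:boundI2}, which we invoke as black boxes.
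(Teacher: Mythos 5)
Your argument is correct and matches the paper's intent: the paper itself dismisses Proposition~\ref{prop:concentrationeq} with a single remark that it follows immediately from Lemmas~\ref{lem:vectorbernstein} and~\ref{lem:boundI2} together with the decomposition~\eqref{eq:decomposition}, and your write-up makes explicit exactly this assembly (triangle inequality on the decomposition, split the target $\varepsilon/3$ as $\varepsilon/6 + \varepsilon/6$, rescale the per-lemma failure probability to $\delta/20$ so the union bound closes, note that the Lemma~\ref{lem:vectorbernstein} complexity bound dominates, and absorb the constant rescalings of $\varepsilon$ and $\delta$ into $C$).
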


\subsubsection{Show that $\norm[2]{\bq (\tau)} <1$ everywhere,  $\tau \in [0,1) \setminus \mathbb{D}$}
\label{subsubsec: showeverywhere}
The following lemma shows that $\frac{1}{\sqrt{|K''_M(0)|}^{\ell}} \bq^{\ell} (\tau)$ is close to $ \frac{1}{\sqrt{|K''_M(0)|}^{\ell}} \overline{\bq}^{\ell} (\tau)$ everywhere in $[0,1)$. The proof, which involves Bernstein's polynomial inequality, is given in Appendix \ref{proofofLemma5_11}.
\begin{lemma}
\label{lem:continuousextension}
Assume that $\Delta_{\tau} \geq \frac{1}{M}$. Then for all $\tau\in [0,1)$ and $\ell = 0,1,2,3$, the following holds with probability at least $1-\delta$
\begin{equation*}
 \norm[2]{\frac{1}{\sqrt{|K''_M(0)|}^{\ell}} \bq^{\ell} (\tau) -   \frac{1}{\sqrt{|K''_M(0)|}^{\ell}} \overline{\bq}^{\ell} (\tau)} \leq \varepsilon
\end{equation*}
provided that
$M \geq C\mu JK \max \left\{ \frac{1}{\varepsilon^2}\log\left(\frac{MJK}{\varepsilon \delta}\right) \log^2\left(\frac{MK}{\varepsilon \delta}\right), \log\left(\frac{JK}{\delta} \right)\right\}$ for some numerical constant $C$.
\end{lemma}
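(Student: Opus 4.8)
The plan is to upgrade the on-grid estimate of Proposition~\ref{prop:concentrationeq} to a uniform-in-$\tau$ bound by the standard discretization-and-interpolation device, exploiting that $\bq^{\ell}(\tau)$ and $\overline{\bq}^{\ell}(\tau)$ are $\complex^{K}$-valued trigonometric polynomials of degree $2M$ (immediate for $\bq^{\ell}$ from \eqref{eq:cond}, and for $\overline{\bq}^{\ell}$ because the kernel $K_M$ has degree $2M$). Throughout I would write $\boldsymbol{\Delta}_{\ell}(\tau) = \frac{1}{\sqrt{|K''_M(0)|}^{\ell}}\bigl(\bq^{\ell}(\tau)-\overline{\bq}^{\ell}(\tau)\bigr)$ and $B_{\ell} = \sup_{\tau\in[0,1)}\norm[2]{\boldsymbol{\Delta}_{\ell}(\tau)}$ for $\ell=0,1,2,3$; the target is $B_{\ell}\le\varepsilon$ for all four $\ell$.

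First I would fix a uniform grid $\Omega_{\Grid}\subset[0,1)$ whose cardinality is taken proportional to $M/\varepsilon$, so that every $\tau$ is within $\tfrac{1}{2|\Omega_{\Grid}|}$ of a grid point, and invoke Proposition~\ref{prop:concentrationeq} with $\varepsilon/3$ in place of $\varepsilon$: on the resulting event $\cE$ (which also forces $\bD$ to be invertible, so that $\bq^{\ell}$ is well defined) one has $\sup_{\tau_d\in\Omega_{\Grid}}\norm[2]{\boldsymbol{\Delta}_{\ell}(\tau_d)}\le\varepsilon/3$ for $\ell=0,1,2,3$, provided $M$ satisfies the lower bound of that proposition for this $|\Omega_{\Grid}|$. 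Next, since differentiation carries $\bq^{\ell}$ to $\bq^{\ell+1}$ and $\overline{\bq}^{\ell}$ to $\overline{\bq}^{\ell+1}$, one gets $\tfrac{d}{d\tau}\boldsymbol{\Delta}_{\ell}(\tau)=\sqrt{|K''_M(0)|}\,\boldsymbol{\Delta}_{\ell+1}(\tau)$ for $\ell=0,1,2$, hence $\sup_\tau\norm[2]{\tfrac{d}{d\tau}\boldsymbol{\Delta}_{\ell}(\tau)}\le\sqrt{|K''_M(0)|}\,B_{\ell+1}\le\tfrac{2\pi M}{\sqrt 3}B_{\ell+1}$. At the top order one cannot iterate further, so instead I would apply Bernstein's polynomial inequality to $\langle\bq^{3}(\tau)-\overline{\bq}^{3}(\tau),\bv\rangle$ for a fixed unit vector $\bv$ and take a supremum over $\bv$, obtaining $\sup_\tau\norm[2]{\tfrac{d}{d\tau}\boldsymbol{\Delta}_{3}(\tau)}\le 4\pi M\,B_{3}$ (the power of $\sqrt{|K''_M(0)|}$ cancelling).

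Then I would close the argument by interpolation: for each $\tau$, using the nearest grid point and the mean-value inequality,
\[
B_{\ell}\;\le\;\frac{\varepsilon}{3}+\frac{1}{2|\Omega_{\Grid}|}\sup_\tau\norm[2]{\tfrac{d}{d\tau}\boldsymbol{\Delta}_{\ell}(\tau)},\qquad \ell=0,1,2,3 .
\]
Inserting the bounds above, the $\ell=3$ inequality becomes $B_{3}\le\tfrac{\varepsilon}{3}+\tfrac{2\pi M}{|\Omega_{\Grid}|}B_{3}$, which for $|\Omega_{\Grid}|$ a large enough multiple of $M$ gives $B_{3}\le\varepsilon$; back-substituting this into the $\ell=2$ inequality bounds $B_{2}$, then $B_{1}$, then $B_{0}$, and a suitably large absolute constant in $|\Omega_{\Grid}|\asymp M/\varepsilon$ makes every $B_{\ell}\le\varepsilon$. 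Finally, plugging $|\Omega_{\Grid}|\asymp M/\varepsilon$ into the hypothesis of Proposition~\ref{prop:concentrationeq} turns $\log(|\Omega_{\Grid}|JK/\delta)$ and $\log(|\Omega_{\Grid}|K/\delta)$ into $\log(MJK/(\varepsilon\delta))$ and $\log(MK/(\varepsilon\delta))$ up to constants, and absorbing the $\varepsilon/3$ rescaling into $C$ yields exactly the stated bound on $M$; the event has probability at least $1-\delta$.

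The conceptually substantive work is already contained in Proposition~\ref{prop:concentrationeq} and the matrix-Bernstein estimates feeding it, so what remains is largely bookkeeping. The one point that genuinely needs care is recognizing that the chain of derivative estimates $\ell\mapsto\ell+1$ must terminate and be closed at $\ell=3$ through Bernstein's polynomial inequality rather than by passing to a never-controlled order $\ell=4$, together with keeping the interaction among the grid cardinality, the factors $\sqrt{|K''_M(0)|}\asymp M$, and the four coupled inequalities for $B_{0},\dots,B_{3}$ consistent so that all constants and failure probabilities line up.
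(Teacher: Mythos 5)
Your plan is correct, and the grid-plus-interpolation skeleton is the same as the paper's, but the mechanism you use to control the Lipschitz behaviour between grid points is genuinely different from what the paper does. The paper bounds the modulus of continuity of $\bq^{\ell}(\tau;p)$ \emph{per coordinate} by combining Bernstein's polynomial inequality with an explicit a priori bound $\sup_\tau \bigl|\tfrac{1}{\sqrt{|K''_M(0)|}^{\ell}}\bq^{\ell}(\tau;p)\bigr|\le C\mu J\sqrt K$ (obtained from $\norm[2]{\bV_\ell(\tau;p)}$, $\|\bL\|$, and $\|\bh\|$ on the event $\cE_{1,\varepsilon_1}$), then sums over the $K$ coordinates and invokes the hypothesis $M\gtrsim\mu JK$ to collapse the Lipschitz constant to $O(M^2)$; this forces a grid of cardinality $\asymp M^2/\varepsilon$. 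You instead observe the telescoping identity $\tfrac{d}{d\tau}\boldsymbol\Delta_\ell=\sqrt{|K''_M(0)|}\,\boldsymbol\Delta_{\ell+1}$ for $\ell\le 2$, close the chain at $\ell=3$ with a single application of Bernstein applied to $\langle\boldsymbol\Delta_3,\bv\rangle$, and then solve the four coupled self-improving inequalities $B_\ell\le\varepsilon/3+\tfrac{1}{2|\Omega_{\Grid}|}\sup_\tau\|\boldsymbol\Delta_\ell'\|_2$ top-down, which needs only $|\Omega_{\Grid}|\gtrsim M$. Your route avoids the separate a priori bound on $\sup|\bq^{\ell}(\tau;p)|$ and the use of $M\ge\mu JK$ in the Lipschitz step, and uses a smaller grid; both routes produce the same log factors in the statement because $\log(M^2JK/(\varepsilon\delta))\asymp\log(MJK/(\varepsilon\delta))$, so the final bound on $M$ is identical up to the absolute constant. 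One small but worth-noting point: you implicitly use that $B_3<\infty$ (true since $\boldsymbol\Delta_3$ is a trigonometric polynomial, hence bounded) in order to solve $B_3\le\varepsilon/3+\alpha B_3$ with $\alpha<1$; and you should state the Lipschitz bound for $\overline{\bq}^\ell$ explicitly rather than only via $\boldsymbol\Delta_\ell$ — though your formulation in terms of $\boldsymbol\Delta_\ell$ already handles it cleanly, which is an advantage over the paper's presentation, where the Lipschitz bound is written only for $\bq^\ell$ but tacitly applied to $\overline{\bq}^\ell$ as well.
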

Next, we show that $\norm[2]{\bq(\tau)}<1$ everywhere, $\tau \in [0,1)\setminus\mathbb{D}$.
To do this, define
\begin{equation*}
\Omega_{\near} = \bigcup_{j = 1}^{J} [\tau_j-\tau_{b,1}, \tau_j + \tau_{b,1}],
\end{equation*}
\begin{equation*}
\Omega_{\far} = [0,1) \setminus \Omega_{\near}
\end{equation*}
with $\tau_{b,1} = 8.245\times 10^{-2} \frac{1}{M}$.
An immediate consequence of Lemma \ref{lem:continuousextension} is the following result, which verifies that $\norm[2]{\bq(\tau)} <1, \forall~\tau\in \Omega_{\far}$.
\begin{lemma}
\label{lem:farregionproof}
Assume that $\Delta_{\tau} \geq \frac{1}{M}$ and that
\begin{equation*}
M \geq C\mu JK\log\left(\frac{MJK}{ \delta}\right) \log^2\left(\frac{MK}{ \delta}\right),
\end{equation*}
for some positive numerical constant $C$.
Then
\begin{equation*}
\norm[2]{\bq (\tau)} <1, ~~~~\forall~\tau \in \Omega_{\far}
\end{equation*}
with probability at least $1-\delta$.
\end{lemma}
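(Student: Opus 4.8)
The plan is to obtain the bound on $\Omega_{\far}$ by combining two ingredients: (i) the known behavior of the \emph{expected} vector dual polynomial $\overline{\bq}(\tau)$ away from the support, and (ii) the uniform concentration estimate of Lemma~\ref{lem:continuousextension} with $\ell=0$. For (i), recall that $\overline{\bq}(\tau)$ is exactly the MMV dual polynomial of~\cite{yang2014exact}, which is built from the squared Fej\'er kernel as in~\cite{candes2014towards,tang2013compressed}. The deterministic analysis there shows that, under $\Delta_\tau \ge 1/M$, there is an absolute constant $c_{\far}\in(0,1)$ such that
\begin{equation*}
\norm[2]{\overline{\bq}(\tau)} \le 1 - c_{\far} \qquad \text{for all } \tau \in \Omega_{\far}.
\end{equation*}
Indeed, the specific radius $\tau_{b,1} = 8.245\times 10^{-2}/M$ defining $\Omega_{\near}$ is chosen precisely so that outside the $\tau_{b,1}$-neighborhoods of the $\tau_j$ the Fej\'er-kernel bounds, together with the interpolation identities defining $\overline{\balpha}_j,\overline{\bbeta}_j$, force $\norm[2]{\overline{\bq}(\tau)}$ below $1$ by this fixed margin.

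For (ii), apply Lemma~\ref{lem:continuousextension} with $\ell = 0$ and with $\varepsilon$ set to a small absolute constant, say $\varepsilon = c_{\far}/2$. Since $\sqrt{|K''_M(0)|}^{0} = 1$, the lemma gives, with probability at least $1-\delta$,
\begin{equation*}
\sup_{\tau \in [0,1)} \norm[2]{\bq(\tau) - \overline{\bq}(\tau)} \le \frac{c_{\far}}{2},
\end{equation*}
provided $M \ge C\mu JK \log(MJK/\delta)\log^2(MK/\delta)$. Here, with $\varepsilon$ now a fixed number, the factor $1/\varepsilon^2$ and the additive $\log(1/\varepsilon)$ terms inside the logarithms in the hypothesis of Lemma~\ref{lem:continuousextension} are absorbed into the numerical constant $C$ (using $M\ge 64$, $J,K\ge 1$, $\delta<1$ so that the logarithmic factors are bounded below by a positive constant), and the $\log(JK/\delta)$ branch of the maximum is dominated by the first branch; this recovers exactly the sample-complexity bound stated in the lemma.

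Combining the two via the triangle inequality then finishes the argument: on the event of probability at least $1-\delta$ coming from Lemma~\ref{lem:continuousextension}, for every $\tau \in \Omega_{\far}$,
\begin{equation*}
\norm[2]{\bq(\tau)} \le \norm[2]{\overline{\bq}(\tau)} + \norm[2]{\bq(\tau) - \overline{\bq}(\tau)} \le (1 - c_{\far}) + \frac{c_{\far}}{2} = 1 - \frac{c_{\far}}{2} < 1.
\end{equation*}

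The only real obstacle is ingredient (i): one must be certain that the quantitative margin $\norm[2]{\overline{\bq}(\tau)} \le 1 - c_{\far}$ genuinely holds on $\Omega_{\far}$ with the precise radius $\tau_{b,1}$ used here. This is the content of the corresponding far-region estimate for the expected dual polynomial in~\cite{yang2014exact} (the vector analogue of the scalar estimate in~\cite{candes2014towards}); if one does not cite it as a black box, it must be re-derived from explicit upper bounds on the squared Fej\'er kernel $K_M$ and its first two derivatives on $\Omega_{\far}$ together with the bounds on $\overline{\balpha}_j,\overline{\bbeta}_j$ obtained from $\norm[]{(\expval\bD)^{-1}}$ in Lemma~\ref{le:lemma1}. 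Everything after that is the one-line triangle-inequality step above plus the routine bookkeeping of absorbing the fixed constant $\varepsilon$ into the sample-complexity constant.
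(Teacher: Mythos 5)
Your proposal is correct and follows essentially the same route as the paper: apply Lemma~\ref{lem:continuousextension} with $\ell=0$ and a small fixed $\varepsilon$ (the paper takes $\varepsilon=10^{-5}$), bound $\norm[2]{\overline{\bq}(\tau)}$ on $\Omega_{\far}$ by an absolute constant strictly below $1$ (the paper's value is $0.99992$), and conclude by the triangle inequality, with the $1/\varepsilon^2$ and $\log(1/\varepsilon)$ factors absorbed into $C$. The only cosmetic difference is how the far-region estimate for the deterministic polynomial is sourced---you cite the MMV bound of \cite{yang2014exact} as a black box, while the paper reduces $\norm[2]{\overline{\bq}(\tau)}$ to $\sum_j|(\bv(\tau)^H\bL')(j)|$ via $\sup_{\norm[2]{\bu}=1}\sum_j(\bv(\tau)^H\bL')(j)(\bu^H\sign(c_j)\bh_j)$ and $|\bu^H\sign(c_j)\bh_j|\le1$, then invokes the scalar far-region estimate from the proof of Lemma~2.4 in \cite{candes2014towards}; either citation closes the gap you flagged.
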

\begin{proof} Taking $\varepsilon = 10^{-5}$ in Lemma \ref{lem:continuousextension}, we obtain that
\begin{equation*}
\begin{aligned}
\norm[2]{\bq (\tau)} & \leq \norm[2]{\bq(\tau) - \overline{\bq}(\tau) }+\norm[2]{\overline{\bq} (\tau)} \\
& \leq 10^{-5} + \norm[2]{\overline{\bq} (\tau)}.
\end{aligned}
\end{equation*}
In order to bound $\norm[2]{\bq(\tau)}$, it remains to control $\norm[2]{\overline{\bq}(\tau)} $ for $\tau \in \Omega_{\far}$. From (\ref{eq:predecomposition}), note that
\begin{equation*}
\begin{aligned}
\norm[2]{\overline{\bq} (\tau)} & = \sup_{\bu: \norm[2]{\bu} = 1} \bu^H \left(\left[\expval \bV(\tau)\right]^H  (\bL' \otimes \bI_K) \bh \right) \\
& = \sup_{\bu: \norm[2]{\bu} = 1} \bu^H \left(( \bv(\tau) \otimes \bI_K)^H (\bL' \otimes \bI_K) \bh \right)\\
& = \sup_{\bu: \norm[2]{\bu} = 1}\sum_{j=1}^J  \left(\bv (\tau)^H \bL'\right)(j) (\bu^H \sign(c_j) \bh_j )\\
& \leq 0.99992,
\end{aligned}
\end{equation*}
where for the third line above we have, with some abuse of notation, denoted $\left(\bv(\tau)^H \bL'\right)(j) $ as the $j$th entry of the row vector $\bv(\tau)^H \bL' $. The fourth line follows from the fact that $|\bu^H \sign(c_j) \bh_j|\leq 1$ and from the proof of Lemma 2.4 in \cite{candes2014towards} for $\tau\in \Omega_{\far}$. Thus, we have shown
\begin{equation*}
\norm[2]{\bq(\tau)}<1, ~~~~\forall~\tau \in \Omega_{\far}.
\end{equation*}
\end{proof}
The next lemma shows that $\norm[2]{\bq(\tau)} <1$ when $\tau\in \Omega_{\near}$.
\begin{lemma}
\label{lem:nearregionproof}
Assume that $\tau\in \Omega_{\near}$. Then as long as
\begin{equation*}
M \geq C\mu JK  \log\left(\frac{MJK}{ \delta}\right) \log^2\left(\frac{MK}{ \delta}\right),
\end{equation*}
we have
\begin{equation*}
\norm[2]{\bq(\tau)}<1, ~~~~\forall~\tau \in \Omega_{\near}
\end{equation*}
with probability at least $1-\delta$.
\end{lemma}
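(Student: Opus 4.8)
The plan is to adapt the classical local-curvature argument of Cand\`es--Fern\'andez-Granda \cite{candes2014towards} — as refined for vector-valued dual polynomials in the multiple-measurement-vector analyses \cite{yang2014exact,li2014off} — and to transfer it from the deterministic template $\overline{\bq}$ to the random polynomial $\bq$ using the uniform closeness supplied by Lemma \ref{lem:continuousextension}.

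First I would reduce the statement to a second-derivative bound. Set $p(\tau) := \norm[2]{\bq(\tau)}^2 = \bq(\tau)^H\bq(\tau)$, so that $\norm[2]{\bq(\tau)}<1$ on $\Omega_{\near}\setminus\mathbb{D}$ is equivalent to $p(\tau)<1$ there. The interpolation constraints \eqref{subcond1}--\eqref{subcond2} enforced in the construction of $\bq$ give, for every $\tau_j\in\mathbb{D}$, the exact identities $p(\tau_j)=\norm[2]{\bh_j}^2=1$ and $p'(\tau_j)=2\,\mathrm{Re}\left(\bq(\tau_j)^H\bq'(\tau_j)\right)=0$ (since $\bq'(\tau_j)=\bzero$). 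Because $\tau_{b,1}=8.245\times 10^{-2}/M<\tfrac12\Delta_\tau$, the intervals $I_j:=[\tau_j-\tau_{b,1},\tau_j+\tau_{b,1}]$ are pairwise disjoint and $\Omega_{\near}=\bigcup_j I_j$; hence it suffices to prove $p''(\tau)\le -c\,M^2<0$ on each $I_j$, after which a second-order Taylor expansion with Lagrange remainder yields $p(\tau)=1+\tfrac12 p''(\xi)(\tau-\tau_j)^2<1$ for all $\tau\in I_j\setminus\{\tau_j\}$.

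Next I would invoke the known behaviour of the deterministic template. The polynomial $\overline{\bq}(\tau)$ is precisely the squared-Fej\'er-kernel dual polynomial of \cite{yang2014exact}, and the scalar estimates of \cite{candes2014towards} (whose constants dictate the value of $\tau_{b,1}$) give, on $\bigcup_j I_j$, the bounds $\norm[2]{\overline{\bq}(\tau)}\le 1$, $\norm[2]{\overline{\bq}'(\tau)}\lesssim M^2|\tau-\tau_j|\lesssim M$, $\norm[2]{\overline{\bq}''(\tau)}\lesssim M^2$, and, most importantly, the curvature bound $\overline{p}''(\tau)\le -C_a M^2$ for an absolute constant $C_a>0$, where $\overline{p}(\tau):=\overline{\bq}(\tau)^H\overline{\bq}(\tau)$. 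Then I would apply Lemma \ref{lem:continuousextension} with a sufficiently small absolute constant $\varepsilon$ (fixed below): under the stated lower bound on $M$ it holds with probability at least $1-\delta$ and gives $\norm[2]{\bq^\ell(\tau)-\overline{\bq}^\ell(\tau)}\le\varepsilon\,|K_M''(0)|^{\ell/2}$ for $\ell=0,1,2$ and all $\tau$. Since $|K_M''(0)|=\tfrac{4\pi^2(M^2-1)}{3}\asymp M^2$, the functions $\bq,\bq',\bq''$ track $\overline{\bq},\overline{\bq}',\overline{\bq}''$ within $\varepsilon$, $O(\varepsilon M)$ and $O(\varepsilon M^2)$, respectively. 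Writing $p''(\tau)=2\norm[2]{\bq'(\tau)}^2+2\,\mathrm{Re}\left(\bq(\tau)^H\bq''(\tau)\right)$ and the analogous identity for $\overline{p}''$, expanding each factor as ``template plus error'', and using the triangle and Cauchy--Schwarz inequalities with the template bounds above, I would conclude $|p''(\tau)-\overline{p}''(\tau)|\le C'\varepsilon M^2$ uniformly on $\Omega_{\near}$. Choosing $\varepsilon$ with $C'\varepsilon\le C_a/2$ gives $p''(\tau)\le-\tfrac{C_a}{2}M^2<0$ on $\Omega_{\near}$, which combined with the Taylor step completes the proof.

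The main obstacle is the bookkeeping of powers of $M$: the perturbation of $\bq''$ is of order $\varepsilon|K_M''(0)|\asymp\varepsilon M^2$, the same order as the deterministic curvature $\overline{p}''$ that it must not overwhelm, so the argument closes only because $\varepsilon$ is a free absolute constant chosen much smaller than the fixed kernel constant $C_a$. Relatedly, the cross term $\mathrm{Re}\left(\bq(\tau)^H\bq''(\tau)\right)$ couples the $O(\varepsilon)$ error of $\bq$ with the $O(M^2)$ magnitude of $\bq''$, and one must check that each such product contributes only $O(\varepsilon M^2)$ rather than anything larger. Extracting the explicit constant $C_a$ — and hence pinning down the radius $\tau_{b,1}$ — from the squared-Fej\'er-kernel analysis of \cite{candes2014towards,yang2014exact} is the other part needing care; the remaining steps are routine Taylor estimates on a short interval.
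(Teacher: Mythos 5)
Your proposal matches the paper's proof in essentially every respect: both reduce the claim to negativity of the second derivative of $\|\bq(\tau)\|_2^2$ on $\Omega_{\near}$ via the Taylor expansion around each $\tau_j$ (using the interpolation constraints to kill the zeroth- and first-order terms), both transfer the deterministic curvature bound $\norm[2]{\overline{\bq}'(\tau)}^2 + \mathrm{Re}\{\overline{\bq}''(\tau)^H\overline{\bq}(\tau)\}\le -cM^2$ from the squared-Fej\'er kernel analysis, and both use Lemma~\ref{lem:continuousextension} with an absolute constant $\varepsilon$ (the paper takes $\varepsilon=10^{-5}$) to argue the $O(\varepsilon M^2)$ perturbation of the second derivative cannot overwhelm the $-cM^2$ template curvature. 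The only cosmetic difference is that you keep the argument at the level of unspecified constants while the paper tracks explicit ones (e.g., $\norm[2]{\overline{\bq}(\tau)}\le 1.0361$ rather than the $\le 1$ you state, $-0.3756M^2$ as the curvature constant); this does not affect the logic of your argument.
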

\begin{proof} First of all, for $\tau_j \in \mathbb{D}$, we have $\frac{d\norm[2]{\bq(\tau)}^2}{d \tau}|_{\tau = \tau_j} = 2 \langle\bq'(\tau_j), \bq(\tau_j) \rangle_{\mathbb{R}}  = 0$. The Taylor's expansion of $\norm[2]{\bq(\tau)}^2$ in the interval $\left[ \tau_j-\tau_{b,1}, \tau_j + \tau_{b,1}\right]$ is given by 
\begin{equation*}
\begin{aligned}
\norm[2]{\bq(\tau)}^2 & = \norm[2]{\bq(\tau_j)}^2 + \frac{d\norm[2]{\bq(\tau)}^2}{d \tau}|_{\tau = \tau_j} (\tau-\tau_j) + \frac{1}{2} \frac{d^2 \norm[2]{\bq(\tau)}^2}{d \tau^2}|_{\tau = \xi} (\tau-\tau_j)^2,
\end{aligned}
\end{equation*}
for some $\xi \in \left[ \tau_j-\tau_{b,1}, \tau_j + \tau_{b,1}\right]$.
This implies that in order to show $\norm[2]{\bq(\tau)}<1,\tau \in \Omega_{\near}$, it is sufficient to verify that
\begin{equation*}
\begin{aligned}
\frac{1}{2} \frac{d^2 \norm[2]{\bq(\tau)}^2}{d \tau^2}&  = \norm[2]{\bq'(\tau)}^2 +{\text {Re}}\left\{ \bq''(\tau)^H \bq(\tau) \right\}<0
\end{aligned}
\end{equation*}
for $\tau \in \Omega_{\near}$. Note that
\begin{equation*}
\begin{aligned}
 \frac{1}{|K''_M(0)| } \norm[2]{\bq'(\tau)}^2
   & = \norm[2]{\frac{1}{\sqrt{|K''_M(0)|} }\bq'(\tau)}^2 \\
  & = \norm[2]{\frac{1}{\sqrt{|K''_M(0)|}}\left(\bq'(\tau) - \overline{\bq}'(\tau) + \overline{\bq}'(\tau) \right)}^2 \\
 & \leq \varepsilon^2 + 2 \varepsilon \norm[2]{\frac{1}{\sqrt{|K''_M(0)|} } \overline{\bq}'(\tau)} + \norm[2]{\frac{1}{\sqrt{|K''_M(0)|} }\overline{\bq}'(\tau)}^2,
\end{aligned}
\end{equation*}
where the inequality above follows from Lemma \ref{lem:continuousextension}.
We also need the following estimate on $\norm[2]{\overline{\bq}'(\tau)}$:
\begin{equation*}
\begin{aligned}
\norm[2]{\overline{\bq}' (\tau)}  \leq 1.5765M
\end{aligned}
\end{equation*}
as given in Appendix \ref{sec:suppmaterial}.
Therefore, we have
\begin{equation*}
\begin{aligned}
 \frac{1}{|K''_M(0)| } \norm[2]{\bq'(\tau)}^2 
  & <  \varepsilon^2 + 1.7383 \varepsilon +  \norm[2]{\frac{1}{\sqrt{|K''_M(0)|} }\overline{\bq}'(\tau)}^2 \\
& =  \varepsilon^2 + 1.7383 \varepsilon + \frac{1}{|K_M''(0)|}\norm[2]{\overline{\bq}'(\tau)}^2,
\end{aligned}
\end{equation*}
where the inequality above follows from the fact that $\frac{1}{\sqrt{|K''_M(0)|} } < \frac{1}{M}\sqrt{\frac{3}{\pi^2}}$ for $M\geq 2$.

Next, we bound $\frac{1}{|K''_M(0)| }{\text {Re}}\left\{ \bq''(\tau)^H \bq(\tau) \right\}$:
\begin{equation*}
\begin{aligned}
& \frac{1}{|K''_M(0)| }{\text {Re}}\left\{ \bq''(\tau)^H \bq(\tau) \right\} \\
 & =  {\text {Re}} \left\{\left( \frac{1}{|K_M''(0)|}\left(\bq''(\tau) - \overline{\bq}''(\tau) \right)+ \frac{1}{|K_M''(0)|}\overline{\bq}''(\tau)\right)^H \vphantom{\left( \frac{1}{|K_M''(0)|}\left(\bq''(\tau) - \overline{\bq}''(\tau) \right)+ \frac{1}{|K_M''(0)|}\overline{\bq}''(\tau)\right)^H}\left(\bq(\tau) -\overline{\bq}(\tau) + \overline{\bq}(\tau)\right) \right\} \\
 & = {\text {Re}}\left\{\left( \frac{1}{|K_M''(0)|}\left(\bq''(\tau) - \overline{\bq}''(\tau) \right) \right)^H(\bq(\tau) - \overline{\bq}(\tau))\right\} + {\text {Re}}\left\{\left(\frac{1}{|K_M''(0)|}\overline{\bq}''(\tau)\right)^H \overline{\bq}(\tau) \right\} \\
 & ~~~~+  {\text {Re}}\left\{ \left( \frac{1}{|K_M''(0)|}\left(\bq''(\tau) - \overline{\bq}''(\tau) \right) \right)^H \overline{\bq}(\tau) \right\} + {\text {Re}}\left\{\left(\frac{1}{|K_M''(0)|}\overline{\bq}''(\tau)\right)^H (\bq(\tau)-\overline{\bq}(\tau))  \right\}\\
 &\leq \varepsilon^2 + 4.2498\varepsilon +  {\text {Re}}\left\{\left(\frac{1}{|K_M''(0)|}\overline{\bq}''(\tau)\right)^H \overline{\bq}(\tau) \right\},
\end{aligned}
\end{equation*}
where in the last inequality we have used Lemma \ref{lem:continuousextension}, the fact that $\norm[2]{\overline{\bq}(\tau)} \leq 1.0361$ for $\tau \in \Omega_{\near}$, and the fact that $\norm[2]{\overline{\bq}''(\tau)} \leq 21.1451M^2$  for $\tau \in \Omega_{\near}$ from Appendix \ref{sec:suppmaterial}.
Thus, we have
\begin{equation*}
\begin{aligned}
\frac{1}{|K_M''(0)|}\frac{1}{2} \frac{d^2 \norm[2]{\bq(\tau)}^2}{d \tau^2} & = \frac{1}{|K_M''(0)|}\left\{ \norm[2]{\bq'(\tau)}^2 +{\text {Re}}\left\{ \bq''(\tau)^H \bq(\tau) \right\} \right\}\\
 &<  2\varepsilon^2 + 5.9881 \varepsilon + \frac{1}{|K_M''(0)|} \left\{ \norm[2]{\overline{\bq}'(\tau)}^2 
\vphantom{ \norm[2]{\overline{\bq}'(\tau)}^2} + {\text {Re}}\left\{ \overline{ \bq}''(\tau)^H\overline{ \bq}(\tau) \right\} \right\} \\
 & \leq  2\varepsilon^2 + 5.9881 \varepsilon + \frac{1}{|K_M''(0)|} (-0.3756M^2) \\
 & <  2\varepsilon^2 + 5.9881  \varepsilon - 0.0285\\
& <  0,
\end{aligned}
\end{equation*}
where  in the fourth line above we have used the fact that $ \norm[2]{\overline{\bq}'(\tau)}^2 +{\text {Re}}\left\{ \overline {\bq}''(\tau)^H \overline{\bq}(\tau) \right\} \leq -0.3756 M^2$ from Appendix \ref{sec:suppmaterial}; see also  \cite{yang2014exact} for a similar argument. The fifth line follows because $|K_M''(0)| < \frac{4\pi^2 M^2}{3}$ and the last line follows by choosing $\varepsilon$ small enough, say $\varepsilon = 10^{-5}$. This completes the proof of Lemma \ref{lem:nearregionproof}.
\end{proof}
Combining Lemmas \ref{lem:farregionproof} and \ref{lem:nearregionproof} immediately shows that $\norm[2]{\bq(\tau)}<1$ everywhere, $\tau \in [0,1)\setminus\mathbb{D}$.

\section{Conclusions}
\label{sec:conclusions}
In this paper, we developed a new model for non-stationary blind super-resolution. In our model, the problem is naturally non-convex. Using the lifting trick, we formulated the problem as a convex program under a subspace assumption for the unknown waveforms. A sample complexity bound that is proportional to the number of degrees of freedom in the problem was derived for exact recovery under the condition that the locations of the point sources are sufficiently separated. Numerical simulations were provided to validate our proposed approach. Future directions include extending our model and provable recovery guarantee to the noisy data case, testing our proposed approach for real data applications such as blind super-resolution for single-molecule imaging, and relaxing the subspace assumption to a sparse dictionary model.

\section{Acknowledgements}

The authors would like to thank Yuejie Chi for helpful discussions on the application of the proposed model to blind multi-path channel identification in communication systems.

\appendix

\section{Proof of Proposition~\ref{optimalitycond}}
\label{sec:proofofmainthm}
The proof strategy follows quite straightforwardly from that in Proposition II.4 of  \cite{tang2013compressed}. First of all, any dual vector $\blambda$ satisfying (\ref{cond1}) and (\ref{cond2}) in Proposition \ref{optimalitycond} is dual feasible. To see this, note that
\begin{equation*}
\begin{aligned}
\|\cB^{*}\left(\blambda\right)\|_{\cA}^{*} & = \sup_{\norm[\cA]{\bX}\leq 1} \langle \cB^{*}\left(\blambda\right), \bX \rangle_{\real} \\
& = \sup_{\tau\in[0,1), \norm{\bh} = 1} \langle \cB^{*}\left(\blambda\right), \bh \ba(\tau)^H \rangle_{\real} \\
&  = \sup_{\tau\in[0,1), \norm{\bh} = 1} \left\langle \sum_n \blambda(n) \bb_n \be_n^H, \bh \ba(\tau)^H \right\rangle_{\real} \\
& = \sup_{\tau\in[0,1), \norm{\bh} = 1} {\text{Re}} (\bh^H \bq(\tau)) \\
& \leq \sup_{\tau\in[0,1)} \norm[2]{\bq(\tau)} \\
& \leq 1.
\end{aligned}
\end{equation*}
Here, the second equality comes from the fact that the atoms $\left\{\bh \ba(\tau)^H \right\}$ comprise all of the extremal points of the atomic unit ball $\left\{\bX:\norm[\cA]{\bX} \leq 1\right\}$.
Furthermore, for any $\blambda$ that satisfies (\ref{cond1}) in Proposition \ref{optimalitycond}, we have
\begin{align*}
\langle \blambda, \by \rangle_{\real} & =  \langle \cB^{*}(\blambda), \bX_o\rangle_{\mathbb{R}} \\
&  = \left\langle \cB^{*}(\blambda), \sum_{j=1}^J c_j \bh_j \ba (\tau_j)^H \right\rangle_{\real}\\
& = \sum_{j=1}^J {\text {Re}} \left(c_j^{*}~{\trace}\left( \ba(\tau_j) \bh_j^H\cB^{*}(\blambda) \right)\right)\\
& = \sum_{j = 1}^J {\text { Re}}\left( c_j^{*} \bh_j^H \bq(\tau_j) \right) \notag \displaybreak\\
& = \sum_{j= 1}^J {\text {Re}} \left( c_j^{*} \sign(c_j) \right)\\
& = \sum_{j = 1}^J |c_j|  \geq \| \bX_o\|_{\mathcal{A}},
\end{align*}
where the fifth line follows from (\ref{cond1}) in Proposition \ref{optimalitycond}, and the last line follows from the definition of the atomic norm. On the other hand, by H{\"o}lder's inequality, we have
\begin{equation*}
\begin{aligned}
\langle \blambda, \by \rangle_{\real} & = \langle \cB^{*} (\blambda), \bX_o\rangle_{\real} \\
& \leq \| \cB^{*}(\blambda)\|_{\cA}^{*} \norm[\cA]{\bX_o} \\
& \leq \norm[\cA]{\bX_o}.
\end{aligned}
\end{equation*}
This implies that any $\blambda$ satisfying (\ref{cond1}) and (\ref{cond2}) also satisfies $\langle \blambda, \by \rangle_{\real} = \| \bX_o\|_{\cA}$. Therefore, we have shown that the primal and dual problems have zero duality gap achieved by the primal feasible solution $\bX_o$ and dual feasible solution $\blambda$, which means that $\blambda$ is dual optimal and $\bX_o$ is primal optimal.

Finally, as we show below, condition (\ref{cond2}) ensures that $\bX_o$ is the unique optimal solution. To see this, suppose that there exists another optimal solution $\widetilde{\bX} = \sum_{j} \widetilde{c}_j \widetilde{\bh}_j \ba(\widetilde{\tau}_j)^H$. Then we can see that
\begin{equation*}
\begin{aligned}
\left\langle \blambda, \by \right\rangle_{\mathbb{R}}  & =  \left\langle  \mathcal{B}^{*}(\blambda), \widetilde{\bX} \right\rangle_{\mathbb{R}} \\
 & = \left\langle \mathcal{B}^{*}(\blambda), \sum_j \widetilde{c}_j \widetilde{\bh}_j \ba(\widetilde{\tau}_j)^H\right\rangle_{\mathbb{R}}\\
& = \sum_{\tau_j \in \mathbb{D}} {\text {Re}}\left( \widetilde{c}_j^* \widetilde{\bh}_j^H  \bq(\widetilde{\tau}_j)\right) +  \sum_{\tau_{\ell} \notin \mathbb{D}} {\text{Re}}\left( \widetilde{c}_{\ell}^{*}  \widetilde{\bh}_{\ell}^H  \bq(\widetilde{\tau}_{\ell}) \right)\\
& < \sum_{\tau_j \in \mathbb{D}} |\widetilde{c}_j| + \sum_{\tau_{\ell} \notin \mathbb{D}} |\widetilde{c}_{\ell}| \\
&  = \|\widetilde{\bX}\|_{\mathcal{A}}.
\end{aligned}
\end{equation*}
To show that $\{ c_j\bh_j \}$ are also unique, we can form the following linear system of equations:
\begin{equation*}
\begin{aligned}
& \begin{bmatrix}
 \ba(\tau_1)^H  \be_{-2M} \bb_{-2M}^H & \cdots &  \ba(\tau_J)^H  \be_{-2M} \bb_{-2M}^H \\
\vdots  & \ddots &  \vdots \\
\ba(\tau_1)^H  \be_{2M} \bb_{2M}^H & \cdots &  \ba(\tau_J)^H  \be_{2M} \bb_{2M}^H \\
\end{bmatrix} \begin{bmatrix} c_1 \bh_1 \\ \vdots \\ c_J \bh_J \end{bmatrix} =   \begin{bmatrix} \by(-2M) \\ \vdots \\ \by(2M)  \end{bmatrix}.
\end{aligned}
\end{equation*}
The linearly independent condition in Proposition \ref{optimalitycond} ensures that $\{c_j \bh_j, j = 1, \cdots, J \}$ are unique. Thus, $\bX_o$ is the unique optimal solution of the atomic norm minimization (\ref{eq:mainprog}) if conditions (\ref{cond1}), (\ref{cond2}), and the linearly independent condition are satisfied.

\section{Proof of Lemma~\ref{le:lemma1}}
\label{sec:prooflemma1}
We need the following supporting lemmas for proving Lemma~\ref{le:lemma1}.
\begin{lemma}\label{lem:invertiblekroneck}
For arbitrary two matrices $\bA$ and $\bB$, the non-zero singular values of their Kronecker product $\bA \otimes \bB$ are $\sigma_{i}(\bA) \sigma_{j}(\bB)$, where $\sigma_i(\bA)$ and $\sigma_j(\bB)$ are the non-zero singular values of $\bA$ and $\bB$, respectively. In particular, we have
\begin{equation*}
\norm[]{\bA \otimes \bB} = \norm[]{\bA} \norm[]{\bB}.
\end{equation*}
\end{lemma}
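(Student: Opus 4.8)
The plan is to deduce the statement directly from the singular value decomposition (SVD) together with the mixed-product property of the Kronecker product. First I would fix SVDs $\bA = \bU_A \bSigma_A \bV_A^H$ and $\bB = \bU_B \bSigma_B \bV_B^H$, where $\bU_A,\bV_A,\bU_B,\bV_B$ are unitary and $\bSigma_A,\bSigma_B$ are (possibly rectangular) diagonal matrices whose diagonals list the singular values of $\bA$ and $\bB$. Using the identities $(\bX_1\bX_2)\otimes(\bY_1\bY_2) = (\bX_1\otimes\bY_1)(\bX_2\otimes\bY_2)$ and $(\bX\otimes\bY)^H = \bX^H\otimes\bY^H$, I would factor
\[
\bA\otimes\bB \;=\; (\bU_A\otimes\bU_B)\,(\bSigma_A\otimes\bSigma_B)\,(\bV_A\otimes\bV_B)^H .
\]

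Next I would check that $\bU_A\otimes\bU_B$ and $\bV_A\otimes\bV_B$ are unitary --- this again follows from the mixed-product rule, since $(\bU_A\otimes\bU_B)(\bU_A\otimes\bU_B)^H = (\bU_A\bU_A^H)\otimes(\bU_B\bU_B^H) = \bI\otimes\bI = \bI$ --- and that $\bSigma_A\otimes\bSigma_B$ is a rectangular diagonal matrix whose diagonal entries are exactly the products $\sigma_i(\bA)\,\sigma_j(\bB)$. After reordering these entries to be nonincreasing (absorbing the corresponding column permutation into the outer unitary factors), the displayed factorization is a genuine SVD of $\bA\otimes\bB$. Hence the singular values of $\bA\otimes\bB$ are precisely the numbers $\sigma_i(\bA)\,\sigma_j(\bB)$, so the nonzero singular values of $\bA\otimes\bB$ are the products of nonzero singular values of $\bA$ and of $\bB$; taking the largest yields $\norm[]{\bA\otimes\bB} = \sigma_{\max}(\bA)\,\sigma_{\max}(\bB) = \norm[]{\bA}\,\norm[]{\bB}$.

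I do not expect any real obstacle here, since the statement is classical; the only points needing a little care are bookkeeping ones: handling rectangular $\bA$ and $\bB$ so that $\bSigma_A\otimes\bSigma_B$ is still diagonal in the appropriate rectangular sense, and noting that reordering the diagonal entries of $\bSigma_A\otimes\bSigma_B$ into standard nonincreasing order is realized by a permutation that is absorbed into $\bU_A\otimes\bU_B$ and $\bV_A\otimes\bV_B$. If one prefers to avoid SVD entirely, an equivalent route is to observe $(\bA\otimes\bB)^H(\bA\otimes\bB) = (\bA^H\bA)\otimes(\bB^H\bB)$, compute the eigenvalues of the right-hand side by tensoring eigenvectors of $\bA^H\bA$ and $\bB^H\bB$ (the eigenvalues multiply), and take square roots.
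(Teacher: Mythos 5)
Your proof is correct and standard. The paper states this lemma without proof, treating it as a classical fact about Kronecker products, so there is no argument of record to compare against. The one substantive point you rightly flag concerns the rectangular case: $\bSigma_A \otimes \bSigma_B$ is not literally a rectangular diagonal matrix, since its nonzero entries sit at positions $\left((i-1)m_B + k,\ (i-1)n_B + k\right)$, which leave the main diagonal as soon as $m_B \neq n_B$. What saves the argument is that this matrix has at most one nonzero entry per row and at most one per column, so there exist permutation matrices $\bP_1, \bP_2$ with $\bSigma_A \otimes \bSigma_B = \bP_1 \bD \bP_2$ for a genuine rectangular diagonal $\bD$, and these permutations are absorbed into the unitary factors exactly as you describe. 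Your alternative route via $(\bA\otimes\bB)^H(\bA\otimes\bB) = (\bA^H\bA)\otimes(\bB^H\bB)$, diagonalizing by tensoring eigenvectors and then taking square roots, avoids this bookkeeping entirely and is, if anything, the cleaner way to read off both the full list of singular values and the operator-norm identity.
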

\begin{lemma}\label{lem:fejerkernelinvertible}
\cite{candes2014towards} Suppose $\Delta_{\tau} \geq \frac{1}{M}$. Then $\bD'$ is invertible and
\begin{equation*}
\norm[]{\bI_{2J}-\bD'}\leq 0.3623,
\end{equation*}
\begin{equation*}
\norm[]{\bD'}\leq 1.3623,
\end{equation*}
\begin{equation*}
\norm[]{\bD'^{-1}}\leq 1.568.
\end{equation*}
\end{lemma}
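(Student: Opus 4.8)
The matrix $\bD'$ is assembled entirely from the squared Fej\'er kernel $K_M(\tau)=\left[\frac{\sin(\pi M\tau)}{M\sin(\pi\tau)}\right]^4$ and its first two derivatives, so the plan is to reduce everything to classical pointwise estimates on this kernel, exactly as in \cite{candes2014towards}. The first step is to pin down the diagonal of $\bD'$. Since $K_M$ is even we have $K_M(0)=1$, $K_M'(0)=0$, and $K_M''(0)=-\frac{4\pi^2(M^2-1)}{3}<0$; substituting $s=j$ in the four blocks then shows that every diagonal entry of $\bD'$ equals $1$, so $\bI_{2J}-\bD'$ has zero diagonal and its off-diagonal $(s,j)$ entries are, up to a sign, $K_M(\tau_s-\tau_j)$, $|K_M''(0)|^{-1/2}K_M'(\tau_s-\tau_j)$, or $|K_M''(0)|^{-1}K_M''(\tau_s-\tau_j)$, according to which block they belong to. Moreover $\bD'$ is real and symmetric: it is $\mathbb{E}\bD$ stripped of its Kronecker factor, hence Hermitian, and the kernel values are real.

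The second step bounds $\norm[]{\bI_{2J}-\bD'}$. Because $\bI_{2J}-\bD'$ is real symmetric with zero diagonal, Gershgorin's theorem gives $\norm[]{\bI_{2J}-\bD'}\le\max_s R_s$, where $R_s$ is the sum of the absolute off-diagonal entries in the two rows indexed by $\tau_s$: a ``value'' row contributes $\sum_{j\ne s}\bigl(|K_M(\tau_s-\tau_j)|+|K_M''(0)|^{-1/2}|K_M'(\tau_s-\tau_j)|\bigr)$, and a ``derivative'' row contributes $\sum_{j\ne s}\bigl(|K_M''(0)|^{-1/2}|K_M'(\tau_s-\tau_j)|+|K_M''(0)|^{-1}|K_M''(\tau_s-\tau_j)|\bigr)$.

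The third step is the quantitative core. Using the minimum separation $\Delta_\tau\ge\frac1M$, the nodes other than $\tau_s$ sit at wrap-around distances at least $\frac1M,\frac2M,\frac3M,\dots$ on each side of $\tau_s$. Inserting the explicit monotone tail bounds on $K_M$, $K_M'$, and $K_M''$ on $[\tfrac1M,\tfrac12]$ from \cite{candes2014towards} (each normalized quantity decaying like a constant times $(M|\tau|)^{-4}$), one sums the resulting rapidly convergent series in $k\ge1$ and checks that, for $M$ sufficiently large (which holds in all our applications since $M\ge64$), every $R_s$ is at most $0.3623$. Then $\norm[]{\bI_{2J}-\bD'}\le0.3623<1$, so $\bD'$ is invertible, $\norm[]{\bD'}\le1+0.3623=1.3623$ by the triangle inequality, and a Neumann series yields $\norm[]{\bD'^{-1}}\le\bigl(1-0.3623\bigr)^{-1}\le1.568$.

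The main obstacle is entirely the bookkeeping in the third step: getting the sharp numerical constant $0.3623$ requires the precise pointwise and tail bounds on the squared Fej\'er kernel and its derivatives, together with a careful summation of those tails. Since Lemma~\ref{lem:fejerkernelinvertible} is quoted verbatim from \cite{candes2014towards}, the cleanest route is to invoke those estimates directly rather than reproduce them; Lemma~\ref{lem:invertiblekroneck} is then what transfers these bounds from $\bD'$ to $\mathbb{E}\bD=\bD'\otimes\bI_K$ in the proof of Lemma~\ref{le:lemma1}.
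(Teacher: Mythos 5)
Your proposal is essentially correct, and it reconstructs the argument that \cite{candes2014towards} (and the closely related Lemma in \cite{tang2013compressed}) actually use; note, though, that the paper under review does not prove this lemma at all --- it is invoked purely as a citation, so there is no ``paper's own proof'' to compare against. Your sketch captures the right structure: the diagonal of $\bD'$ is identically $1$ (since $K_M(0)=1$, $K_M'(0)=0$, and $-K_M''(0)/|K_M''(0)|=1$), $\bD'$ is real symmetric because $K_M$ and $K_M''$ are even while $K_M'$ is odd (so $\bD'_1$ is antisymmetric and the two off-diagonal blocks are transposes of one another), and for a real symmetric matrix with zero diagonal the Gershgorin bound $\norm[]{\bI_{2J}-\bD'}\le\max_s R_s$ is valid since the operator norm equals the spectral radius. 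The reduction of the remaining two bounds to the first via the triangle inequality and the Neumann series $\norm[]{\bD'^{-1}}\le(1-0.3623)^{-1}\approx1.568$ is also exactly right.

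Two small points worth tightening if you were to write this out in full. First, the phrase ``sit at wrap-around distances at least $\frac1M,\frac2M,\dots$ on each side'' should be read as a worst-case relabeling: under $\Delta_\tau\ge\frac1M$ the $k$-th nearest node on each side of $\tau_s$ is at wrap-around distance at least $\frac{k}{M}$, and one then majorizes the off-diagonal row sum by $2\sum_{k\ge1}$ of the monotone tail bounds on $K_M$, $|K_M''(0)|^{-1/2}K_M'$, and $|K_M''(0)|^{-1}K_M''$ evaluated at $\frac kM$. Second, the numerical constants $0.3623$, $1.3623$, $1.568$ require a lower bound on $M$ (the paper assumes $M\ge64$ in Theorem~\ref{mainthm}), which the lemma statement itself omits but which is implicit; your remark to that effect is appropriate. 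Since the lemma is quoted verbatim, deferring the precise kernel tail estimates to \cite{candes2014towards} rather than reproducing them is the right call.
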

According to Lemma \ref{lem:invertiblekroneck}, we have
\begin{equation*}
\begin{aligned}
\norm[]{\expval \bD} &  = \norm[]{\bD' \otimes \bI_K}\\
& = \norm[]{\bD'}\\
& \leq 1.3623
\end{aligned}
\end{equation*}
and
\begin{equation*}
\begin{aligned}
\norm[]{\bI_{2JK}-\expval \bD} &  =\norm[]{(\bI_{2J}-\bD')\otimes \bI_K}\\
& = \norm[]{\bI_{2J}-\bD'} \\
& \leq 0.3623.
\end{aligned}
\end{equation*}
As a result, we have
\begin{equation*}
\norm[]{\left(\expval \bD\right)^{-1}} \leq 1.568.
\end{equation*}
\section{Proof of Lemma~\ref{lem:matroperatornormbound}}
\label{operatornormboundonV}
We use the matrix Bernstein inequality for proving Lemma \ref{lem:matroperatornormbound}.
\begin{lemma}
\label{lem:joeltropp}
\cite{tropp2012user} (Matrix Bernstein: Rectangular Case) Consider a finite sequence $\left\{ \bX_k\right\}$ of independent, random matrices with dimension $d_1\times d_2$. Assume that each random matrix satisfies
\begin{equation*}
\mathbb{E} \bX_k = 0~~~and~~~\| \bX_k\| \leq R~~~almost~surely.
\end{equation*}
Define
\begin{equation*}
\sigma^2 := \max \left\{ \| \sum_k \mathbb{E} (\bX_k \bX_k^{H})\|, \|\sum_k \mathbb{E} (\bX_k^{H} \bX_k)\|   \right\}.
\end{equation*}
Then, for all $t\geq 0$,
\begin{equation*}
\begin{aligned}
 \mathbb{P} \left\{ \| \sum_{k} \bX_k\|  \geq t\right\}
  &\leq  (d_1 + d_2) \cdot \exp\left(\frac{-t^2/2}{\sigma^2 + Rt/3}\right)\\
&\leq
{\begin{cases}
(d_1 + d_2) \exp \left(\frac{-3t^2}{8\sigma^2} \right), & t \leq \sigma^2 /R \\
(d_1 + d_2) \exp \left(\frac{-3t}{8R} \right), & t \geq \sigma^2/R.
\end{cases}}
\end{aligned}
\end{equation*}
\end{lemma}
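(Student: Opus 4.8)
The plan is to run the matrix Laplace transform method, first reducing the rectangular statement to a Hermitian tail bound by Hermitian dilation. Given the independent $d_1 \times d_2$ summands $\bX_k$, set
\begin{equation*}
\mathcal{D}(\bX_k) = \begin{bmatrix} \bzero_{d_1} & \bX_k \\ \bX_k^H & \bzero_{d_2}\end{bmatrix} \in \complex^{(d_1+d_2)\times(d_1+d_2)}.
\end{equation*}
Then $\mathcal{D}(\bX_k)$ is Hermitian with $\expval \mathcal{D}(\bX_k) = \bzero$ and $\norm[]{\mathcal{D}(\bX_k)} = \norm[]{\bX_k} \le R$; moreover $\lambda_{\max}\!\left(\mathcal{D}(\sum_k \bX_k)\right) = \norm[]{\sum_k \bX_k}$ and $\mathcal{D}(\bX_k)^2 = \diag(\bX_k \bX_k^H, \bX_k^H \bX_k)$, so $\norm[]{\sum_k \expval \mathcal{D}(\bX_k)^2} = \sigma^2$ with $\sigma^2$ as in the statement. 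Hence it suffices to prove: for independent Hermitian $d\times d$ matrices $\bY_k$ with $\expval \bY_k = \bzero$, $\norm[]{\bY_k}\le R$, and $v := \norm[]{\sum_k \expval \bY_k^2}$, one has $\prob{\lambda_{\max}(\sum_k \bY_k) \ge t} \le d\exp\!\left(\frac{-t^2/2}{v + Rt/3}\right)$ for all $t \ge 0$.

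Next I would establish the Hermitian bound. Markov's inequality applied to the monotone map $\lambda_{\max}$, together with $\exp(\lambda_{\max}(\bM)) \le \trace \exp(\bM)$ for Hermitian $\bM$, gives for any $\theta > 0$ that $\prob{\lambda_{\max}(\sum_k \bY_k)\ge t} \le e^{-\theta t}\, \expval \trace \exp(\theta \sum_k \bY_k)$. The key step is to peel off the summands inside the trace exponential using Lieb's concavity theorem, which yields the subadditivity of matrix cumulant generating functions:
\begin{equation*}
\expval \trace \exp\!\left(\theta \textstyle\sum_k \bY_k\right) \le \trace \exp\!\left(\textstyle\sum_k \log \expval e^{\theta \bY_k}\right).
\end{equation*}
For each summand, the scalar inequality $e^{\theta y} \le 1 + \theta y + g(\theta) y^2$ valid on $|y| \le R$, with $g(\theta) := (e^{\theta R} - \theta R - 1)/R^2$, transfers through the spectral theorem to $e^{\theta \bY_k} \preceq \bI + \theta \bY_k + g(\theta) \bY_k^2$; taking expectations and using $\expval \bY_k = \bzero$, operator monotonicity of $\log$, and $\log(1+x) \le x$, we get $\log \expval e^{\theta \bY_k} \preceq g(\theta)\, \expval \bY_k^2$. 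Summing, $\norm[]{\sum_k \log \expval e^{\theta \bY_k}} \le g(\theta) v$, and since $\trace \exp(\bM) \le d\, e^{\lambda_{\max}(\bM)}$, we arrive at $\prob{\lambda_{\max}(\sum_k \bY_k)\ge t} \le d\exp(-\theta t + g(\theta) v)$.

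It remains to optimize and simplify. Choosing $\theta = R^{-1}\log(1 + Rt/v)$ gives the Bennett-type bound $d\exp\!\left(-\frac{v}{R^2}h(Rt/v)\right)$ with $h(u) = (1+u)\log(1+u) - u$; the elementary estimate $h(u) \ge \frac{u^2/2}{1 + u/3}$ for $u \ge 0$ then produces $d\exp\!\left(\frac{-t^2/2}{v + Rt/3}\right)$. Finally, bounding the denominator by $\tfrac{4}{3}v$ when $t \le v/R$ and by $\tfrac{4}{3}Rt$ when $t \ge v/R$ yields the two-regime subgaussian/subexponential form stated; undoing the dilation (replacing $d$ by $d_1 + d_2$ and $v$ by $\sigma^2$) finishes the proof. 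The only genuinely deep ingredient, and hence the main obstacle, is Lieb's concavity theorem behind the cgf subadditivity; everything else is bookkeeping with the spectral theorem and scalar calculus. One could instead invoke the Golden--Thompson inequality (the Ahlswede--Winter route), but that would replace $\norm[]{\sum_k \expval \bY_k^2}$ by $\sum_k \norm[]{\expval \bY_k^2}$ in the variance proxy, which is too lossy for the sample-complexity bounds needed elsewhere in this paper, so the Lieb-based argument is the one to pursue.
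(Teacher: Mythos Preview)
Your proof is correct and follows exactly the argument in Tropp's original paper \cite{tropp2012user}. Note, however, that the present paper does not actually prove this lemma: it is quoted verbatim as a tool from \cite{tropp2012user} and used as a black box in the proofs of Lemmas~\ref{lem:matroperatornormbound}, \ref{lem:vectorbernstein}, and \ref{lem:boundI2}. So there is no ``paper's own proof'' to compare against---your write-up simply supplies the standard derivation behind the cited result, and it does so accurately (dilation to reduce to the Hermitian case, Lieb-based subadditivity of matrix cgfs, the Bennett bound, and the $h(u)\ge \tfrac{u^2/2}{1+u/3}$ simplification).
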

First of all, we can write
\begin{equation*}
\begin{aligned}
 \bV_{\ell}(\tau) - \mathbb{E}\bV_{\ell}(\tau) 
 & =   \frac{1}{M} \sum_{n = -2M}^{2M} g_M(n) \left( \frac{-i2\pi n}{\sqrt{|K_M^{''}(0)|}} \right)^{\ell} e^{-i 2\pi n \tau } \bE(n) \otimes \left(\bb_n \bb_n^H - \bI_{K}\right)\\
 & =  \sum_{n = -2M}^{2M} \bY_n^{\ell},
\end{aligned}
\end{equation*}
where we have defined
\begin{equation*}
\begin{aligned}
\bY_{n}^{\ell} & =  \frac{1}{M} g_M(n) \left( \frac{-i2\pi n}{\sqrt{|K_M^{''}(0)|}} \right)^{\ell} \times e^{-i 2\pi n \tau} \bE(n) \otimes \left(\bb_n \bb_n^H - \bI_{K}\right).
\end{aligned}
\end{equation*}
It is easy to see that $\left\{ \bY_{n}^{\ell}\right\}$ are independent random matrices with zero mean due to the isotropy properties of $\bb_n$. Thus, we apply the matrix Bernstein inequality for bounding $\norm[]{\bV_{\ell}(\tau) -\expval \bV_{\ell}(\tau)}$.
Before establishing this, we need to compute the quantities $R$ and $\sigma^2$ in the matrix Bernstein inequality. The following elementary bound  \cite{candes2014towards,tang2013compressed} will be useful at this moment:
\begin{equation*}
\| g_M(n)\|_{\infty} \leq 1,
\end{equation*}
\begin{equation*}
\left| \frac{2\pi n}{\sqrt{|K_M^{''}(0)|}}\right| \leq 4,~~~~~~{\text{when}}~M\geq 2,
\end{equation*}
and
\begin{equation*}
\|\bE(n)\|_2^2 \leq 14 J,~~~~~~{\text{when}}~M\geq 4.
\end{equation*}
Thus, we have
\begin{equation*}
\begin{aligned}
\norm[]{\bY_n^{\ell}} & = \norm[]{ \frac{1}{M} g_M(n) \left( \frac{-i2\pi n}{\sqrt{|K_M^{''}(0)|}} \right)^{\ell} \vphantom{\frac{1}{M} g_M(n) \left( \frac{-i2\pi n}{\sqrt{|K_M^{''}(0)|}} \right)^{\ell} } e^{-i 2\pi n \tau } \bE(n) \otimes \left(\bb_n \bb_n^H - \bI_{K}\right)} \\
& \leq \frac{1}{M} 4^{\ell} \norm[2]{\bE(n)} \norm[]{\bb_n \bb_n^H - \bI_{K}} \\
& \leq \frac{1}{M} 4^{\ell} \sqrt{14J} \max \left\{ \mu K,1 \right\} \\
& \leq \frac{4^{\ell +1} \sqrt{J} \mu K}{M} \\
& =: R,
\end{aligned}
\end{equation*}
where the second line uses the fact that $\|g_M(n)\|_{\infty} \leq 1$, $| \frac{-i 2\pi n}{\sqrt{|K_M^{''}(0)|}}| \leq 4$ and $\norm[]{\bA \otimes \bB} = \norm[]{\bA} \norm[]{\bB}$ for arbitrary two matrices $\bA$ and $\bB$. The third line follows from the fact that $\norm[2]{\bE(n)}^2 \leq 14J$ and $\norm[]{\bA -\bB} \leq \max\left\{\norm[]{\bA}, \norm[]{\bB} \right\}$ for two positive semidefinite matrices $\bA$ and $\bB$. The fourth line uses the assumption that $\mu K\geq 1$.

Then, we compute the variance term
\begin{align*}
 \norm[]{\sum_n \expval {\bY_n^{\ell}}^H \bY_n^{\ell}} 
 &= \frac{1}{M^2}\norm[]{\sum_{n=-2M}^{2M} \expval |g_M(n)|^2 \left( \left| \frac{-i2\pi n}{\sqrt{|K_M^{''}(0)|}} \right|\right)^{2 \ell} \right. \\ \left.
 \right.& \left.~~~~~\vphantom{\sum_{n=-2M}^{2M} \expval |g_M(n)|^2 \left( \left| \frac{-i2\pi n}{\sqrt{|K_M^{''}(0)|}} \right|\right)^{2 \ell}}\times \left (\bE_n^H \otimes \left(\bb_n \bb_n^H -\bI_K\right)\right)\left( \bE_n \otimes (\bb_n\bb_n^H - \bI_K)\right)} \\
 &\leq  \frac{1}{M^2} 4^{2\ell} \norm[]{\sum_{n =-2M}^{2M} \norm[2]{\bE_n}^2  \expval (\bb_n\bb_n^H -\bI_K)^2} \notag \displaybreak\\
 &\leq  \frac{4^{2\ell} \mu K}{M^2} \norm[]{\sum_{n = -2M}^{2M} \norm[2]{\bE_n}^2  \bI_K} \\
& = \frac{4^{2 \ell} \mu K}{M^2} \sum_{n=-2M}^{2M} \norm[2]{\bE_n}^2 \\
 &\leq  \frac{80\cdot4^{2 \ell} \mu J K}{M} \\
 &=:  \sigma^2,
\end{align*}
where the second line follows from the fact that $\|g_M(n)\|_{\infty}\leq 1$ and $| \frac{-i 2\pi n}{\sqrt{|K_M^{''}(0)|}}| \leq 4$, the third line uses the fact that $\norm[2]{\bb_n}^2 \bb_n \bb_n^H \preceq \mu K \bb_n \bb_n^H, \mu K\geq 1$ and that $\expval \bb_n \bb_n^H = \bI_K$ due to the incoherence property (\ref{eq:incoherenceproperty}) and the isotropy property (\ref{eq:isotropyproperty}), and the last inequality uses the fact that $\norm[2]{\bE(n)}^2 \leq 14J$ when $M\geq 4$.

Applying Lemma \ref{lem:joeltropp}, we can see that for a fixed $\ell$,
\begin{equation*}
\begin{aligned}
\mathbb{P} \left\{ \| \sum_{n} \bY_n^{\ell}\|  \geq \varepsilon_2 \right\} & \leq (2JK + K) \cdot \exp \left(\frac{-3\varepsilon_2^2}{8\sigma^2} \right). \\
\end{aligned}
\end{equation*}
In order to make this failure probability less than $\delta_2$, we require
\begin{equation*}
\log \left((2JK +K) \cdot \exp \left(\frac{-3\varepsilon_2^2}{8\sigma^2} \right)\right) \leq \log \delta_2,
\end{equation*}
which leads to the following bound on $M$,
\begin{equation*}
\begin{aligned}
M & \geq \frac{640 \cdot 4^{2\ell} \mu JK} {3\varepsilon_2^2}\log\left(\frac{2JK +K}{\delta_2}\right). \\
\end{aligned}
\end{equation*}
Applying a union bound for $\ell = 0,1,2,3$, we obtain that
\begin{equation*}
\begin{aligned}
\mathbb{P} \left\{ \| \sum_{n} \bY_n^{\ell}\|  \geq \varepsilon_2, \ell = 0, 1, 2, 3 \right\} & \leq 4 \delta_2,\\
\end{aligned}
\end{equation*}
provided that $M  \geq \frac{640 \cdot 4^{2\ell} \mu JK} {3\varepsilon_2^2}\log\left(\frac{2JK +K}{\delta_2}\right)$. This completes the proof.

\section{Proof of Lemma~\ref{lem:unionfailprob}}
\label{proofofLemma5_6}
In Lemma \ref{lem:matroperatornormbound}, we showed that for $\ell = 0, 1, 2, 3$,
$\norm[]{\bV_{\ell}(\tau) - \expval \bV_{\ell}(\tau)} \leq \varepsilon_2$ with probability at least $1-4\delta_2$ provided $M \geq \frac{640\cdot 4^{2\ell }\mu JK}{3\varepsilon_2^2} \log \left(\frac{2JK+K}{\delta_2}\right)$. Conditioned on the events $\cE_{1, \varepsilon_1}$ with $\varepsilon_1 \in (0,\frac{1}{4}]$ and 
\begin{equation*}
\bigcap_{\tau_d \in \Omega_{\Grid}} \left\{ \norm[]{\bV_{\ell}(\tau_d) - \expval \bV_{\ell}(\tau_d)} \leq \varepsilon_2, \ell = 0,1,2,3\right\}
\end{equation*}
we have
\begin{equation*}
\begin{aligned}
\norm[]{\left(\bV_{\ell}(\tau_d) - \expval \bV_{\ell}(\tau_d)\right)^H \bL} 
& \leq  \norm[]{\bV_{\ell}(\tau_d) - \expval \bV_{\ell}(\tau_d)} \norm[]{\bL} \\
 &\leq \norm[]{\bV_{\ell}(\tau_d) - \expval \bV_{\ell}(\tau_d)} \norm[]{\bD^{-1}} \\
 &\leq  \varepsilon_2 2 \norm[]{(\expval\bD)^{-1}} \\
& \leq  4\varepsilon_2,
\end{aligned}
\end{equation*}
where the third line uses the fact that $\bL$ is a submatrix of $\bD^{-1}$, and the fourth and fifth lines follow from Lemmas \ref{lem:inverseopnorm} and \ref{le:lemma1}, respectively.

Applying the union bound leads to
\begin{equation*}
\begin{aligned}
\mathbb{P} \left\{ \sup_{\tau_d \in \Omega_{\Grid}} \norm[]{\left(\bV_{\ell}(\tau_d) - \expval \bV_{\ell}(\tau_d)\right)^H \bL} \geq 4 \varepsilon_2,\vphantom{ \sup_{\tau_d \in \Omega_{\Grid}} \norm[]{\left(\bV_{\ell}(\tau_d) - \expval \bV_{\ell}(\tau_d)\right)^H \bL} }\ell = 0, 1, 2, 3 \right\} \leq |\Omega_{\Grid}|4 \delta_2 + \mathbb{P}\left(\cE_{1, \varepsilon_1}^c\right).
\end{aligned}
\end{equation*}

\section{Proof of Lemma~\ref{lem:vectorbernstein}}
\label{proofofLemma5_8}
We need the following lemma in the proof of Lemma \ref{lem:vectorbernstein}.
\begin{lemma}
\label{lem:expvalunitsphere}
Assume that $\bh_j \in \complex^K$ are i.i.d.\ random samples on the complex unit sphere $\mathbb{CS}^{K-1}$. Then we have $\expval \bh_j \bh_j^H  = \frac{1}{K} \bI_K$.
\end{lemma}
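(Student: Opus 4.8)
The plan is to exploit the rotational (unitary) invariance of the uniform distribution on $\mathbb{CS}^{K-1}$. First I would observe that if $\bh$ is distributed uniformly on $\mathbb{CS}^{K-1}$ and $\bU \in \complex^{K\times K}$ is any fixed unitary matrix, then $\bU\bh$ is again uniformly distributed on $\mathbb{CS}^{K-1}$, since the uniform probability measure on the sphere is invariant under the action of the unitary group. Consequently $\expval[(\bU\bh)(\bU\bh)^H] = \bU\,\expval[\bh\bh^H]\,\bU^H$ equals $\expval[\bh\bh^H]$, so the positive semidefinite matrix $\bM := \expval[\bh\bh^H]$ commutes with every unitary matrix.

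Next I would conclude that $\bM$ is a scalar multiple of the identity. This is immediate from (a version of) Schur's lemma: a matrix commuting with all unitaries must be of the form $c\bI_K$. Alternatively, and more elementarily, one can avoid invoking Schur's lemma: applying the invariance with $\bU$ ranging over coordinate-permutation matrices shows that all diagonal entries of $\bM$ are equal, while applying it with $\bU$ a diagonal matrix having unimodular entries forces every off-diagonal entry of $\bM$ to vanish; either way $\bM = c\bI_K$ for some $c \ge 0$.

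Finally I would pin down the constant by taking the trace: $cK = \trace(\bM) = \expval\,\trace(\bh\bh^H) = \expval\,\norm[2]{\bh}^2 = 1$, because $\bh$ lies on the unit sphere. Hence $c = \tfrac{1}{K}$ and $\expval[\bh_j\bh_j^H] = \tfrac{1}{K}\bI_K$ for each $j$ (the i.i.d.\ hypothesis plays no role in computing a single second-moment matrix).

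I do not expect any real obstacle in this argument; the only point deserving a sentence of care is the justification that the uniform distribution on $\mathbb{CS}^{K-1}$ is invariant under the unitary group (equivalently, in the elementary version, that coordinate permutations together with unimodular diagonal matrices generate enough symmetry to force $\bM$ diagonal with equal entries), which is standard.
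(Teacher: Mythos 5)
Your argument is correct and is essentially identical to the paper's: both exploit unitary invariance to show $\expval[\bh\bh^H]$ commutes with all unitaries, deduce it is a scalar multiple of $\bI_K$, and fix the scalar by taking the trace. Your version is slightly more explicit (naming which unitaries force the off-diagonal entries to vanish, and offering the Schur's-lemma shortcut), but the route is the same.
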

\begin{proof} Denote $\bSigma = \expval \bh_j \bh_j^H$. By unitary invariance, we have $\expval \bU \bh_j \left( \bU \bh_j \right)^H = \bSigma$, which implies that $\bU \bSigma \bU^H = \bSigma$ for any unitary matrix $\bU$. This indicates that $\bSigma$ is diagonal. Furthermore, if $\bU$ is a permutation matrix, $\bU \bSigma\bU^H$ permutes the diagonal entries of $\bSigma$. This shows that the diagonal entries of $\bSigma$ have equal values. Lastly, $\trace(\bSigma) = \expval \trace (\bh_j \bh_j^H) = 1$. Thus, we have $\bSigma = \frac{1}{K} \bI_K$.
\end{proof}
For any $\tau_d \in \Omega_{\Grid}$, define
\begin{equation*}
\begin{aligned}
\bQ & : = (\bV_{\ell}(\tau_d) - \mathbb{E} \bV_{\ell}(\tau_d) )^H \bL \\
&  = \begin{bmatrix} \bQ_1 & \bQ_2 & \dots & \bQ_J \end{bmatrix},
\end{aligned}
\end{equation*}
where each block $\bQ_j$ is a $K\times K$ matrix. Also, define the event
\begin{equation*}
\begin{aligned}
\cE_{3}: = \left\{ \sup_{\tau_d \in \Omega_{\Grid}}  \norm[]{\left(\bV_{\ell}(\tau_d) - \expval \bV_{\ell}(\tau_d)\right)^H \bL} \leq 4 \varepsilon_2,\vphantom{ \sup_{\tau_d \in \Omega_{\Grid}} \norm[]{\left(\bV_{\ell}(\tau_d) - \expval \bV_{\ell}(\tau_d)\right)^H \bL} }
\ell = 0, 1, 2, 3 \right\}.
\end{aligned}
\end{equation*}
We can write
\begin{equation*}
\begin{aligned}
\bI_1^{\ell}(\tau_d) & = (\bV_{\ell}(\tau_d) - \mathbb{E} \bV_{\ell}(\tau_d) )^H \bL \bh \\
& = \sum_j \bQ_j \sign(c_j )\bh_j \\
& =: \sum_j \bZ_j.
\end{aligned}
\end{equation*}
Note that $\expval{\bZ_j} = \bzero_{K\times 1}$ due to the randomness assumption of $\bh_j$. Before applying the matrix Bernstein inequality for bounding $\norm[]{\sum_j \bZ_j}$, we need to upper bound the operator norm $\norm[]{\bZ_j}$ and compute the variance term appearing in the expression of matrix Bernstein inequality.  Conditioned on $\cE_3$,
\begin{equation*}
\begin{aligned}
\norm[2]{\bZ_j} & = \norm[2]{\bQ_j \sign(c_j) \bh_j} \\
& \leq \norm[]{\bQ_j}\\
 & \leq \norm[]{\bQ}  \\
& \leq 4 \varepsilon_2 \\
& =: R
\end{aligned}
\end{equation*}
where the third line uses the fact that $\bQ_j$ is a submatrix of $\bQ$.

Next, conditioned on the event $\cE_{3}$ (note that event $\cE_3$ includes event $\cE_{1, \varepsilon_1 \in (0,\frac{1}{4}]}$ and $\cE_{2,\varepsilon_2}$), we bound the variance term:
\begin{equation*}
\begin{aligned}
 \norm[]{\sum_j \expval \bZ_j^H \bZ_j } & = \norm[]{\sum_j \expval (\sign(c_j)\bh_j)^H \bQ_j^H \bQ_j \sign(c_j) \bh_j} \\
&  = \sum_j \expval \trace\left( \bQ_j^H \bQ_j \bh_j \bh_j^H\right) \\
& = \sum_j \trace\left( \bQ_j^H \bQ_j \expval \left[\bh_j \bh_j^H \right] \right) \\
& = \sum_j \trace\left( \bQ_j^H \bQ_j \frac{1}{K} \bI_K \right) \\
&  = \frac{1}{K} \norm[F]{\bQ}^2,
\end{aligned}
\end{equation*}
where the third line follows by exchanging the order of the trace operation and expectation, the fourth line uses Lemma \ref{lem:expvalunitsphere}. Furthermore, we can bound $\frac{1}{K} \norm[F]{\bQ}^2 $ as follows:
\begin{equation*}
\begin{aligned}
\frac{1}{K} \norm[F]{\bQ}^2 &  \leq \frac{1}{K} \norm[]{\bL^H}^2 \norm[F]{\bV_{\ell}(\tau_d) - \expval \bV_{\ell}(\tau_d)}^2 \\& \leq \frac{4 \cdot 1.568^2}{K}  K \varepsilon_2^2 \\
& \leq 12 \varepsilon_2^2 \\
& =:  \sigma^2,
\end{aligned}
\end{equation*}
where the first line follows from the fact that $\norm[F]{\bA \bB}^2 \leq \norm[]{\bA}^2\norm[F]{\bB}^2$ for arbitrary two matrices $\bA$ and $\bB$, the second line follows from the fact that $\bL$ is a submatrix of $ \bD^{-1}$ and Lemmas \ref{lem:froboperatornormineq}, \ref{lem:inverseopnorm} and \ref{le:lemma1}.

Applying the matrix Bernstein inequality and the union bound, we get
\begin{equation*}
\begin{aligned}
\mathbb{P} \left\{ \sup_{\tau_d \in \Omega_{\Grid}}\norm[2]{\bI_1^{\ell}(\tau_d)} \geq \varepsilon_4 \Bigg| \cE_{3} \right\} 
& \leq |\Omega_{\Grid}|
\mathbb{P} \left\{ \| \sum_{j} \bZ_j\|  \geq \varepsilon_4 \Bigg|  \cE_3 \right\} \\
& \leq |\Omega_{\Grid}| (K + 1) \cdot \exp\left(\frac{-\varepsilon_4^2/2}{\sigma^2 + R\varepsilon_4 /3}\right) \\
& \leq
{\begin{cases}
|\Omega_{\Grid}|(K + 1) \exp \left(\frac{-3\varepsilon_4^2}{8\sigma^2} \right), & \varepsilon_4 \leq \sigma^2 /R \\
|\Omega_{\Grid}|(K + 1) \exp \left(\frac{-3\varepsilon_4}{8R} \right), & \varepsilon_4 \geq \sigma^2/R.
\end{cases}}
\end{aligned}
\end{equation*}
Taking $\varepsilon_2^2 = \frac{640 \cdot 4^{2\ell} \mu JK} {3M}\log\left(\frac{2JK +K}{ \delta_2}\right)$ and applying Lemma \ref{lem:unionfailprob} yield
\begin{equation*}
\begin{aligned}
& \mathbb{P} \left\{ \sup_{\tau_d \in \Omega_{\Grid}}\norm[2]{\bI_1^{\ell}(\tau_d)}\geq \varepsilon_4 \right\} \\
&  \leq {\begin{cases}
\begin{aligned}|\Omega_{\Grid}|(K + 1) \exp \left(\frac{-3\varepsilon_4^2}{8\sigma^2} \right)   +|\Omega_{\Grid}|4 \delta_2
+ \mathbb{P}\left(\cE_{1, \varepsilon_1}^c\right),
\end{aligned} & \varepsilon_4 \leq \sigma^2 /R \\
\begin{aligned}
|\Omega_{\Grid}|(K + 1) \exp \left(\frac{-3\varepsilon_4}{8R} \right) 
+|\Omega_{\Grid}|4 \delta_2 + \mathbb{P}\left(\cE_{1, \varepsilon_1}^c\right),
\end{aligned}& \varepsilon_4 \geq \sigma^2/R.
\end{cases}}
\end{aligned}
\end{equation*}
According to Lemmas \ref{lem:matroperatornormbound} and \ref{lem:unionfailprob}, for the second term $|\Omega_{\Grid}| 4\delta_2 \leq \delta$, it is sufficient to have
\begin{equation*}
M \geq \frac{640 \cdot 4^{2\ell }\mu JK}{3\varepsilon_2^2} \log \left(\frac{4 |\Omega_{\Grid}|(2JK+K)}{\delta}\right).
\end{equation*}
To make the failure probability less than or equal to $\delta$ for the first term, we choose
\begin{equation*}
\begin{cases}
96\varepsilon_2^2 = \frac{3\varepsilon_4^2} {\log \left(\frac{|\Omega_{\Grid}|(K+1)}{\delta}\right)}, & \varepsilon_4 \leq \sigma^2 /R,\\
32\varepsilon_2 = \frac{3\varepsilon_4} {\log \left(\frac{|\Omega_{\Grid}|(K+1)}{\delta}\right)}, & \varepsilon_4 \geq \sigma^2 /R.
\end{cases}
\end{equation*}
Equivalently, when $\varepsilon_4 \leq \sigma^2 /R$, one has
\begin{equation*}
\begin{aligned}
M & \geq \frac{640 \cdot 4^{2\ell }\mu JK}{3\varepsilon_2^2} \log \left(\frac{4 |\Omega_{\Grid}|(2JK+K)}{\delta}\right) \\
& = \frac{640 \cdot 96\cdot 4^{2\ell} \mu JK} {9 \varepsilon_4^2}\log\left(\frac{4|\Omega_{\Grid}|(2JK +K)}{ \delta}\right) \log \left(\frac{|\Omega_{\Grid}|(K+1)}{\delta}\right).
\end{aligned}
\end{equation*}
When $\varepsilon_4 \geq \sigma^2 /R$, one has
\begin{equation*}
\begin{aligned}
M & \geq \frac{640 \cdot 4^{2\ell }\mu JK}{3\varepsilon_2^2} \log \left(\frac{4 |\Omega_{\Grid}|(2JK+K)}{\delta}\right) \\
& = \frac{32^2 \cdot 640 \cdot 4^{2\ell} \mu JK} {27 \varepsilon_4^2}\log\left(\frac{4|\Omega_{\Grid}|(2JK +K)}{ \delta}\right) \log^2 \left(\frac{|\Omega_{\Grid}|(K+1)}{\delta}\right).
\end{aligned}
\end{equation*}
Finally, according to Lemma \ref{lem:matrixbernstein}, for the third term $\mathbb{P}\left(\cE_{1,\varepsilon_1}^c\right) \leq \delta$, we have
\begin{equation*}
M \geq \frac{80 \mu JK}{\varepsilon_1^2} \log\left( \frac{4JK}{\delta}\right).
\end{equation*}
Setting $\varepsilon_1 = \frac{1}{4}$, absorbing all of the constants into one and applying the union bound for $\ell = 0,1,2, 3$, we can see that
\begin{equation*}
\begin{aligned}
\mathbb{P} \left\{ \sup_{\tau_d \in \Omega_{\Grid}}\norm[2]{\bI_1^{\ell}(\tau_d)}\geq \varepsilon_4, \ell = 0, 1, 2, 3\right\} \leq 12 \delta\\
\end{aligned}
\end{equation*}
provided
\begin{equation*}
\begin{aligned}
M \geq C\mu JK \max\left\{\frac{1}{\varepsilon_4^2} \log\left(\frac{|\Omega_{\Grid}|JK}{\delta}\right) \log^2 \left(\frac{|\Omega_{\Grid}|K}{\delta}\right),  \log\left(\frac{JK}{\delta}\right)\right\}
\end{aligned}
\end{equation*}
for some constant $C$.

\section{Proof of Lemma~\ref{lem:frobeniusnorm2}}
\label{appendix:lemma5.9}
Observe that
\begin{align*}
\norm[F]{\expval \bV_{\ell}(\tau)}^2 
&  = \norm[F]{ \frac{1}{\sqrt{|K''_M(0)|}^{\ell}} \begin{bmatrix}
K_M^{\ell} (\tau-\tau_1)^{*} \\
\vdots \\
K_M^{\ell} (\tau-\tau_J)^{*} \\
\frac{1}{\sqrt{|K''_M(0)|}}K_M^{\ell +1}(\tau-\tau_1)^{*}\\
\vdots \\
\frac{1}{\sqrt{|K''_M(0)|}}K_M^{\ell +1}(\tau-\tau_J)^{*}\\
\end{bmatrix} \otimes \bI_{K}}^2\notag \displaybreak\\
&  = K  \norm[2]{ \frac{1}{\sqrt{|K''_M(0)|}^{\ell}} \begin{bmatrix}
K_M^{\ell} (\tau-\tau_1)^{*} \\
\vdots \\
K_M^{\ell} (\tau-\tau_J)^{*} \\
\frac{1}{\sqrt{|K''_M(0)|}}K_M^{\ell +1}(\tau-\tau_1)^{*}\\
\vdots \\
\frac{1}{\sqrt{|K''_M(0)|}}K_M^{\ell +1}(\tau-\tau_J)^{*}\\
\end{bmatrix} }^2 \\
& \leq C K
\end{align*}
for some numerical constant $C$, where the inequality above follows from proof of Lemma IV.9 in \cite{tang2013compressed}. The key to being able to obtain such a bound of order $O(K)$ is because $\left\{\tau_j\right\}$ are well separated, implying that the sequence $\left\{K_M^{\ell}(\tau-\tau_j)\right\}$ decreases rapidly if properly ordered. Then, conditioned on the event $\cE_{1,\varepsilon_1}$ with $\varepsilon_1 \in (0,\frac{1}{4}]$, we have
\begin{equation*}
\begin{aligned}
\norm[F]{\left(\bL -\bL'\otimes \bI_K \right)^H\expval \bV_{\ell}(\tau)}^2 & \leq \norm[]{\left(\bL -\bL'\otimes \bI_K \right)^H}^2 \norm[F]{\expval \bV_{\ell}(\tau)}^2 \\
& \leq (2\cdot 1.568^2 \varepsilon_1)^2  C K \\
& =: C K\varepsilon_1^2
\end{aligned}
\end{equation*}
for some redefined numerical constant $C$. The first inequality above uses the fact that $\norm[F]{\bA \bB}^2 \leq \norm[]{\bA}^2 \norm[F]{\bB}^2$ for two arbitrary matrices $\bA$ and $\bB$, and the second inequality above follows from the fact that $\bL - \bL'\otimes \bI_K$ is a submatrix of $\bD^{-1} - \expval \bD^{-1}$ and from Lemmas \ref{lem:inverseopnorm} and \ref{le:lemma1}.

\section{Proof of Lemma~\ref{lem:boundI2}}
\label{appendix:lemma5.10}
To begin with,
for any $\tau_d \in \Omega_{\Grid}$, define
\begin{equation*}
\begin{aligned}
\widetilde{\bQ} & : = \left[\expval \bV_{\ell}(\tau_d)\right]^H \left(\bL-\bL' \otimes \bI_K\right) \\
&  = \begin{bmatrix} \widetilde{\bQ}_1 & \widetilde{\bQ}_2 & \dots & \widetilde{\bQ}_J \end{bmatrix},
\end{aligned}
\end{equation*}
where each block $\widetilde{\bQ}_j$ is a $K\times K$ matrix. Then, we have
\begin{equation*}
\begin{aligned}
\bI_2^{\ell}(\tau_d) & = \left[ \expval \bV_{\ell}(\tau_d)\right]^H \left(\bL-\bL' \otimes \bI_K\right) \bh\\
& = \sum_j \widetilde{\bQ}_j \sign(c_j) \bh_j \\
& =: \sum_j \widetilde{\bZ}_j.
\end{aligned}
\end{equation*}
Again, we bound $\norm[2]{\bI_2^{\ell}(\tau_d)}$ using the matrix Bernstein inequality. First of all, we have $\expval \widetilde{\bZ}_j = \bzero_{K\times 1}$ due to the randomness assumption of $\bh_j$. Conditioned on $\cE_{1, \varepsilon_1}$ with $\varepsilon_1 \in (0,\frac{1}{4}]$,
\begin{align*}
\norm[]{\widetilde{\bZ}_j}
 & = \norm[]{\widetilde{\bQ}_j \sign(c_j) \bh_j} \\
& \leq \|\widetilde{\bQ}\| \\
& \leq  \norm[]{\bL -\bL'\otimes \bI_K } \norm[]{\expval \bV_{\ell}(\tau_d)} \notag \displaybreak\\
&  = \frac{2\cdot 1.568^2 \varepsilon_1}{\sqrt{|K''_M(0)|}^{\ell}}   \norm[2]{ \begin{bmatrix}
K_M^{\ell} (\tau-\tau_1)^{*} \\
\vdots \\
K_M^{\ell} (\tau-\tau_J)^{*} \\
\frac{1}{\sqrt{|K''_M(0)|}}K_M^{\ell +1}(\tau-\tau_1)^{*}\\
\vdots \\
\frac{1}{\sqrt{|K''_M(0)|}}K_M^{\ell +1}(\tau-\tau_J)^{*}\\
\end{bmatrix} }  \\
& \leq C \varepsilon_1 \\
&  =: R.
\end{align*}
for some small universal constant $C$.  For the variance term, we have
\begin{equation*}
\begin{aligned}
\norm[]{\sum_j \expval \widetilde{\bZ}_j^H \widetilde{\bZ}_j} & = \norm[]{\sum_j \expval (\sign(c_j)\bh_j)^H \widetilde{\bQ}_j^H \widetilde{\bQ}_j \sign(c_j) \bh_j} \\
&  = \sum_j \expval \trace\left( \widetilde{\bQ}_j^H \widetilde{\bQ}_j \bh_j \bh_j^H\right) \\
& = \sum_j \trace\left(\widetilde{\bQ}_j^H \widetilde{\bQ}_j \expval\left[ \bh_j \bh_j^H  \right]\right) \\
& = \sum_j \trace\left( \widetilde{\bQ}_j^H \widetilde{\bQ}_j \frac{1}{K} \bI_K \right) \\
 & \leq \frac{1}{K} C K\varepsilon_1^2 \\
& = C \varepsilon_1^2 \\
& =: \sigma^2
\end{aligned}
\end{equation*}
where the third line follows by exchanging the trace operation and expectation, the fourth line uses Lemma \ref{lem:expvalunitsphere}, and the fifth line follows from Lemma \ref{lem:frobeniusnorm2}.

The matrix Bernstein inequality and the union bound yield
\begin{equation*}
\begin{aligned}
\mathbb{P} \left\{ \sup_{\tau_d \in \Omega_{\Grid}}\norm[2]{\bI_2^{\ell}(\tau_d)} \geq \varepsilon_5 \Bigg| \cE_{1, \varepsilon_1} \right\} 
& \leq |\Omega_{\Grid}|
\mathbb{P} \left\{ \| \sum_{j} \widetilde{\bZ}_j\|  \geq \varepsilon_5 \Bigg|  \cE_{1, \varepsilon_1} \right\} \\
& \leq |\Omega_{\Grid}| (K + 1) \cdot \exp\left(\frac{-\varepsilon_5^2/2}{\sigma^2 + R\varepsilon_5 /3}\right) \\
& \leq
{\begin{cases}
|\Omega_{\Grid}|(K + 1) \exp \left(\frac{-3\varepsilon_5^2}{8\sigma^2} \right), & \varepsilon_5 \leq \sigma^2 /R \\
|\Omega_{\Grid}|(K + 1) \exp \left(\frac{-3\varepsilon_5}{8R} \right), & \varepsilon_5 \geq \sigma^2/R.
\end{cases}}
\end{aligned}
\end{equation*}
Therefore, we can write
\begin{equation*}
\begin{aligned}
& \mathbb{P} \left\{ \sup_{\tau_d \in \Omega_{\Grid}}\norm[2]{\bI_2^{\ell}(\tau_d)}\geq \varepsilon_5 \right\} \\
&  \leq {\begin{cases}
|\Omega_{\Grid}|(K + 1) \exp \left(\frac{-3\varepsilon_5^2}{8\sigma^2} \right) + \mathbb{P}\left(\cE_{1, \varepsilon_1}^c\right), & \varepsilon_5 \leq \sigma^2 /R \\
|\Omega_{\Grid}|(K + 1) \exp \left(\frac{-3\varepsilon_5}{8R} \right)+ \mathbb{P}\left(\cE_{1, \varepsilon_1}^c\right), & \varepsilon_5 \geq \sigma^2/R.
\end{cases}}
\end{aligned}
\end{equation*}
When $\varepsilon_5 \leq \sigma^2 /R$, to ensure the first term less than $\delta$, it is sufficient to choose
\begin{equation*}
\varepsilon_1^2 = \frac{3\varepsilon_5^2} {8C \log\left(\frac{|\Omega_{\Grid}|(K+1)}{\delta}\right)}
\end{equation*}
for the same constant $C$ that appears in the variance bound above.

To make the second term $\mathbb{P}\left(\cE_{1, \varepsilon_1}^c\right)$ less than $\delta$, according to Lemma \ref{lem:matrixbernstein}, we have
\begin{equation*}
\begin{aligned}
M & \geq \frac{80 \mu JK}{\varepsilon_1^2} \log\left( \frac{4JK}{\delta}\right) \\
& = \frac{8 C \cdot 80 \mu JK}{3\varepsilon_5^2} \log\left( \frac{4JK}{\delta}\right)\log\left(\frac{|\Omega_{\Grid}|(K+1)}{\delta}\right) \\
& =: C \frac{\mu JK}{\varepsilon_5^2}\log \left(\frac{JK}{\delta}\right)\log\left(\frac{|\Omega_{\Grid}|K}{\delta}\right)
\end{aligned}
\end{equation*}
with a redefined numerical constant $C$.

Similarly, when $\varepsilon_5 \geq \sigma^2 /R$, to ensure the first term less than $\delta$, we can take
\begin{equation*}
\varepsilon_1 = \frac{3\varepsilon_5}{8C \log\left(\frac{|\Omega_{\Grid}|(K+1)}{\delta}\right)},
\end{equation*}
for the same constant $C$ shown in the bound of $\norm[]{\widetilde{\bZ}_j}$.

To make the second term less than $\delta$, we have
\begin{equation*}
\begin{aligned}
M & \geq \frac{80 \mu JK}{\varepsilon_1^2} \log\left( \frac{4JK}{\delta}\right) \\
& = \frac{(8 C)^2 \cdot 80 \mu JK}{9\varepsilon_5^2} \log\left( \frac{4JK}{\delta}\right)\log^2\left(\frac{|\Omega_{\Grid}|(K+1)}{\delta}\right) \\
& =: C \frac{\mu JK}{\varepsilon_5^2}\log \left(\frac{JK}{\delta}\right)\log^2\left(\frac{|\Omega_{\Grid}|K}{\delta}\right)
\end{aligned}
\end{equation*}
for a redefined numerical constant $C$.

Combining the two different cases above and applying the union bound with respect to $\ell = 0,1,2,3$ complete the proof.
\section{Proof of Lemma \ref{lem:continuousextension}}
\label{proofofLemma5_11}
Denote the $p$th column of $\bV_{\ell}(\tau)$ as $\bV_{\ell}(\tau;p)$, whose $\ell_2$-norm can be bounded as follows:
\begin{equation*}
\begin{aligned}
\norm[2]{\bV_{\ell}(\tau;p)} &  = \norm[2]{\frac{1}{M} \sum_{n = -2M}^{2M} g_M(n)\left(\frac{-i2\pi n}{\sqrt{|K''_M(0)|}}\right)^{\ell}\vphantom{\frac{1}{M} \sum_{n = -2M}^{2M} g_M(n)\left(\frac{-i2\pi n}{\sqrt{|K''_M(0)|}}\right)^{\ell} } e^{-i2\pi n \tau} \bE(n) \otimes \bb_n \bb_n^{*}(p)} \\
& \leq \frac{(4M+1)4^{\ell}}{M} \mu \sqrt{ K} \norm[2]{\bE(n)} \\
& \leq \frac{\sqrt{14}\cdot (4M+1)4^{\ell}}{M} \mu \sqrt{ JK} \\
& = C \mu \sqrt{ JK}
\end{aligned}
\end{equation*}
for some constant $C$,
where we have used the fact that $|\bb_n(p)|\leq \sqrt{\mu}$ in the first inequality and the fact that $\norm[2]{\bE(n)}^2 \leq 14J$ when $M\geq 4$ in the second inequality.

We define the $p$th entry of $\bq^{\ell}(\tau)$ as $\bq^{\ell}(\tau;p)$. Conditioned on the event $\cE_{1, \varepsilon_1}$ with $\varepsilon_1\in (0,\frac{1}{4}]$, we have
\begin{equation*}
\begin{aligned}
\left| \frac{1}{\sqrt{|K''_M(0)|}^{\ell}} \bq^{\ell} (\tau;p) \right|& = \left|\bV_{\ell}(\tau; p)^H \bL \bh\right| \\
& \leq \norm[2]{\bV_{\ell}(\tau;p)} \norm[]{\bL} \norm[2]{\bh} \\
& \leq C \mu  J\sqrt{K}
\end{aligned}
\end{equation*}
for some constant $C$.

Applying Bernstein's polynomial inequality  \cite{schaeffer1941inequalities,tang2013compressed}, we have
\begin{equation*}
\begin{aligned}
& \left| \frac{1}{\sqrt{|K''_M(0)|}^{\ell}} \bq^{\ell} (\tau_a;p) -  \frac{1}{\sqrt{|K''_M(0)|}^{\ell}} \bq^{\ell} (\tau_b;p)\right|\\
& \leq |e^{i2\pi \tau_a} - e^{i2\pi \tau_b} |\sup_{z = e^{i2\pi \tau}}\left|\frac{d\frac{1}{\sqrt{|K''_M(0)|}^{\ell}} \bq^{\ell} (z;p)}{dz}\right| \\
& \leq 4\pi |\tau_a -\tau_b| 2M \sup_{\tau}\left|\frac{1}{\sqrt{|K''_M(0)|}^{\ell}} \bq^{\ell} (\tau;p)\right| \\
& \leq C M \mu J\sqrt{ K}  |\tau_a -\tau_b|
\end{aligned}
\end{equation*}
for some numerical constant $C$.
Therefore, we have
\begin{equation*}
\begin{aligned}
\norm[2]{\frac{1}{\sqrt{|K''_M(0)|}^{\ell}} \bq^{\ell} (\tau_a) -  \frac{1}{\sqrt{|K''_M(0)|}^{\ell}} \bq^{\ell} (\tau_b)} 
& \leq C \mu JK M|\tau_a-\tau_b| \\
& \leq CM^2 |\tau_a-\tau_b|,
\end{aligned}
\end{equation*}
where the second line follows when $M\geq \mu JK$. We choose $\Omega_{\Grid}$ such that for any $\tau \in [0,1)$, there exists a point $\tau_d \in \Omega_{\Grid}$ with $|\tau-\tau_d|\leq \frac{\varepsilon}{3CM^2}$. Note that $|\Omega_{\Grid}| \leq \frac{3CM^2}{\varepsilon}$.

Using such a choice of $\Omega_{\Grid}$ and conditioned on the event $\cE_{1,\varepsilon_1}$ with $\varepsilon_1\in (0,\frac{1}{4}]$ and event $\cE$, we have \begin{equation*}
\begin{aligned}
& \norm[2]{\frac{1}{\sqrt{|K''_M(0)|}^{\ell}} \bq^{\ell} (\tau) -   \frac{1}{\sqrt{|K''_M(0)|}^{\ell}} \overline{\bq}^{\ell} (\tau)} \\
& \leq \norm[2]{\frac{1}{\sqrt{|K''_M(0)|}^{\ell}} \bq^{\ell} (\tau) -   \frac{1}{\sqrt{|K''_M(0)|}^{\ell}} \bq^{\ell} (\tau_d)} \\
& ~~~~+ \norm[2]{\frac{1}{\sqrt{|K''_M(0)|}^{\ell}} \bq^{\ell} (\tau_d) -   \frac{1}{\sqrt{|K''_M(0)|}^{\ell}} \overline{\bq}^{\ell} (\tau_d)} \\
& ~~~~+  \norm[2]{ \frac{1}{\sqrt{|K''_M(0)|}^{\ell}} \overline{\bq}^{\ell} (\tau_d) -   \frac{1}{\sqrt{|K''_M(0)|}^{\ell}} \overline{\bq}^{\ell} (\tau)}\\
& \leq CM^2 |\tau-\tau_d| + \frac{\varepsilon}{3} + CM^2 |\tau-\tau_d| \\
& \leq \varepsilon, ~~\forall~\tau \in [0,1)
\end{aligned}
\end{equation*}
where the first inequality follows from the triangle inequality, the second inequality follows from Proposition \ref{prop:concentrationeq}. With such a choice of grid size and by applying Proposition \ref{prop:concentrationeq}, we can immediately get the following bound on $M$:
\begin{equation*}
\begin{aligned}
M \geq C\mu JK \max \left\{ \frac{1}{\varepsilon^2}\log\left(\frac{MJK}{\varepsilon \delta}\right) \log^2\left(\frac{MK}{\varepsilon \delta}\right),\log\left(\frac{JK}{\delta} \right)\right\}
\end{aligned}
\end{equation*}
with a redefined numerical constant $C$.
This finishes the proof of Lemma \ref{lem:continuousextension}.

\section{Supplementary Materials for Lemma \ref{lem:nearregionproof}}
\label{sec:suppmaterial}
First of all, we record some useful results from the proof of Lemmas 2.3 and 2.4 in \cite{candes2014towards}.

Assume that $\Delta_{\tau}\geq \frac{1}{M}, M\geq 64$. Then, we have
\begin{equation*}
0.9539 \leq K(\tau) \leq 1, \tau \in [-\tau_{b,1}, \tau_{b,1}],
\end{equation*}
\begin{equation*}
-13.572 M^2 \leq K''(\tau) \leq -11.692 M^2, \tau \in [-\tau_{b,1}, \tau_{b,1}],
\end{equation*}
\begin{equation*}
\sum_{\tau_j \in \mathbb{D}} |K'(\tau-\tau_j)| \leq 1.2722M,\tau \in [-\tau_{b,1}, \tau_{b,1}],
\end{equation*}
\begin{equation*}
\sum_{\tau_j \in \mathbb{D}} |K'''(\tau-\tau_j)| \leq  194.0560 M^3,\tau \in [-\tau_{b,1}, \tau_{b,1}],
\end{equation*}
\begin{equation*}
\sum_{\tau_j \in \mathbb{D}\setminus 0 } |K(\tau-\tau_j)| \leq 6.279\times 10^{-3},\tau \in \Omega_{\near} \setminus [-\tau_{b,1}, \tau_{b,1}],
\end{equation*}
\begin{equation*}
\sum_{\tau_j \in \mathbb{D}\setminus 0 } |K''(\tau-\tau_j)| \leq 4.2200M^2,\tau \in \Omega_{\near} \setminus [-\tau_{b,1}, \tau_{b,1}].
\end{equation*}

We know that
\begin{equation*}
\overline{\bq}(\tau) = \sum_{j = 1}^J K_M(\tau-\tau_j) \overline{\balpha}_j +  \sum_{j =1}^J K_{M}'(\tau-\tau_j) \overline{\bbeta}_j.
\end{equation*}
It was shown in \cite{yang2014exact} that
\begin{equation*}
\begin{aligned}
1-8.824\times 10^{-3} & = \alpha_{\min}
 \leq \norm[2]{\overline{\balpha}_j} \leq \alpha_{\max} = 1+ 8.824\times 10^{-3}
\end{aligned}
\end{equation*}
and
\begin{equation*}
\norm[2]{\overline{\bbeta}_j}\leq \beta_{\max} = \frac{1.647}{M} \times 10^{-2}.
\end{equation*}
Without loss of generality, in the following, we assume that $0\in \mathbb{D}$. Thus, we have
\begin{equation*}
\begin{aligned}
\norm[2]{\overline{ \bq}(\tau)} 
& =  \norm[2]{\sum_{\tau_j \in \mathbb{D}} K(\tau-\tau_j)\overline{\balpha}_j + \sum_{\tau_j \in \mathbb{D}}K'(\tau-\tau_j) \overline{\bbeta}_j} \\
& \leq  \alpha_{\max} \sum_{\tau_j \in \mathbb{D}} |K(\tau-\tau_j)| + \beta_{\max}
\sum_{\tau_j \in \mathbb{D}} |K'(\tau-\tau_j)| \\
& \leq 1.008824\times (1+ 6.279\times 10^{-3}) + \frac{1.647\times 10^{-2}}{M} \times (1.2722M)\\
& = 1.0361,
\end{aligned}
\end{equation*}
\begin{equation*}
\begin{aligned}
\norm[2]{\overline{ \bq}'(\tau)} 
 & =  \norm[2]{\sum_{\tau_j \in \mathbb{D}} K'(\tau-\tau_j)\overline{\balpha}_j + \sum_{\tau_j \in \mathbb{D}}K''(\tau-\tau_j) \overline{\bbeta}_j} \\
& \leq \alpha_{\max} \sum_{\tau_j \in \mathbb{D}} |K'(\tau-\tau_j)| + \beta_{\max} |K''(\tau)| + \beta_{\max} \sum_{\tau_j \in \mathbb{D}\setminus 0}|K''(\tau-\tau_j)| \\
& \leq 1.008824 \times (1.2722M) + \frac{1.647 \times 10^{-2}}{M}\times (13.572 M^2 + 4.2200M^2) \\
& = 1.5765M,
\end{aligned}
\end{equation*}

\begin{equation*}
\begin{aligned}
\norm[2]{\overline{\bq}''(\tau)} 
& =  \norm[2]{\sum_{\tau_j \in \mathbb{D}} K''(\tau-\tau_j)\overline{\balpha}_j + \sum_{\tau_j \in \mathbb{D}}K'''(\tau-\tau_j) \overline{\bbeta}_j} \\
& \leq\alpha_{\max} |K''(\tau)| +  \alpha_{\max} \sum_{\tau_j \in \mathbb{D}\setminus 0} |K''(\tau-\tau_j)| + \beta_{\max} \sum_{\tau_j \in \mathbb{D}}|K'''(\tau-\tau_j)| \\
& \leq 1.008824 \times(13.572M^2 + 4.2200M^2) + \frac{1.647\times 10^{-2}}{M}\times (194.0560M^3) \\
& = 21.1451M^2,
\end{aligned}
\end{equation*}

\begin{equation*}
\begin{aligned}
& \overline{\bq}''^H(\tau) \overline{ \bq}(\tau) \\
& =  \left(\sum_{\tau_j \in \mathbb{D}} K''(\tau-\tau_j)\overline{\balpha}_j + \sum_{\tau_j \in \mathbb{D}}K'''(\tau-\tau_j) \overline{\bbeta}_j\right)^H\\
&~~~~\times \left( \sum_{\tau_j \in \mathbb{D}} K(\tau-\tau_j)\overline{\balpha}_j + \sum_{\tau_j \in \mathbb{D}}K'(\tau-\tau_j) \overline{\bbeta}_j\right) \\
&  = \norm[2]{\overline{\balpha}_k}^2 K''(\tau)K(\tau) + \overline{\balpha}_k^H K''(\tau)  \sum_{\tau_j \in \mathbb{D}\setminus 0}K(\tau-\tau_j) \overline{\balpha}_j \\
& ~~~~+ \overline{\balpha}_k^H K''(\tau)  \sum_{\tau_j \in \mathbb{D}}K'(\tau-\tau_j) \overline{\bbeta}_j +  \left(\sum_{\tau_j \in \mathbb{D}\setminus 0}K''(\tau-\tau_j) \overline{\balpha}_j\right)^H \overline{\bq}(\tau) \\
& ~~~~+\left(\sum_{\tau_j \in \mathbb{D}}K'''(\tau-\tau_j) \overline{\bbeta}_j\right)^H \overline{\bq}(\tau).
\end{aligned}
\end{equation*}
Now we upper bound the following terms:
\begin{equation*}
\begin{aligned}
\norm[2]{\overline{\balpha}_k}^2 K''(\tau)K(\tau) 
& \leq \alpha_{\min}^2 \times(-11.692M^2) \times 0.9539\\
& = -10.9570M^2,
\end{aligned}
\end{equation*}
\begin{equation*}
\begin{aligned}
{\text{Re}}\left(\overline{\balpha}_k^H K''(\tau)  \sum_{\tau_j \in \mathbb{D}\setminus 0}K(\tau-\tau_j) \overline{\balpha}_j\right) & \leq \alpha_{\max}^2 |K''(\tau)| \sum_{\tau_j \in \mathbb{D}\setminus 0}|K(\tau-\tau_j)| \\
& \leq 1.008824^2 \times (13.572M^2) \times (6.279\times10^{-3}) \\
& = 0.0867M^2,
\end{aligned}
\end{equation*}

\begin{equation*}
\begin{aligned}
{\text{Re}}\left(\overline{\balpha}_k^H K''(\tau)  \sum_{\tau_j \in \mathbb{D}}K'(\tau-\tau_j) \overline{\bbeta}_j \right) 
& \leq \alpha_{\max} \beta_{\max} |K''(\tau)| \sum_{\tau_j \in \mathbb{D}}|K'(\tau-\tau_j)| \\
& \leq 1.008824\times \frac{1.647\times 10^{-2}}{M} \times (13.572M^2) \times (1.2722M)\\
& = 0.2869M^2,
\end{aligned}
\end{equation*}

\begin{equation*}
\begin{aligned}
{\text{Re}}\left(\left(\sum_{\tau_j \in \mathbb{D}\setminus 0}K''(\tau-\tau_j) \overline{\balpha}_j\right)^H \overline{\bq}(\tau)\right) 
& \leq \alpha_{\max} \sum_{\tau_j \in \mathbb{D}\setminus 0}|K''(\tau-\tau_j)| \norm[2]{\overline{\bq}(\tau)} \\
& \leq 1.008824\times (4.2200M^2) \times 1.0361\\
& =  4.4109M^2,
\end{aligned}
\end{equation*}

\begin{equation*}
\begin{aligned}
{\text{Re}}\left( \left(\sum_{\tau_j \in \mathbb{D}}K'''(\tau-\tau_j) \overline{\bbeta}_j\right)^H \overline{\bq}(\tau)\right)
 & \leq \beta_{\max} \sum_{\tau_j \in \mathbb{D}}|K'''(\tau-\tau_j) |\norm[2]{\overline{\bq}(\tau)} \\
& \leq \frac{1.647\times 10^{-2}}{M} \times (194.0560M^3) \times 1.0361\\
& = 3.3115M^2.
\end{aligned}
\end{equation*}
Combining the above upper bounds, we have
\begin{equation*}
\begin{aligned}
{\text{Re}}(\overline{\bq}''^H(\tau) \overline{\bq}(\tau))
  & \leq (-10.9570+0.0867+0.2869+4.4109+3.3115)M^2\\
& = -2.8610M^2.
\end{aligned}
\end{equation*}
Consequently, we have
\begin{equation*}
\begin{aligned}
\norm[2]{\overline{\bq}'(\tau)}^2 + {\text{Re}}(\overline{\bq}''^H(\tau) \overline{\bq}(\tau))
& \leq (1.5765M)^2  -2.8610M^2 \\
& = -0.3756M^2.
\end{aligned}
\end{equation*}

\bibliographystyle{IEEEbib}
\footnotesize
\bibliography{refs}

\end{document}